\def\@biblabel#1{[#1]} 
\def\thebibliography#1{%
    \footnotesize
    \refsection*{{\refname}
        \@mkboth{\uppercase{\refname}}{\uppercase{\refname}}%
    }
    \list{\@biblabel{\@arabic\c@enumiv}}
       {\settowidth\labelwidth{\@biblabel{#1}}%
        \leftmargin\labelwidth
        \advance\leftmargin\bibindent
        \itemindent-\bibindent
        \itemsep2pt
        \parsep \z@
        \usecounter{enumiv}
        \let\p@enumiv\@empty
        \renewcommand\theenumiv{\@arabic\c@enumiv}%
    }%
    \let\newblock\@empty
    \sloppy
    \sfcode`\.=1000\relax
}
\newcommand{\calr}{{\cal R}}
\newcommand{\Prob}[1]{\mbox{ Pr}\left[{#1}\right]}
\newcommand{\objsize}{\mbox{\small \em Osize}}
\newcommand{\symsize}{\mbox{\small \em Ssize}}
\newcommand{\blksize}{\mbox{\small \em Bsize}}
\newcommand{\liqsystem}{liquid system}
\newcommand{\Liqsystem}{Liquid system}
\newcommand{\liqcode}{large code}
\newcommand{\liqorg}{flow storage organization}
\newcommand{\liqrepair}{lazy repair}
\newcommand{\Liqrepair}{Lazy repair}
\newcommand{\transientf}{transient node failure}
\newcommand{\Transientf}{Transient node failure}
\newcommand{\nodef}{node failure}
\newcommand{\Nodef}{Node failure}
\newcommand{\latentf}{sector failure}
\newcommand{\Latentf}{Sector failure}
\newcommand{\objloss}{object loss}
\newcommand{\Objloss}{Object loss}
\newcommand{\srcdata}{source data}
\newcommand{\Srcdata}{Source data}
\newcommand{\rcapacity}{raw capacity}
\newcommand{\erasurerate}{erasure rate}
\newcommand{\storeoverhead}{storage overhead}
\newcommand{\rrepairrate}{read repair rate}
\newcommand{\Rrepairrate}{Read repair rate}
\newcommand{\repairrate}{repair rate}
\newcommand{\placegroup}{placement group}
\newcommand{\Placegroup}{Placement group}
\newcommand{\tTot}{T}
\newcommand{\rmtar}{{\scriptscriptstyle\text {tar}}}
\newcommand{\rmnom}{{\scriptscriptstyle\text {nom}}}
\newcommand{\rme}{{\scriptscriptstyle\text {e}}}
\newcommand{\expectation}{\ensuremath{\mathbb{E}}}
\newcommand{\FDis}{{\cal F}_{\scriptscriptstyle\text{Dis}}}
\newcommand{\FAll}{{\cal F}_{\scriptscriptstyle\text{All}}}
\newcommand{\indicator}{\mathbbm{1}}
\newcommand{\nObjects}{{\cal O}}
\newcommand{\Rpeak}{\calr_{\scriptscriptstyle\text {peak}}}
\newcommand{\Ravg}{\calr_{\scriptscriptstyle\text {avg}}}
\newcommand{\Rnnn}{R_{\scriptscriptstyle\text{99\%}}}
\newcommand{\Rnnnnn}{R_{\scriptscriptstyle\text{99.99\%}}}
\newcommand{\tradsystem}{small code system}
\newcommand{\Tradsystem}{Small code system}
\newcommand{\tradorg}{block storage organization}
\newcommand{\tradcode}{small code}
\newcommand{\Tradcode}{Small code}
\newcommand{\tradrepair}{reactive repair}
\newcommand{\Tradrepair}{Reactive repair}
\newcommand{\caldsrc}{{\cal D_{\scriptscriptstyle\text{src}}}}
\newcommand{\caldall}{{\cal D_{\scriptscriptstyle\text{raw}}}}
\newcommand{\expp}[1]{10^{#1}}
\newcommand{\expm}[1]{10^{- #1}}
\newcommand{\mttl}{\mbox{MTTDL}}
\newcommand{\lambdap}{\lambda}
\newcommand{\lambdal}{\lambda_{\scriptscriptstyle\text {SF}}}
\newcommand{\lambdat}{\lambda_{\scriptscriptstyle\text {TF}}}
\newcommand{\Trit}{T_{\scriptscriptstyle\text {RIT}}}
\title{Liquid Cloud Storage}
\author{Michael~G.~Luby
\affil{Qualcomm Technologies, Inc. (luby@qti.qualcomm.com)}
Roberto~Padovani
\affil{Qualcomm Technologies, Inc. (padovani@qti.qualcomm.com)}
Thomas~J.~Richardson
\affil{Qualcomm Technologies, Inc. (tomr@qti.qualcomm.com)}
Lorenz~Minder
\affil{Qualcomm Technologies, Inc. (lminder@qti.qualcomm.com)}
Pooja~Aggarwal
\affil{Qualcomm Technologies, Inc. (poojaa@qti.qualcomm.com)}
}
\begin{abstract}
A {\em \liqsystem} provides durable object storage based on spreading
redundantly generated data across a network of hundreds to thousands of
potentially unreliable storage nodes. A \liqsystem\ uses a combination
of a {\em \liqcode}, {\em \liqrepair}, and a {\em \liqorg}. We show that
a \liqsystem\ can be operated to enable flexible and essentially optimal
combinations of storage durability, \storeoverhead, repair bandwidth
usage, and access performance.
\end{abstract}
\keywords{distributed information systems, data storage systems, data warehouses,
information science, information theory, information entropy, error
compensation, time-varying channels, error correction codes,
Reed-Solomon codes, network coding, signal to noise ratio, throughput,
distributed algorithms, algorithm design and analysis, reliability,
reliability engineering, reliability theory, fault tolerance,
redundancy, robustness, failure analysis, equipment failure.}
\begin{document}
\maketitle

\renewcommand{\headrulewidth}{0pt}


\section{Introduction}

Distributed storage systems generically consists of a large number of interconnected storage nodes, 
with each node capable of storing a large quantity of data. 
The key goals of a distributed storage systems are to store as much \srcdata\ as possible, 
to assure a high level of durability of the \srcdata,  and to minimize the access time to \srcdata\ by users or applications. 

Many distributed storage systems are built using commodity hardware and managed by complex software, 
both of which are subject to failure.
For example, storage nodes can become unresponsive, sometimes due to software issues where the node can recover after a period of time (\transientf s), and sometimes due to hardware failure in which case the node never recovers (\nodef s).
The paper \cite{Ford10} provides a more complete description of different types of failures.
 
A {\em sine qua non} of a distributed storage system is to ensure that \srcdata\ is durably stored,
where durability is often characterized in terms of the {\em mean time to the loss of any \srcdata} ({\em \mttl})
often quoted in millions of years.
To achieve this in the face of component failures a fraction of the raw storage capacity is used to store
redundant data.
\Srcdata\ is partitioned into objects, and, for each object, redundant data is generated and stored on the nodes.
An often-used trivial redundancy is {\em triplication} in which three copies of each object are stored on three different nodes. 
Triplication has several advantages for access and ease of repair but its overhead is quite high:
 two-thirds of the raw storage capacity is used to store redundant data. 
Recently, Reed-Solomon (RS) codes have been introduced in production systems to 
both reduce \storeoverhead\ and improve \mttl\ compared to triplication,
see e.g., \cite{Huang12}, \cite{Ford10}, and~\cite{Dimakis13}.  As explained in~\cite{Dimakis13}
the reduction in \storeoverhead\ and improved \mttl\ achieved by RS codes 
comes at the cost of a {\em repair bottleneck}~\cite{Dimakis13} which arises from
the significantly increased network traffic needed to repair data lost due to failed nodes.

The repair bottleneck has inspired the design of new types of erasure codes,
e.g., {\em local reconstruction codes} (LR codes)~\cite{Gopalan12}, \cite{Dimakis13}, 
and {\em regenerating codes} (RG codes)~\cite{Dimakis07}, \cite{Dimakis10},
that can reduce repair traffic compared to RS codes.
The RS codes, LR codes and RG codes used in practice typically use small values of $(n,k,r)$, 
which we call {\em \tradcode s}.
Systems based on \tradcode s, which we call {\em \tradsystem s}, therefore
spread data for each object across a relatively small number of nodes. This situation necessitates using a {\em reactive repair} strategy where, in order to achieve a large \mttl, data lost due to \nodef s must be quickly recovered.
The quick recovery can demand a large amount network bandwidth and the repair bottleneck therefore remains.

These issues and proposed solutions have motivated the search for an understanding of tradeoffs.
Tradeoffs between \storeoverhead\ and repair traffic to protect individual objects of \tradsystem s 
are provided in~\cite{Dimakis07}, \cite{Dimakis10}.   
Tradeoffs between \storeoverhead\ and repair traffic that 
apply to all systems are provided in~\cite{CapacityBounds16}. 
 
Initially motivated to solve the repair bottleneck, we have designed a new class of distributed storage systems 
that we call {\em \liqsystem s}:
\Liqsystem s use {\em \liqcode s} to spread the data stored for each object across a large number of nodes,
and use a {\em \liqrepair\ strategy} to slowly repair data lost from failed nodes.\footnote{The
term {\em \liqsystem} is inspired by how the repair policy operates: instead of reacting and separately
repairing chunks of data lost from each failed node, the lost chunks of data from many failed nodes
are lumped together to form a ``liquid'' and repaired as a flow.}
In this paper we present the design and its most important properties.
Some highlights of the results and content of paper include:
\begin{itemize}
\item
 \Liqsystem s solve the repair bottleneck, with less \storeoverhead\ and larger \mttl\ than \tradsystem s.
 \item
 \Liqsystem s can be flexibly deployed at different \storeoverhead\ and repair traffic operating points.
 \item
\Liqsystem s avoid \transientf\ repair, which is unavoidable for \tradsystem s.
\item
\Liqsystem s obviate the need for \latentf\ data scrubbing, which is necessary for \tradsystem s.
\item
\Liqsystem\ regulated repair provides an extremely large \mttl\ even when \nodef s are essentially adversarial.  
\item
Simulation results demonstrate the benefits of \liqsystem s compared to \tradsystem s.
\item
\Liqsystem\ prototypes demonstrate improved access speeds compared to \tradsystem s.
 \item
 Detailed analyses of practical differences between \liqsystem s and \tradsystem s are provided.
 \item
\Liqsystem s performance approach the theoretical limits proved in~\cite{CapacityBounds16}.
\end{itemize}

\section{Overview}
\label{overview sec}

\subsection{Objects}

We distinguish two quite different types of data objects.
{\em User objects} are organized and managed according to semantics that are meaningful to applications
that access and store data in the storage system.  {\em Storage objects}
refers to the organization of the \srcdata\ that is stored, managed and accessed by the storage system. 
This separation into two types of objects is common in the industry, e.g., see~\cite{Azure11}. 
When user objects (\srcdata) are to be added to the storage system, the user objects are mapped to 
(often embedded in) storage objects and a record of the mapping between user objects and storage objects is maintained.
For example, multiple user objects may be mapped to consecutive portions of a single storage object, 
in which case each user object constitutes a byte range of the storage object.
The management and maintenance of the mapping between user objects and storage objects is outside the scope of this paper.

The primary focus of this work is on storing, managing and accessing storage objects,
which hereafter we refer to simply as {\em objects}. 
Objects are immutable, i.e., once the data in an object is determined then the data within the object does not change.
Appends to objects and deletions of entire objects can be supported.
The preferred size of objects can be determined based on system tradeoffs between the number 
of objects under management and the granularity at which object repair and other system operations are performed.  
Generally, objects can have about the same size, or at least a minimum size, e.g., $1$ GB.
The data size units we use are KB = $1024$ bytes, MB = $1024$ KB, 
GB = $1024$ MB, TB = $1024$ GB, and PB = $1024$ TB. 

\subsection{Storage and repair}

The number of nodes in the system at any point in time is denoted by $M$, 
and the storage capacity of each node is denoted by $S$,
and thus the overall \rcapacity\ is $\caldall = M \cdot S$.
The \storeoverhead\ expressed as a fraction is $\beta = 1 - \frac{\caldsrc}{\caldall}$, 
and expressed as a per cent is $\beta \cdot 100 \%$,
where $\caldsrc$ is the aggregate size of objects (\srcdata) stored in the system.

When using an $(n, k, r)$ erasure code, each object is segmented into $k$ source fragments and 
an encoder generates $r = n-k$ repair fragments from the $k$ source fragments.
An {\em Encoding Fragment ID}, or EFI,  is used to uniquely identify each fragment of an object,
where EFIs $0, 1, \ldots, k-1$ identify the source fragments of an object, 
and EFIs $k, \ldots, n-1$ identify the repair fragments generated from source fragments.  
Each of these $n = k+r$ fragments is stored at a different node.  
An erasure code is MDS (maximum distance separable) if the object can be recovered from any $k$ of the $n$ fragments.

A {\em \placegroup} maps a set of $n$ fragments with EFIs $\{0,1,\ldots,n-1\}$ to a set of $n$ of the $M$ nodes.
Each object is assigned to one of the \placegroup s, and the selected \placegroup\ determines 
where the fragments for that object are to be stored.
To ensure that an equal amount of data is stored on each node, each 
\placegroup\ should be assigned an equal amount of object data 
and each node should accommodate the same number of \placegroup s.

For the systems we describe, $\frac{\caldsrc}{\caldall} = \frac{k}{n}$ and thus $\beta = \frac{r}{n}$, 
i.e., the amount of \srcdata\ stored in the system is maximized subject 
to allotting enough space to store $n$ fragments for each object.
\Objloss\ occurs when the number of fragments available for an object is less than $k$, 
and thus the object is no longer recoverable.
A repairer is an agent that maintains recoverability of objects
by reading, regenerating and writing fragments lost due to \nodef s, 
to ensure that each object has at least $k$ available fragments.
In principle the repairer could regenerate $r$ missing fragments for an object when 
only $k$ fragments remain available.   
However, this repair policy results in a small \mttl, 
since one additional \nodef\ before regenerating the $r$ missing fragments for the object causes \objloss.

A repairer determines the rate at which repair occurs, measured in terms of a {\em \rrepairrate},
since data read by a repairer is a fair measure of repair traffic that moves through the network.
(The amount of data written by a repairer is essentially the same for all systems, 
and is generally a small fraction of data read by a repairer.)

Of primary interest is the amount of network bandwidth $\Rpeak$ that needs to be dedicated 
to repair to achieve a given \mttl\ for a particular system, as this value of $\Rpeak$ 
determines the network infrastructure needed to support repair for the system. 
Generally, we set a global upper bound $\Rpeak$ on the allowable \rrepairrate\ used by a repairer, 
and determine the \mttl\ achieved for this setting of $\Rpeak$ by the system.
The average \rrepairrate\ $\Ravg$ is also of some interest, since this determines the average amount
of repair traffic moved across the network over time.

The fixed-rate repair policy works in similar ways for \tradsystem s and \liqsystem s in our simulations:  
The repairer tries to use a \rrepairrate\ up to $\Rpeak$ whenever there are any objects to be repaired.  

\subsection{\Tradsystem s}

Replication, where each fragment is a copy of the original object is an example of a trivial MDS erasure code.
For example, triplication is a simple $(3,1,2)$ MDS erasure code, 
wherein the object can be recovered from any one of the three copies.  
Many distributed storage systems use replication. 

A Reed-Solomon code (RS code)~\cite{CCauchy95}, \cite{CRizzo97}, \cite{RFC5510} is an MDS code
that is used in a variety of applications and is a popular choice for storage systems.
For example, the \tradsystem s described in~\cite{Ford10}, \cite{Huang12} 
use a $(9,6,3)$ RS code, and those in~\cite{Dimakis13} use a $(14,10,4)$ RS code.  

When using a $(n,k,r)$ \tradcode\ for a \tradsystem\ with $M$ nodes, 
an important consideration is the number of \placegroup s $P$ to use.
 Since $n \ll M,$  a large number $P$ of \placegroup s are typically used 
 to smoothly spread the fragments for the objects across the nodes,
e.g., Ceph~\cite{Ceph} recommends $P = \frac{100 \cdot M}{n}$.
A \placegroup\ should avoid mapping fragments to nodes with
correlated failures, e.g., to the same rack and more generally to the same failure domain.
Pairs of \placegroup s should avoid mapping fragments to the same pair of nodes.
\Placegroup s are updated as nodes fail and are added. 
These and other issues make it challenging to design \placegroup s management for \tradsystem s. 

{\em \Tradrepair} is used for \tradsystem s, i.e., the repairer quickly regenerates fragments as soon as they are 
lost from a failed node (reading $k$ fragments for each object to regenerate the 
one missing fragment for that object). 
This is because, once a fragment is lost for an object due to a \nodef, the probability of $r$ 
additional \nodef s over a short interval of time when $r$ is small is significant enough 
that repair needs to be completed as quickly as practical.
Thus, the peak \rrepairrate\ $\Rpeak$ is higher than the average \rrepairrate\ $\Ravg$, 
and $\Ravg$ is $k$ times the \nodef\ \erasurerate. 
In general, \tradrepair\ uses large bursts of repair traffic for short periods of time 
to repair objects as soon as practical after a node storing data for the objects is declared to have permanently failed.

The peak \rrepairrate\ $\Rpeak$ and average \rrepairrate\ $\Ravg$  needed 
to achieve a large \mttl\ for \tradsystem s can be substantial.
Modifications of standard erasure codes have been  
designed for storage systems to reduce these rates,  
e.g., {\em local reconstruction codes} (LR codes)~\cite{Gopalan12}, \cite{Dimakis13}, 
and {\em regenerating codes} (RG codes)~\cite{Dimakis07}, \cite{Dimakis10}.  
Some versions of LR codes have been used in deployments, e.g., by Microsoft Azure~\cite{Huang12}.
The encoding and decoding complexity of RS, LR, and RG codes grows non-linearly (typically quadratric or worse) 
as the values of $(n,k,r)$ grow, which makes the use of such codes with large values of $(n,k,r)$ less practical.  

\subsection{\Liqsystem s}
\label{liqsystem overview sec}

We introduce {\em \liqsystem s} for reliably storing objects in a distributed set of storage nodes.
\Liqsystem s  use a combination of a {\em \liqcode} with large values of $(n,k,r)$, {\em \liqrepair}, and a {\em \liqorg}.
Because of the large code size the \placegroup\ concept is not of much importance: 
For the purposes of this paper we assume that one \placegroup\ is used for all objects, 
i.e., $n=M$ and a fragment is assigned to each node for each object.

The RaptorQ code \cite{CRaptorQ11},~\cite{RFC6330} is an example of 
an erasure code that is suitable for a \liqsystem.
RaptorQ codes are fundamentally designed to support large values of $(n,k,r)$ with very low encoding and decoding complexity and to have exceptional recovery properties.  Furthermore, RaptorQ codes are {\em fountain codes}, which means that as many encoded symbols $n$ as desired can be generated on the fly for a given value of $k$.  The fountain code property provides flexibility when RaptorQ codes are used in applications, including \liqsystem s. The monograph \cite{CRaptorQ11} provides a detailed exposition of the design and analysis of RaptorQ codes. 
  
The value of $r$ for a \liqsystem\ is large, and thus {\em \liqrepair} can be used, 
where lost fragments are repaired at a steady background rate using a reduced amount of bandwidth. 
The basic idea is that the \rrepairrate\ is controlled so that objects are typically repaired
when a large fraction of $r$ fragments are missing (ensuring high repair efficiency), 
but long before $r$ fragments are missing (ensuring a large \mttl), and thus
the \rrepairrate\ is not immediately reactive to individual \nodef s.
For a fixed \nodef\ rate the \rrepairrate\ can in principle be fixed as described in Appendix~\ref{liquid mttl analysis sec}.
When the \nodef\ rate is unknown a priori,
the algorithms described in Section~\ref{self-regulating sec} and  Appendix~\ref{liquid mttl analysis sec}
should be used to adapt the \rrepairrate\ to changes in conditions 
(such as \nodef\ rate changes) to ensure high repair efficiency and a large \mttl.

We show that a \liqsystem\ can be operated to enable flexible and essentially optimal combinations of storage durability,  \storeoverhead, repair bandwidth usage, and access performance, exceeding the performance of \tradsystem s.

Information theoretic limits on tradeoffs between storage overhead and the \rrepairrate\ are introduced and proved
in~\cite{CapacityBounds16}.  The \liqsystem s for which prototypes and simulations are described 
in this paper are similar to the \liqsystem s described in Section~9.A of~\cite{CapacityBounds16} 
that are within a factor of two of optimal.  More advanced \liqsystem s similar to those described in Section~9.B of~\cite{CapacityBounds16} that are asymptotically optimal would perform even better.

Fig.~\ref{fig_lds_sys_arch} shows a possible \liqsystem\ architecture.
In this architecture, a standard interface, such as the S3 or SWIFT API,
is used by clients to make requests to store and access objects.  
Object storage and access requests are distributed across access proxy servers within the liquid access tier.  
Access proxy servers use a RaptorQ codec to encode objects into fragments and to decode
fragments into objects, based on the \liqorg.
Access proxy servers read and write fragments from and to 
the storage servers that store the fragments on disks within the liquid storage tier. 
 \Liqrepair\ operates as a background process within the liquid storage tier. 
 Further details of this example \liqsystem\ architecture are discussed in Section~\ref{operational sec}.

\begin{figure}
\centering
\includegraphics[scale=0.6]{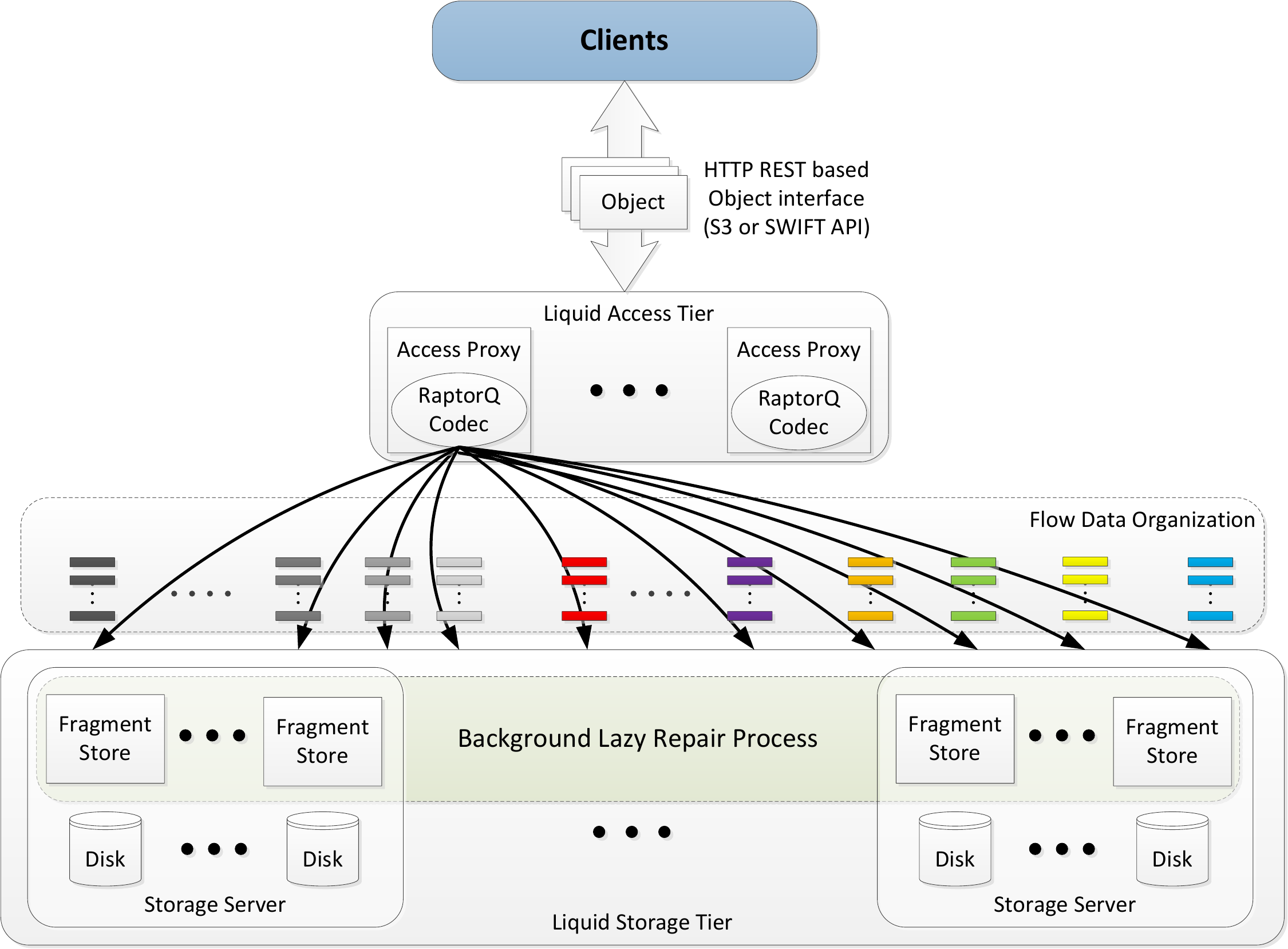}
\caption{A \liqsystem\ architecture.}
\label{fig_lds_sys_arch}
\end{figure}

\section{System failures}

Since distributed storage systems consist of a large number of
components, failures are common and unavoidable.  The use of commodity
hardware can additionally increase the frequency of failures and
outages, as can necessary maintenance and service operations such as
software upgrades or system reconfiguration on the hardware or software
level.

Different failure types can affect the system in various ways.  For
example, a hard drive can fail in its entirety, or individual sectors of
it can become unreadable.  \Nodef s can be permanent, e.g. if the
failure results in the entire node needing to be replaced, or it can be
transient, if it is caused issues such as network outages, reboots,
software changes.  Network failures can cause subsets of many nodes to
be unreachable, e.g. a full rack or even a complete site.

Moreover, distributed storage systems are heterogeneous in nature, with
components that are commonly being upgraded and replaced.  Device models
do not remain the same; for example, a replacement hard drive will often
be of a different kind, with different and possibly unknown expected
lifetime.  Even within a device model, different manufacturing runs can
result in variation in expected lifetimes.  Further, upgrades may
require a large set of machines to be replaced in a short time frame,
leading to bursts of increased churn.  Changes to the underlying
technology, such as a transition from hard drives to SSDs, are also
likely to profoundly impact the statistics and nature of storage
failures.

For these reasons, the failure behavior of a distributed storage system
is difficult to model.  Results obtained under complicated realistic
error models are difficult to interpret, and hard to generalize.  
The results in this paper are obtained under a fairly
straightforward failure model which focuses on two types of failures:
\Nodef{}s and \latentf{}s.

\subsection{\Nodef s}
\label{node failures sec}

Although \nodef\ processes can be difficult to accurately model,
analysis based on a random \nodef\ models can provide insight into the
strengths and weaknesses of a practical system, and can provide a first
order approximation to how a practical system will operate.  
The time till failure of an individual node is often modeled
by a Poisson random variable with rate $\lambdap$, in which case
the average time till an individual node fails is $1/\lambdap.$
Nodes are assumed to fail independently.  Since failed nodes are
immediately replaced, the aggregate \nodef\ process is Poisson with rate
$\lambdap \cdot M$.  Our simulations use this model with $1/\lambdap= 3$
years, and also use variants of this model where the \nodef\ rate bursts
to $3$ times this rate ($1/\lambdap= 1$ year) or $10$ times this rate
($1/\lambdap= 0.3$ years) for periods of time.

Nodes can become unresponsive for a period of time and then become
responsive again (\transientf), in which case the data they store is
temporarily unavailable, or nodes can become and remain unresponsive
(\nodef), in which case the data they store is permanently unavailable.
\Transientf s are an order of magnitude more common than \nodef s in
practice, e.g., see \cite{Ford10}.

Unresponsive node are generally detected by the system within a few
seconds.  However, it is usually unknown whether the unresponsiveness is
due to a \transientf\ or a \nodef.  Furthermore, for \transientf{}s,
their duration is generally unknown, and can vary in an extremely large
range:  Some \transientf{}s have durations of just seconds when others
can have durations of months, e.g., see~\cite{Ford10}.

Section~\ref{self-regulating sec} discusses more general and realistic
\nodef\ models and introduces repair strategies that automatically
adjust to such models. 

\subsection{\Latentf s}

\Latentf s are another type of data loss that occurs in practice and can
cause operational issues.  A \latentf\ occurs when a sector of data
stored at a node degrades and is unreadable.   Such data loss can only
be detected when an attempt is made to read the sector from the node.
(The data is typically stored with strong checksums, so that the
corruption or data loss becomes evident when an attempt to read the
data is made.) Although the \latentf\ data loss rate is fairly small,
the overhead of detecting these types of failures can have a negative
impact on the \rrepairrate\ and on the \mttl.  

Data {\em scrubbing} is often employed in practice to detect \latentf s:
An attempt is made to read through all data stored in the system at a
regular frequency.  As an example, \cite{Cowling16} reports that Dropbox
scrubs data at the frequency of once each two weeks, and reports that
read traffic due to scrubbing can be greater than all other read data
traffic combined.  Most (if not all) providers of data storage systems
scrub data, but do not report the frequency of scrubs.  Obviously,
scrubbing can have an impact on system performance.

The motivation for data scrubbing  is that otherwise \latentf s could
remain undetected for long periods of time, and thus have a large
negative impact on the \mttl.  This impact can be substantial even if
the data loss rate due to \latentf s  is relatively much smaller than
the data loss rate due to \nodef s.  The reason for this is that time
scale for detecting and repairing \latentf s is so much higher than 
for detecting and repairing \nodef s.

We do not employ a separate scrubbing process in our simulations, but
instead rely on the repair process to scrub data:  Since each node fails
on average in $1/\lambdap$ years, data on nodes is effectively scrubbed
on average each $1/\lambdap$ years.  
\Latentf s do not significantly affect the \mttl\ of \liqsystem s,
but can have a significant affect on the \mttl\ for \tradsystem s.

We model the time to \latentf\ of an individual $4$ KB sector of data on
a node as a Poisson random variable with rate $\lambdal$, and thus the
average time till an individual $4$ KB sector of data on a node
experiences a \latentf\ is $1/\lambdal$, where \latentf s of different
sectors are independent.  Based on \latentf\ statistics derived from the
data of \cite{BackBlaze16}, our simulations use this model with
$1/\lambdal= 5 \cdot \expp{8}$ years.  Thus, the data loss rate due to
\latentf s is more than $\expp{8}$ times smaller than the data loss rate
due to \nodef s in our simulations.

\section{Repair}
\label{repair section}

Although using erasure coding is attractive in terms of reducing
\storeoverhead s and improving durability compared to replication, it
can potentially require large amounts of bandwidth to repair for fragment
loss due to node failures.  Since the amount of required repair
bandwidth can be high and can cause system-wide bottlenecks, this is
sometimes referred to as a {\em repair bottleneck}.  For example,
\cite{Dimakis13} states the goal of their work as follows:
\begin{quote}
Our goal is to design more efficient coding schemes that would allow a
large fraction of the data to be coded without facing this repair
bottleneck. This would save petabytes of \storeoverhead s and
significantly reduce cluster costs.
\end{quote}

The repair bottleneck sometimes has proven to be an obstacle in the
transition from simple replication to more storage efficient and durable
erasure coding in storage clusters. It simply refers to the fact that,
when using a standard erasure code such as a RS code, the loss of a
fragment due to a \nodef\  requires the transfer of $k$ fragments for
repair, thus increasing {\em $k$-fold} the required bandwidth and I/O
operations in comparison to replication.  For example, \cite{Dimakis13}
states the following when describing a deployed system with 3000 nodes:
\begin{quote}
\ldots the repair traffic with the current configuration 
is estimated around 10-20\% of the total average of 
2 PB/day cluster network traffic. As discussed, (14,10,4) RS 
encoded blocks require approximately 10x more network
repair overhead per bit compared to replicated blocks. We
estimate that if 50\% of the cluster was RS encoded, the 
repair network traffic would completely saturate the cluster
network links.
\end{quote}

The recognition of a repair bottleneck presented by the use of standard
MDS codes, such as RS codes, has led to the search for codes that
provide {\em locality}, namely where the repair requires the transfer of
$\ell < k$ fragments for repair. The papers \cite{Gopalan12}, \cite{Huang12},
\cite{Dimakis13} introduce LR codes and provide tradeoffs between
locality and other code parameters.  We discuss some properties of
\tradsystem s based on LR codes in Section~\ref{simulations 3010 fixed
sec}.

The following sub-sections provide the motivation and details for the
repair strategies employed in our simulations. 

\subsection{Repair initiation timer}

In practice, since the vast majority of unresponsive nodes are
\transientf s, it is typical to wait for a {\em repair initiation time}
$\Trit$ after a node becomes unresponsive before initiating repair.  For
example, $\Trit = 30$ minutes for the \tradsystem\ described in
\cite{Huang12},  and thus the policy reaction is not immediate but
nearly immediate. 

In our simulations repairers operate as follows.  If a node is
unresponsive for a period at most $\Trit$ then the event is classified
as a \transientf\ and no repair is triggered.  After being unresponsive
for a period of $\Trit$ time, a \nodef\ is declared, the node is
decommissioned, all fragments stored on the node are declared to be
missing (and will be eventually regenerated by the repairer).  The
decommissioned node is immediately replaced by a new node that initially
stores no data (to recover the lost raw capacity).

\Transientf s may sometimes take longer than $30$ minutes to resolve.
Hence, to avoid unnecessary repair of \transientf s,  it is desirable to
set $\Trit $ longer, such as $24$ hours.  A concern with setting $\Trit$
large is that this can increase the risk of \srcdata\ loss,
i.e., it can significantly decrease the $\mttl.$  
 
Since a large value for $\Trit$ has virtually no impact on the \mttl\
for \liqsystem s, we can set $\Trit$ to a large value and
essentially eliminate the impact that \transientf s have on the
\rrepairrate.  Since a large value for $\Trit$ has a large negative
impact on the \mttl\ for \tradsystem s, $\Trit$ must be set
to a small value to have a reasonable \mttl, in which case the
\rrepairrate\ can be adversely affected if the unresponsive time period
for \transientf\ extends beyond $\Trit$ and triggers unnecessary repair.  
We provide some simulation results including \transientf s in Section~\ref{simulations transient sec}. 

\subsection{Repair strategies}

In our simulations, a repairer maintains a repair queue of objects to repair. An object is added to the repair queue 
when at least one of the fragments of the object have been determined to be missing. 
Whenever there is at least one object in the repair queue the repair policy works to repair objects.  
When an object is repaired, at least $k$ fragments are read to recover the object and 
then additional fragments are generated from the object and stored at nodes 
which currently do not have fragments for the object.

Objects are assigned to \placegroup s to determine which nodes store the fragments for the objects.
Objects within a \placegroup\ are {\em prioritized} by the repair policy: 
objects with the {\em least} number of available fragments within a \placegroup\ 
are the next to be repaired within that \placegroup.
Thus, objects assigned to the same \placegroup\ have a {\em nested object structure}: 
The set of available fragments for an object within a \placegroup\ is a subset of the set of available 
fragments for any other object within the \placegroup\ repaired more recently.
Thus, objects within the same \placegroup\ are repaired in a round-robin order.
Consecutively repaired objects can be from different \placegroup s if there are multiple \placegroup s.

\subsection{Repair for \tradsystem}
\label{SC repair subsec}

In our simulations, \tradsystem s (using either replication or \tradcode s) use {\em \tradrepair}, i.e., 
$\Rpeak$ is set to a significantly higher value than the required average repair bandwidth,
and thus repair occurs in a burst when a \nodef\ occurs and 
the lost fragments from the node are repaired as quickly as possible.  
Typically only a small fraction of objects are in the repair queue at any point in time for a \tradsystem\ using \tradrepair.

We set the number of \placegroup s to $P = \frac{100 \cdot M}{n}$ 
to the recommended Ceph~\cite{Ceph} value, and thus
there are $100$ \placegroup s assigned to each of the $M$ nodes, with each node storing fragments for \placegroup s assigned to it. 
The repair policy works as follows: 

\begin{itemize}
\item
If there are at most $100$ \placegroup s that have objects to repair then each such \placegroup\ 
repairs at a \rrepairrate\ of $\Rpeak/100$.
\item
If there are more than $100$ \placegroup s that have objects to repair then the $100$ \placegroup s
with objects with the least number of available fragments are repaired at a \rrepairrate\ of $\Rpeak/100$.
\end{itemize}
This policy ensures that, in our simulations, the global peak \rrepairrate\ is at most $\Rpeak$, 
that the global bandwidth $\Rpeak$ is fully utilized in the typical case when 
one node fails and needs repair (which triggers $100$ \placegroup s
to have objects to repair), and that the maximum traffic 
from and to each node is not exorbitantly high.  (Unlike the case for \liqsystem s,
there are still significant concentrations of repair traffic in the network with this policy.)

The need for \tradsystem s to employ \tradrepair, i.e., use a high $\Rpeak$ value, is a consequence 
of the need to achieve a large \mttl. \Tradcode s 
are quite vulnerable, as the loss of a small number of fragments can lead to \objloss. 
For instance, with the $(14,10,4)$ RS code the loss of five or more fragments leads to \objloss. 
Fig.~\ref{fig_obj_loss_prob} illustrates the drastically different time scales at which repair needs to operate 
for a \tradsystem\ compared to a \liqsystem.  In this example, nodes fail on average every three years and 
Fig.~\ref{fig_obj_loss_prob} shows the probability that $N \le k-1$ of $n$ nodes are functioning at time $t$ 
when starting with $N=n$ nodes functioning at time zero.   
As shown, a failed node needs to be repaired within two days to avoid a $\expm{10}$  
probability \objloss\ event for a $(14,10,4)$ RS code, whereas a failed node needs to be repaired 
within $10$ months for a $(3010,2150,860)$ \liqcode.  
Note that the \storeoverhead\ $\beta = r/n = 0.286$  is the same in these two systems.
A similar observation was made in \cite{Quantitative02}. 

Exacerbating this for a \tradsystem\ is that there are many \placegroup s and hence many combinations of small 
number of \nodef s that can lead to \objloss, thus requiring a high $\Rpeak$ value to ensure a target \mttl.
For example, $P = \frac{100 \cdot M}{n}= 21500$  for a $(14,10,4)$ \tradcode\ in a $M = 3010$ node system, 
and thus an object from the system is lost if any object from any of the $21500$ 
\placegroup s is not repaired before more than four nodes storing fragments for the object fail.
Furthermore, a \tradsystem\ using many \placegroup s is difficult to analyze, 
making it difficult to determine the value of $\Rpeak$ that ensures a target \mttl.  

\begin{figure}
\centering
\includegraphics[width=0.75 \textwidth]{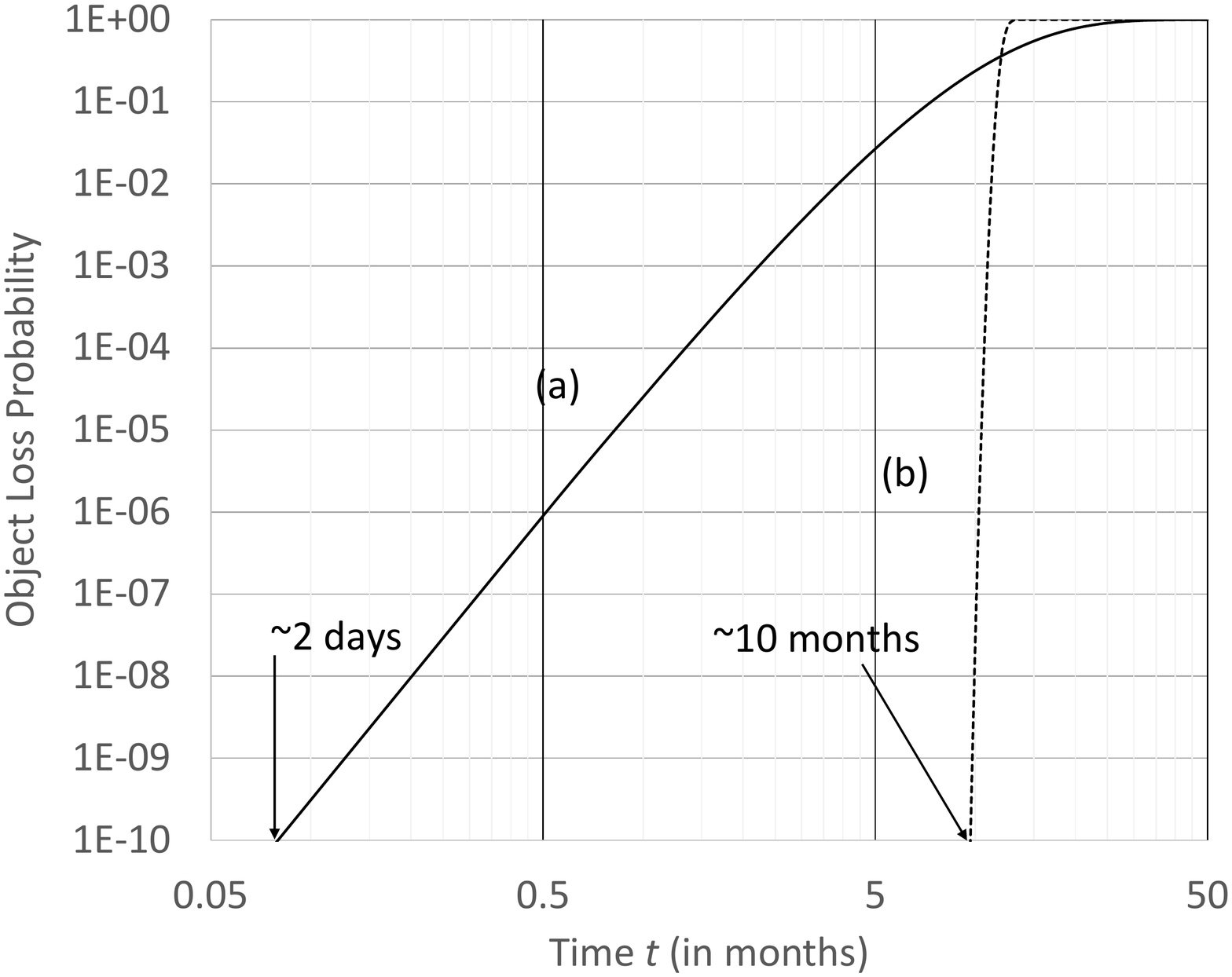}
\caption{Cumulative probability of \objloss\ versus time for: (a) $(14,10,4)$ \tradcode\ ; (b) $(3010,2150,860)$ \liqcode.}
\label{fig_obj_loss_prob}
\end{figure} 


Although \tradcode s outperform replication in both \storeoverhead\ and reliability metrics, they require 
a high $\Rpeak$ value to achieve a reasonable $\mttl$, which leads to the {\em repair bottleneck} described earlier. 
For example, $\Rpeak$ used by \tradsystem\ can be
controlled to be a fraction of the total network bandwidth available in the cluster, e.g., $10\%$ of the total available bandwidth is mentioned in \cite{Huang12} and $10\% - 20\%$ was estimated in \cite{Dimakis13} to protect a small fraction of the object data with a \tradcode.   
However, using these fractions of available bandwidth for repair may or may not achieve a target \mttl.
As we show in Section~\ref{repair sim sec} via simulation, 
\tradsystem s require a large amount of available bandwidth 
for bursts of time in order to achieve a reasonable target \mttl, and in many cases cannot achieve a reasonable
\mttl\ due to bandwidth limitations, or due to operating with a large $\Trit$ value.

\subsection{Repair for fixed rate \liqsystem}
\label{repair liquid sec}

The simplest repair policy for a \liqsystem\ is to use a fixed \rrepairrate: 
the value of the peak \rrepairrate\ $\Rpeak$ is set and the \rrepairrate\ is $\Rpeak$ 
whenever there are objects in the repair queue to be processed. 
A \liqsystem\ employs a repairer that can be described as {\em lazy}: 
The value of $\Rpeak$ is set low enough that the repair queue contains almost all objects all the time.
Overall, a \liqsystem\ using \liqrepair\ uses a substantially lower average repair bandwidth 
and a dramatically lower peak repair bandwidth than a \tradsystem\ using \tradrepair. 

There are two primary reasons that a \liqsystem\ can use \liqrepair: (1) the usage of a \liqcode, 
i.e., as shown in Fig.~\ref{fig_obj_loss_prob}, the $(3010,2150,860)$ \liqcode\ has substantially more time to repair than the $(14,10,4)$ \tradcode\ to achieve the same \mttl; (2) the nested object structure.

A \liqsystem\ can store a fragment of each object at each node, and thus uses a single \placegroup\ for all objects. 
For example, using a $(3010,2150,860)$ \liqcode, an object is lost 
if the system loses more than $860$ nodes storing fragments for the object before the object is repaired. 
Since there is only one \placegroup\, the nested object structure implies that if the object 
with the fewest number of available of fragments can be recovered then all objects can be recovered. 
This makes it simpler to determine an $\Rpeak$ value that achieves a target \mttl\ for a fixed \nodef\ rate.
The basic intuition is that the repair policy should cycle through and repair the objects fast 
enough to ensure that the number of nodes that fail between repairs of the same object is at most $r$.  
Note that $\caldsrc/\Rpeak$ is the time between repairs of the same object when 
the aggregate size of all objects is $\caldsrc$ and the peak global read repair bandwidth is set to $\Rpeak$.   
Thus, the value of $\Rpeak$ should be set so that the probability that there are more
than $r$ node failures in time $\caldsrc/\Rpeak$ is extremely tiny.  
Eq.~(\ref{eqn:MTTDLestimate}) from Appendix~\ref{liquid mttl analysis sec} provides methodology 
for determining a value of $\Rpeak$ that achieves a target \mttl.

Unlike \tradsystem s, wherein $\Trit$ must be relatively small, a \liqsystem s can use large $\Trit$ values,
which has the benefit of practically eliminating unnecessary repair due to transient failures.

\section{Repair simulation results}
\label{repair sim sec}

The storage capacity of each node is $S = 1$ PB in all simulations.
The system is fully loaded with \srcdata, i.e.,
$\caldsrc = S \cdot M \cdot (1-\beta)$, where $\beta$ is the \storeoverhead\ expressed as a fraction.

There are \nodef s in all simulations and node lifetimes are modeled as independent 
exponentially distributed random variables with parameter $\lambdap.$  
Unless otherwise specified, we set $1/\lambdap$ to $3$ years.
When \latentf s are also included, sector lifetimes are also exponentially distributed with
$1/\lambdal$ fixed to $5 \cdot \expp{8}$ years.
 
At the beginning of each simulation the system is in a perfect state of repair, i.e.,
all $n$ fragments are available for all objects.
For each simulation, the peak global read repair bandwidth is limited to a specified $\Rpeak$ value.
The \mttl\ is calculated as the number of years simulated divided by one more than the 
observed number of times there was an \objloss\ event during the simulation. 
The simulation is reinitialized to a perfect repair state and the simulation 
is continued when there is an \objloss\ event.
To achieve accurate estimates of the $\mttl$, each simulation runs for a maximum number of years,
or until $200$ \objloss\ events have occurred.

Fig.~\ref{fig_reactive_vs_lazy_repair} presents a comparison of repair traffic 
for a \liqsystem\ and a \tradsystem.
In this simulation, \liqrepair\ for the \liqsystem\ uses a steady \rrepairrate\ of $704$ Gbps 
to achieve an \mttl\ of $\expp{8}$ years, whereas the \tradrepair\ 
for the \tradsystem\ uses \rrepairrate\ bursts of $6.4$ Tbps to achieve an \mttl\ slightly smaller than $\expp{7}$ years.  
For the \tradsystem, a \rrepairrate\ burst starts each time a node fails and lasts for approximately $4$ hours. 
One of the $3010$ nodes fails on average approximately every $8$ hours. 

\begin{figure}
\centering
\includegraphics[width=0.75 \textwidth]{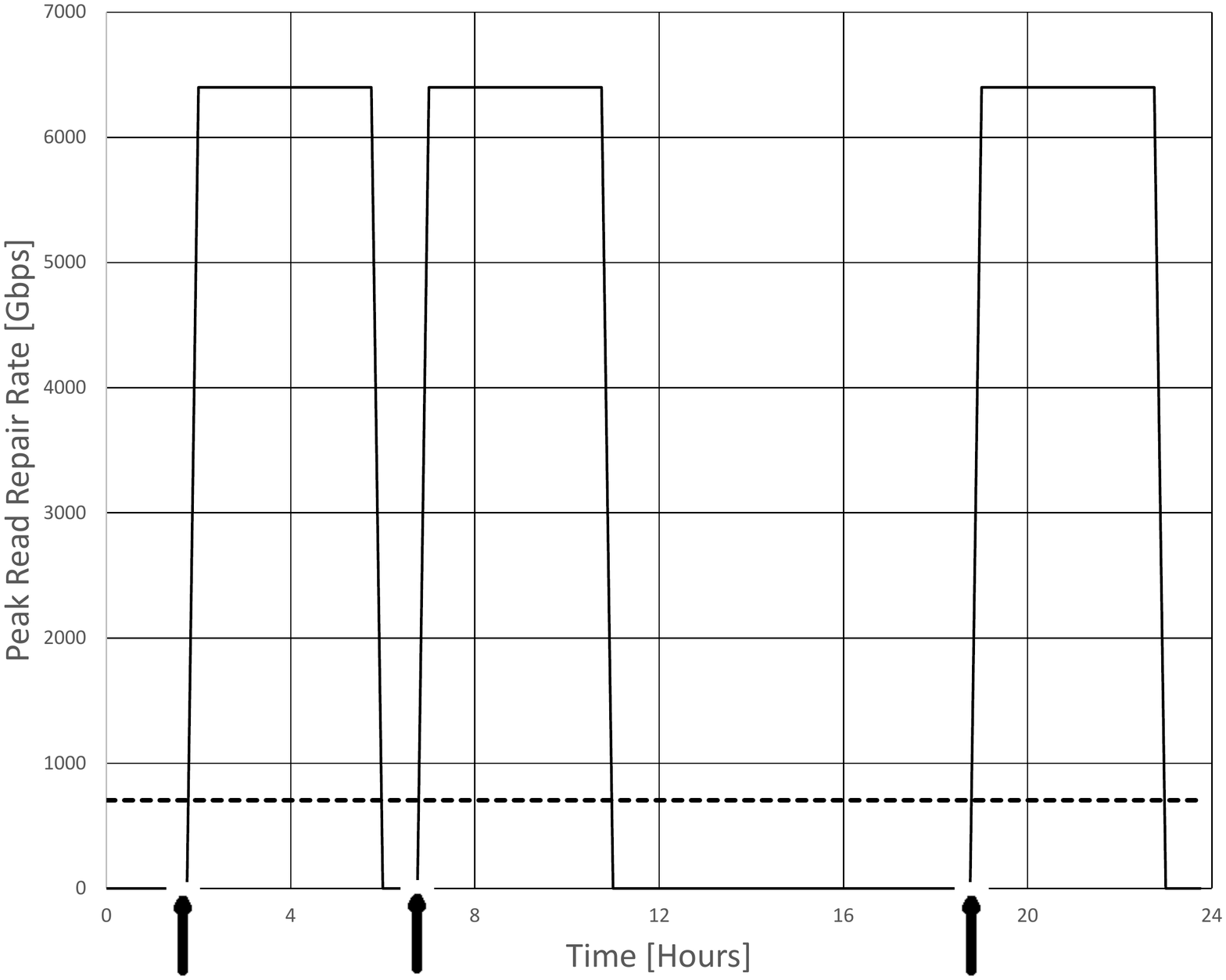}
\caption{\Rrepairrate\ comparison over a 24 hour period of \tradrepair\ using a $(14,10,4)$ \tradcode\ (solid line) and \liqrepair\ using a $(3010,2150,860)$ \liqcode\ (dash line).  Arrows indicate times of \nodef s.}
\label{fig_reactive_vs_lazy_repair} 
\end{figure}  

In Fig.~\ref{fig_mttdl_vs_rpeakg_ms}, we plot  $\Rpeak$ against achieved \mttl\  for a cluster of $M=402$ nodes.
The \liqsystem s use fixed \rrepairrate\ equal to the indicated $\Rpeak$ and two cases of $\beta = 33.3\%$ and $\beta = 16.7\%$ are shown.
The curves depict \mttl\ bounds as calculated according to Eq.~\eqref{eqn:MTTDLestimate} of Appendix~\ref{liquid mttl analysis sec}.
As can be seen, the bounds and the simulation results agree quite closely.
Fig.~\ref{fig_mttdl_vs_rpeakg_ms} also shows
simulation results for a \tradsystem\ with \storeoverhead\ $33.3\%,$ which illustrates
a striking difference in performance between \tradsystem s and \liqsystem s. 

\begin{figure}
\centering
\includegraphics[width=0.75 \textwidth]{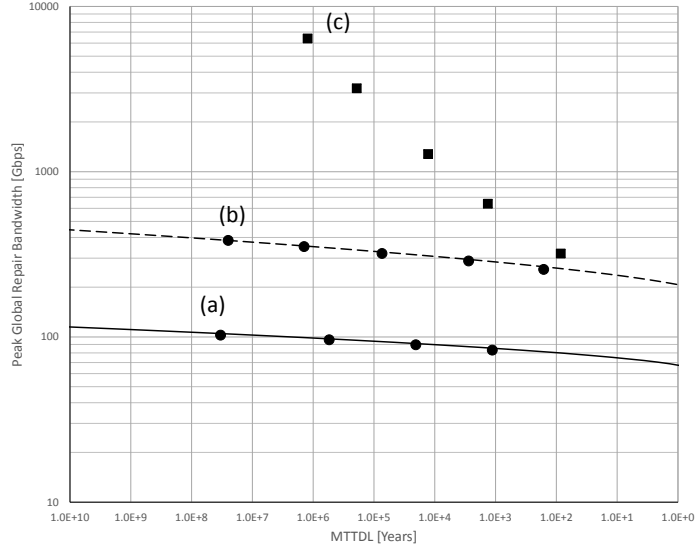}
\caption{Plot of $\mttl$ bound vs. $\Rpeak$ for \liqsystem s with: (a) $(402,268,134)$ code 
($33.3\%$ \storeoverhead); (b) $(402,335,67)$ code ($16.7\%$ \storeoverhead). 
Round markers on curve are simulation results. Square markers (c) are simulation results 
for \tradsystem\ with $(9,6,3)$ code ($33.3\%$ \storeoverhead).}
\label{fig_mttdl_vs_rpeakg_ms}
\end{figure}

\subsection{Fixed \nodef\ rate with $402$ nodes}
\label{simulations 402 fixed sec}

Fig.~\ref{fig_repair_plot_402_fixed} 
plots the simulation results for $\Rpeak$ against $\mttl$ for a wide variety of $M = 402$ node systems.
Each simulation was run for $\expp{9}$ years or until $200$ \objloss\ events occurred.
The number of nodes, the \nodef\ rate, and the \tradcode s used 
in these simulations are similar to~\cite{Huang12}. 
The square icons correspond to $16.7\%$ \storeoverhead\ and the circle icons correspond to $33.3\%$ \storeoverhead. 
For the \tradsystem s,  a $(18,15,3)$ \tradcode\ is used for $16.7\%$ \storeoverhead,
and a $(9,6,3)$ \tradcode\ is used for $33.3\%$ \storeoverhead.  
For the \liqsystem s, a $(402,335,67)$ \liqcode\ is used for $16.7\%$ \storeoverhead,
and a $(402,268,134)$ \liqcode\ is used for $33.3\%$ \storeoverhead.

The unshaded icons correspond to systems where there are only \nodef s,
whereas the shaded icons correspond to systems where there are both \nodef s and \latentf s. 

The \tradsystem s labeled as ``SC, 30min'' and ``SC, 24hr'' use the \rrepairrate\ strategy described 
in Section~\ref{SC repair subsec} based on the indicated $\Rpeak$ value,
with $\Trit$ set to 30 minutes and 24 hours respectively.
The triplication systems labeled as ``Trip, 30min'' and ``Trip, 24hr'' use the \rrepairrate\ strategy described 
in Section~\ref{SC repair subsec} based on the indicated $\Rpeak$ value
with $\Trit$ is set to 30 minutes and 24 hours respectively.

The \liqsystem s labeled as ``LiqF, 24hr'' use a fixed \rrepairrate\ set to the shown $\Rpeak$, which was
calculated according to Eq.~\eqref{eqn:MTTDLestimate} from Appendix~\ref{liquid mttl analysis sec}
with a target $\mttl$ of $\expp{7}$ years.
The value of $\Trit$ is set to 24 hours
As would be expected since Eq.~\eqref{eqn:MTTDLestimate} is a lower bound, 
the observed $\mttl$ in the simulations is somewhat above $\expp{7}$ years. 

The \liqsystem s labeled as ``LiqR, 24hr'' have $\Trit$ set to 24 hours and use a version of the regulator algorithm described 
in Section~\ref{self-regulating sec} to continually
and dynamically recalculate the \rrepairrate\ according to current conditions.
The regulator used (per object) node failure arrival rate estimates based on the average 
of the last $\frac{7}{6}\cdot r - F$ \nodef\ inter-arrival times 
where $F$ denotes the number of missing fragments for an object at the time it forms the estimate.
The maximum \rrepairrate\ was limited to the shown $\Rpeak$ value,
which is about three times the average rate.
In these runs there were no \objloss\ events in the $\expp{9}$ years 
of simulation when there are both \nodef s and \latentf s.  
This is not unexpected, since the estimate 
(using the \mttl\ lower bound techniques described in Section~\ref{self-regulating sec} 
but with a slightly different form of \nodef\ rate estimation) for the
actual $\mttl$ is above $\expp{25}$ years for the $33.3\%$ \storeoverhead\ case 
and above $2.5\cdot\expp{12}$ years for the $16.7\%$ \storeoverhead\ case. 
These estimates are also known to be conservative when compared to simulation data.
The same estimate shows that with this regulator an \mttl\ of $\expp{9}$ years 
would be attained with the smaller overhead of $r=115$ instead of $r=134$.  
As shown in Table~\ref{tab_regualator_examples}, the average \rrepairrate\ $\Ravg$ is around $1/3$ of $\Rpeak$, and 
the $99\%$ \rrepairrate\ $\Rnnn$ is around $1/2$ of $\Rpeak$.
The large \mttl\ achieved by the regulator arises from its ability to raise the repair rate when needed.  
Other versions of the regulator in which the peak \rrepairrate\ is not so limited can achieve much larger \mttl .
For example, a version with a \rrepairrate\ limit set to be about five times the average rate for the given 
\nodef\ rate estimate achieves an \mttl\ of around $\expp{39}$ years 
for the $33.3\%$ \storeoverhead\ case.

\begin{figure}
\centering
\includegraphics[width=1 \textwidth]{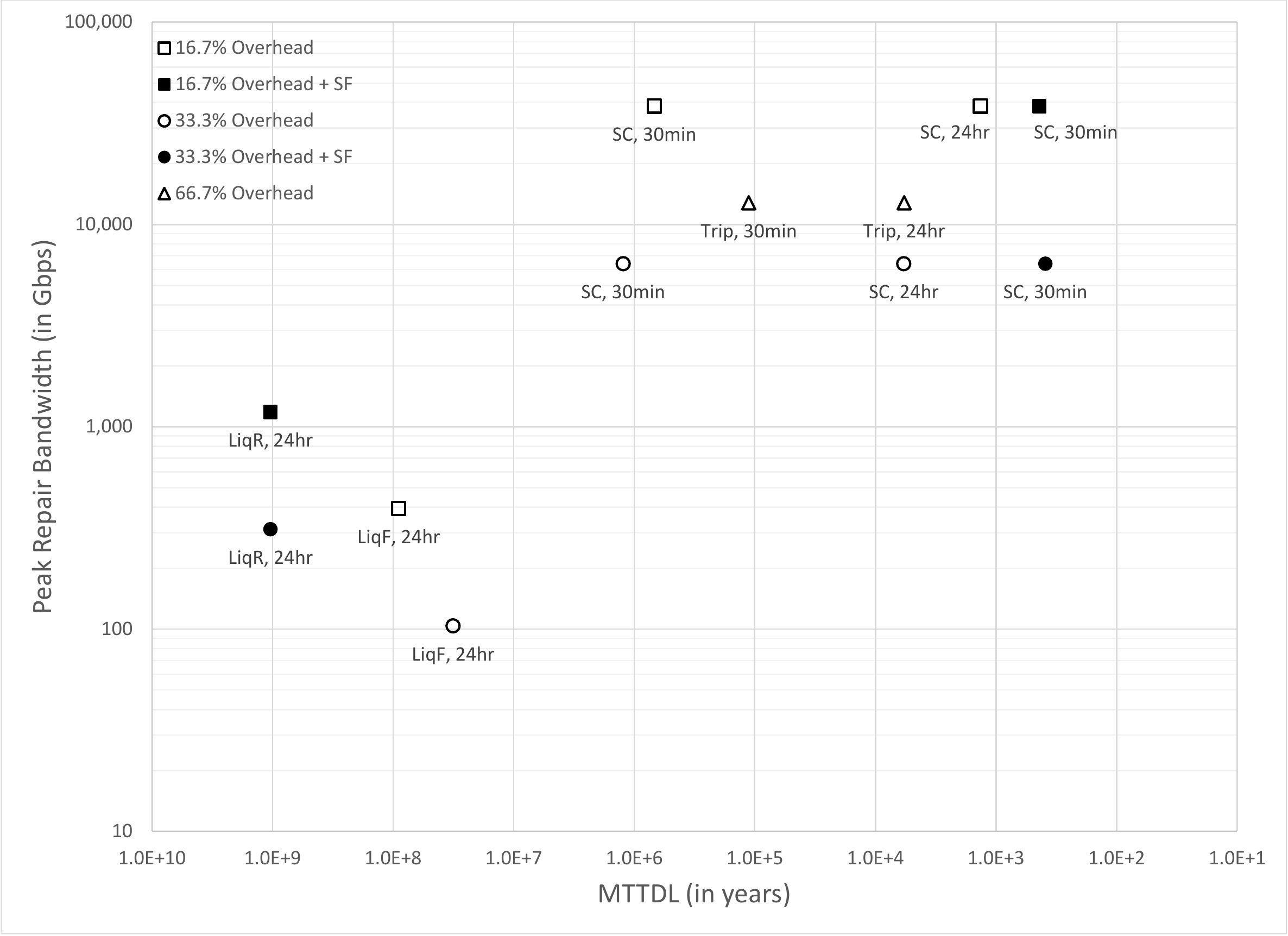}
\caption{Peak \rrepairrate\ versus \mttl\  for a $M=402$ node system for fixed $1/\lambdap$.} 
\label{fig_repair_plot_402_fixed}
\end{figure}

As can be seen, the value of $\Rpeak$ required for \liqsystem s is significantly smaller than that required
for \tradsystem s and for triplication.  Furthermore, although regulated rate \liqsystem\ can provide 
an  \mttl\ of greater than $\expp{25}$ years even when $\Trit$ is set to $24$ hours 
and there are \latentf s in addition to \nodef s, the \tradsystem s and triplication do not provide 
as good an \mttl\ even when $\Trit $ is set to $30$ minutes and there are no \latentf s, 
and provide a poor \mttl\ when $\Trit$ is set to $24$ hours or when there are \latentf s.
The \mttl\ would be further degraded for \tradsystem s if both $\Trit$ were set to 24 hours and there were \latentf s.

\subsection{Fixed \nodef\ rate with $3010$ nodes}
\label{simulations 3010 fixed sec} 

Fig.~\ref{fig_repair_plot_3010_fixed} shows detailed simulation results,
plotting the value of $\Rpeak$ and the resulting value of $\mttl$ for various $M = 3010$ node systems.
Each simulation was run for $\expp{8}$ years or until $200$ \objloss\ events occurred.
The number of nodes, the \nodef\ rate, and the \tradsystem s used 
in these simulations are similar to~\cite{Dimakis13}. 
The square icons correspond to $14.3\%$ \storeoverhead\ and the circle icons correspond to $28.6\%$ \storeoverhead. 
For the \tradsystem s,  a $(24,20,4)$ \tradcode\ is used for $14.3\%$ \storeoverhead,
and a $(14,10,4)$ \tradcode\ is used for $28.6\%$ \storeoverhead.  
For the \liqsystem s, a $(3010,2580,430)$ \liqcode\ is used for $14.3\%$ \storeoverhead,
and a $(3010,2150,860)$ \liqcode\ is used for $28.6\%$ \storeoverhead.  
The remaining parameters and terminology used in Fig.~\ref{fig_repair_plot_3010_fixed}
are the same as in Fig.~\ref{fig_repair_plot_402_fixed}.
There were no \objloss\ events in the $\expp{8}$ years of simulation  for any of the \liqsystem s
shown in Fig.~\ref{fig_repair_plot_3010_fixed}.

The systems ``LiqR, 24hr'' \liqsystem s used a regulated repair rate similar to that described for the
402 node system described above.  The target repair efficiency was set at $\frac{2}{3}r.$
In this case the average repair rate is higher than for the fixed rate case because the fixed rate
case was set using the target \mttl\ which, due the larger scale, admits more efficient repair than the
402 node system for the same target \mttl.
Estimates of $\mttl$ for the regulated case using techniques from Section~\ref{self-regulating sec} indicate that 
the \mttl\ is greater than an amazing $\expp{140}$ years for the $28.6\%$ \storeoverhead\ case
and $\expp{66}$ years for the $14.3\%$ \storeoverhead\ case even without the benefit of \nodef\ rate estimation. 
As before, the regulated repair system could be run at substantially lower overhead while still maintaining 
a large \mttl.
For the ``LiqR, 24hr'' \liqsystem s, the average \rrepairrate\ $\Ravg$ is around $1/3$ of $\Rpeak$, and 
the $99\%$ \rrepairrate\ $\Rnnn$ is around $1/2$ of $\Rpeak$.

The conclusions drawn from comparing \tradsystem s to \liqsystem s shown in Fig.~\ref{fig_repair_plot_402_fixed} 
are also valid for Fig.~\ref{fig_repair_plot_3010_fixed} but to an even greater degree.  The benefits
of the \liqsystem\ approach increase as the system size increases.

\begin{figure}
\centering
\includegraphics[width=1 \textwidth]{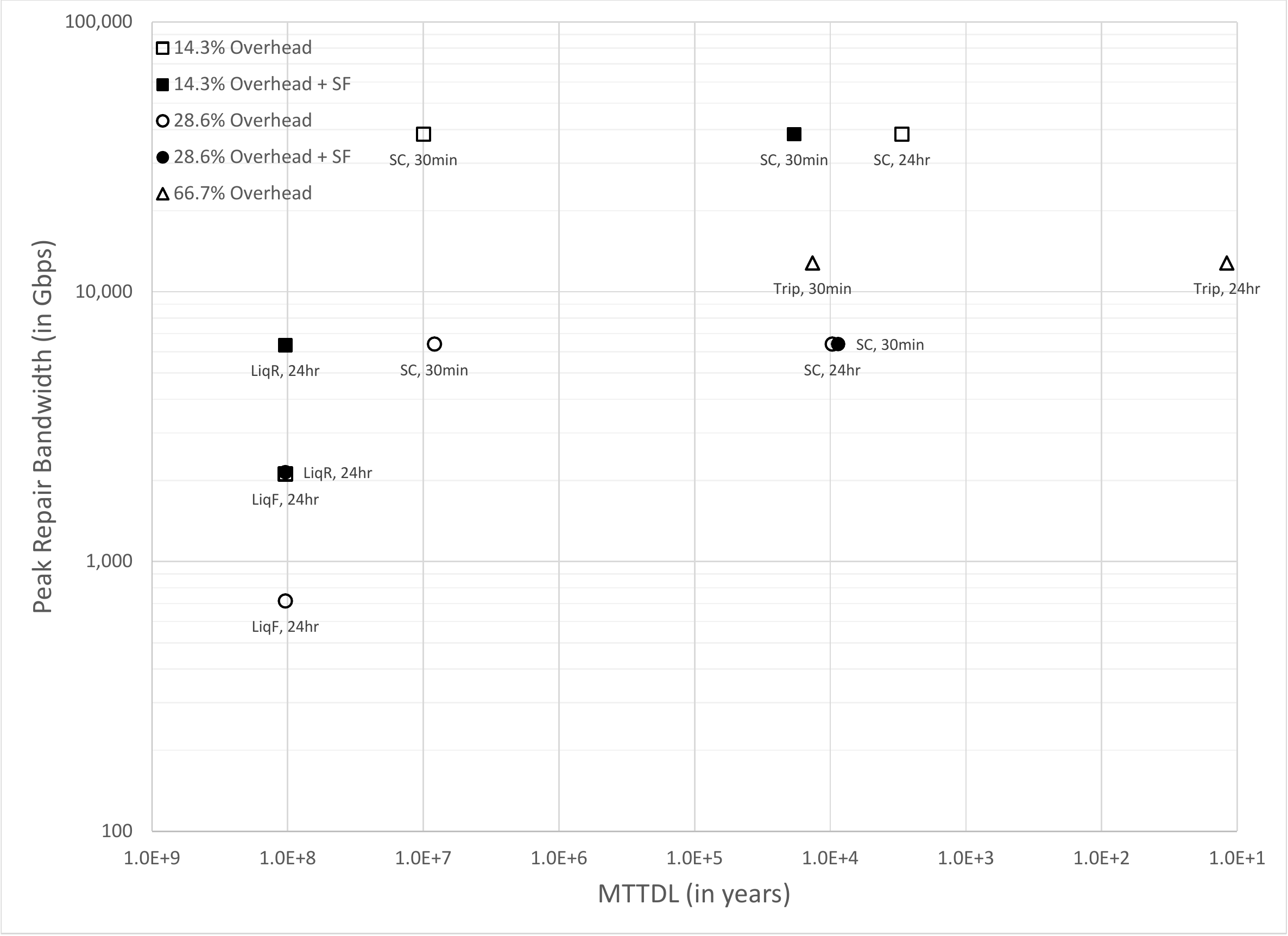}
\caption{Peak \rrepairrate\ versus \mttl\  for a $M=3010$ node system for fixed $1/\lambdap$.} 
\label{fig_repair_plot_3010_fixed}
\end{figure}

The paper \cite{Dimakis13} describes a {\em repair bottleneck} for a \tradsystem, 
hereafter referred to as the Facebook system, which has around $3000$ nodes and 
storage capacity per node of approximately $S = 15$ TB.
The Facebook system uses a $(14,10,4)$ RS code to protect 8\% of the \srcdata\
and triplication to protect the remaining 92\% of the \srcdata.
The amount of repair traffic estimated in \cite{Dimakis13} for the Facebook system 
is around 10\% to 20\% of the total average of $2$ PB/day of cluster network traffic. 
The paper \cite{Dimakis13} projects that the network would be completely 
saturated with repair traffic if even $50\%$ of the \srcdata\ in the Facebook system
were protected by the RS code.

Taking into account the mix of replicated and RS protected \srcdata\ and 
the mixed repair traffic cost for replicated and RS protected \srcdata, 
there are around $6.5$ to $13$ \nodef s per day in the Facebook system.
An average of $6.5$ to $13$ \nodef s per day for a $3000$ node system implies that each individual
node on average fails in around $115$ to $230$ days, which is around $2.5$ to $5$ times the \nodef\ rate of
$1/\lambdap = 3$ years we use in most of our simulations.
If all \srcdata\ were protected by a $(14,10,4)$ RS code in the Facebook system then the repair traffic per day 
would average around  $1$ to $2$ PB.  However, the \nodef\ per day statistics in \cite{Dimakis13}
have wide variations,  which supports the conclusions in \cite{Dimakis13} that protecting most of the \srcdata\
with the RS code in the Facebook system will saturate network capacity.  
 
Note that $1$ to $2$ PB/day of repair traffic for the Facebook system implies an average 
\rrepairrate\ of around $90$ to $180$ Gbps.
If the storage capacity per node were $S = 1$ PB instead of $S = 15$ TB then 
the Facebook system average \rrepairrate\ $\Ravg$ would be approximately $6$ to $12$ Tbps,
which is around $2.5$ to $5$ times the average \rrepairrate\ $\Ravg$ for the very similar 
``SC, 30min'' \tradsystem\ shown in Fig.~\ref{fig_repair_plot_3010_fixed}.
This relative difference in the value of $\Ravg$ makes sense, since the Facebook system \nodef\ rate is around 
$2.5$ to $5$ times larger than for the``SC, 30min'' \tradsystem. 

The ``SC, 30min'' \tradsystem\ shown in Fig.~\ref{fig_repair_plot_3010_fixed}
achieves an \mttl\ of just under $\expp{7}$ years using a peak \rrepairrate\ $\Rpeak = 6.4$ Tbps and using $\Trit = 30$ minutes.
Assume the \nodef\ rate for the Facebook system is $5$ times the \nodef\ rate 
for the ``SC, 30min'' \tradsystem\ shown in Fig.~\ref{fig_repair_plot_3010_fixed}, 
and suppose we set $\Trit = 30/5 = 6$ minutes for the Facebook system.
The scaling observations in Section~\ref{scaling params sec} show that 
this Facebook system achieves a \mttl\ of just under $\expp{7}/5 = 2 \cdot \expp{6}$ years when $\Rpeak= 6.4\cdot 5 = 32$ Tbps.

The average \nodef\ rates reported in \cite{Dimakis13}
for the Facebook system are considerably higher than the \nodef\ rates we use in our simulations.
One possible reason is that the Facebook system uses a small $\Trit$ value, e.g., $\Trit = 10$ minutes,
after which unresponsive nodes trigger repair.  This can cause unnecessary repair 
of fragments on unresponsive nodes that would have recovered if $\Trit$ were larger.

It is desirable to set $\Trit$ as large as possible to avoid unnecessary repair traffic, but only if a reasonable $\mttl$
can be achieved.  The results in Fig.~\ref{fig_repair_plot_3010_fixed} indicate that a \tradsystem\ 
cannot offer a reasonable $\mttl$ with $\Trit = 24$ hours. On the other hand, 
the results from Section~\ref{simulations transient sec} indicate that the average \rrepairrate\ increases
significantly (indicating a significant amount of unnecessary repair) if a smaller value of $\Trit$ 
is used for a \tradsystem\ when there are \transientf s. 
In contrast, a regulated \liqsystem\ 
can provide a very large $\mttl$ operating with $\Trit = 24$ hours and with \latentf s, 
thus avoiding misclassification of \transientf s and unnecessary repair traffic.

Results for \tradsystem s using LR codes~  \cite{Gopalan12}, \cite{Huang12}, \cite{Dimakis13} can be deduced 
from results for \tradsystem s using RS codes.  
A typical $(14,10,2,2)$ LR code (similar to the one described in \cite{Huang12} which defines a $(16,12,2,2)$ LR code) partitions $10$ source fragments into two groups of five,
generates one local repair fragment per group and two global repair fragments for a total
of $14$ fragments, and thus the \storeoverhead\ is the same as for a $(14,10,4)$ RS code.
At least five fragments are read to repair a lost fragment using the LR code, whereas $10$ fragments
are read to repair a lost fragment using the RS code.  However,
there are fragment loss patterns the LR code does not protect against that the RS code does.
Thus, the LR code operating at half the $\Rpeak$ value used for the RS code achieves
a \mttl\ that is lower than the \mttl\ for the RS code. 

Similarly, results for \cite{Dimakis13} which use a $(16,10,4,2)$ LR code can be compared 
against the $(14,10,4)$ RS code that we simulate. The LR code operating at half the $\Rpeak$ 
value used for the RS code achieves an \mttl\ that may be as large as the \mttl\ for the RS code.
However, the LR code \storeoverhead\ is $\beta = 0.375$, which is higher than
the RS code \storeoverhead\ of $\beta = 0.286$.

\subsection{Varying \nodef\ rate}
\label{simulations varying sec}

In this subsection we demonstrate the ability of the regulated repair system to respond to bursty \nodef\ processes.
The \nodef s in all simulations described in this subsection are generated by a time varying periodic Poisson process
repeating the following pattern over each ten year period of time:
the \nodef\ rate is set to $1/\lambdap = 3$ years for the first nine years of each ten year period, 
and then $1/\lambdap = 1$ year for the last year of each ten year period.
Thus, in each ten year period, the \nodef\ rate is the same as it was in
the previous subsections for the first nine years, followed by a \nodef\ rate 
that is three times higher for the last year.

Except for using variable $\lambdap$ for \nodef s,
Fig.~\ref{fig_repair_plot_402_variable} is similar to Fig.~\ref{fig_repair_plot_402_fixed}.
The \tradsystem s labeled as ``SC, 30min'' in Fig.~\ref{fig_repair_plot_402_variable} use the \rrepairrate\ strategy described 
in Section~\ref{SC repair subsec} based on the shown $\Rpeak$ value, and $\Trit$ is set to 30 minutes.
Thus, even when the \nodef\ rate is lower during the first nine years of each period,
the \tradsystem\ still sets \rrepairrate\ to $\Rpeak$ whenever repair is needed.

The \liqsystem s labeled as ``LiqR, 24hr'' in Fig.~\ref{fig_repair_plot_402_variable} use the regulator algorithm described 
in Section~\ref{self-regulating sec} to continually
and dynamically recalculate the \rrepairrate\ according to current conditions, 
allowing a maximum \rrepairrate\ up to the shown $\Rpeak$ value, and $\Trit$ is set to 24 hours.  
In these runs there were no \objloss\ events in the $\expp{9}$ years of simulation with both
\nodef s and \latentf s.  
[[We may want to compute something here   TBD?]]

Fig.~\ref{fig:rr_spiky_graph_3x} is an example trace of the \rrepairrate\ as function of time for the
\liqsystem\ labeled ``LiqR, 24hr'' in  Fig.~\ref{fig_repair_plot_402_variable}, which shows how the
the \rrepairrate\ automatically adjusts as the \nodef\ rate varies over time.   
In these simulations, the average \rrepairrate\ $\Ravg$ is around $1/9$ of $\Rpeak$ during the first nine years
of each period and around $1/3$ of $\Rpeak$ during the last year of each period, and 
the $99\%$ \rrepairrate\ $\Rnnn$ is around $1/6$ of $\Rpeak$ during the first nine years
of each period and around $1/2$ of $\Rpeak$ during the last year of each period.
Thus, the regulator algorithm does a good job of matching the \rrepairrate\ to what is needed
according to the current \nodef\ rate.

Except for using variable $\lambdap$ for \nodef s,
Fig.~\ref{fig_repair_plot_3010_variable} is similar to Fig.~\ref{fig_repair_plot_3010_fixed}.
The conclusions comparing \tradsystem s to \liqsystem s shown in Fig.~\ref{fig_repair_plot_402_variable} 
are also valid for Fig.~\ref{fig_repair_plot_3010_variable}.

\begin{figure}
\centering
\includegraphics[width=0.82 \textwidth]{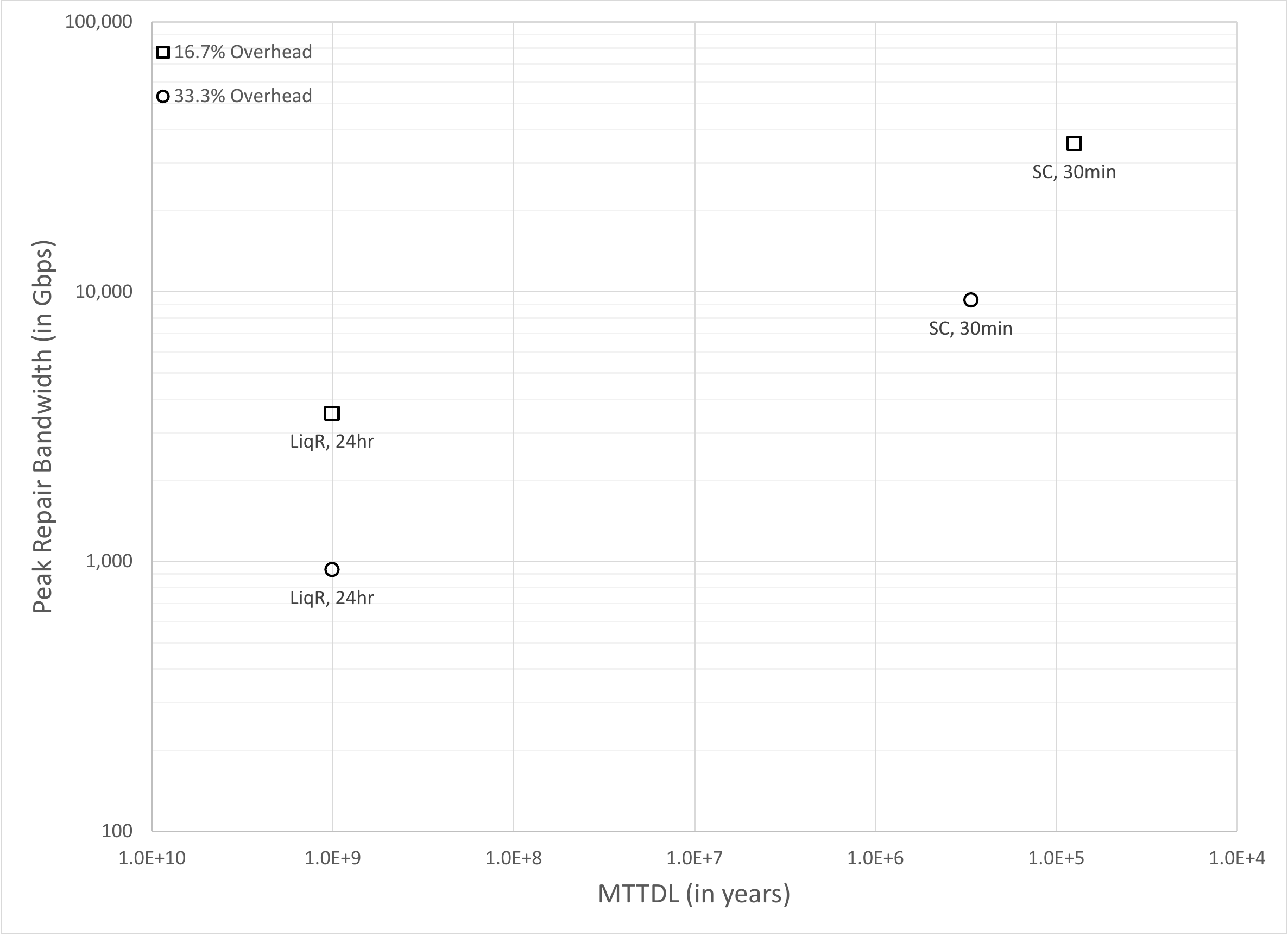}
\caption{Peak \rrepairrate\ versus \mttl\  for a $M=402$ node system for variable $1/\lambdap$.} 
\label{fig_repair_plot_402_variable}
\end{figure}

\begin{figure}
\centering
\includegraphics[width=0.82 \textwidth]{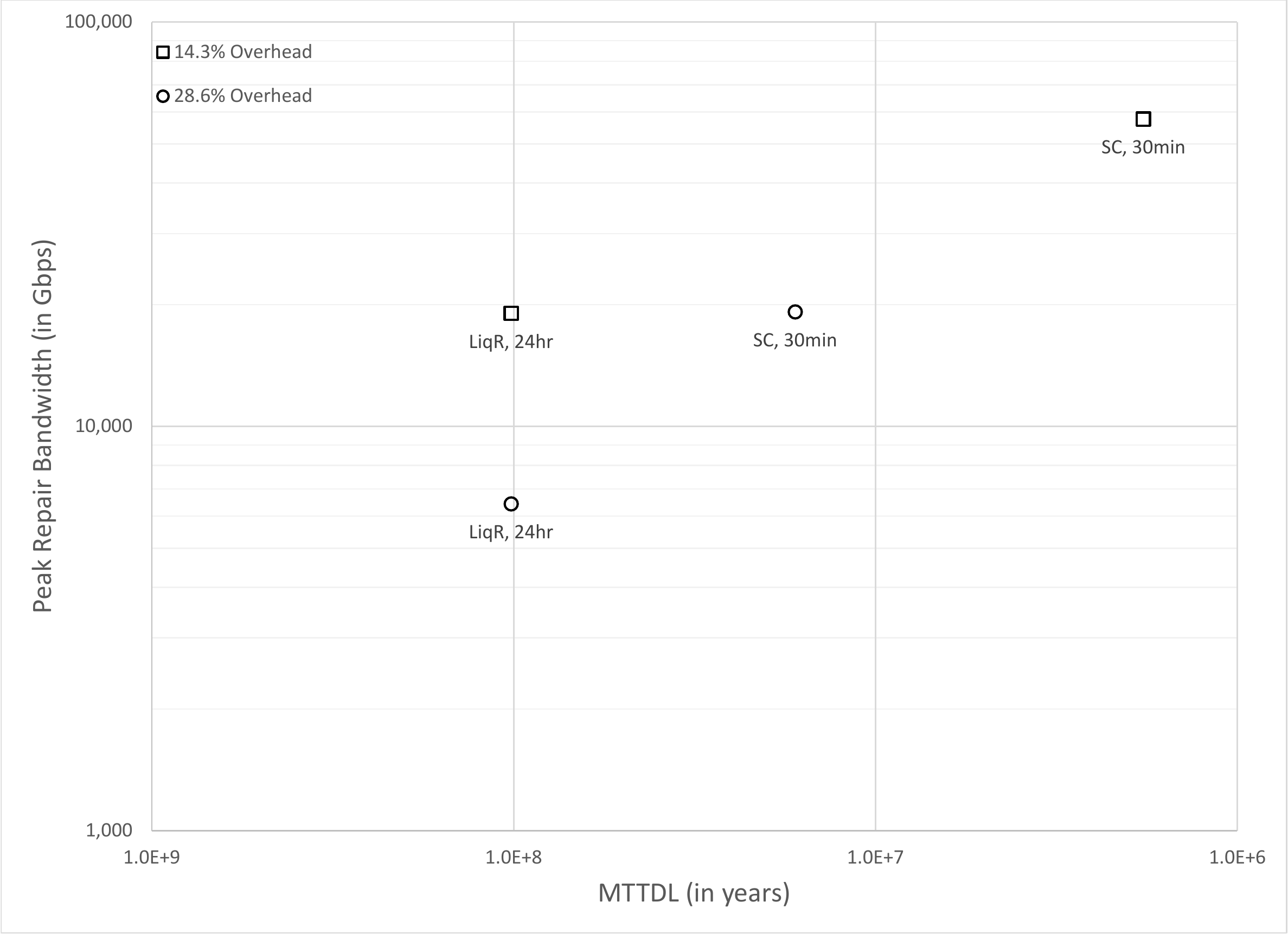}
\caption{Peak \rrepairrate\ versus \mttl\  for a $M=3010$ node system for variable $1/\lambdap$.} 
\label{fig_repair_plot_3010_variable}
\end{figure}

\subsection{\Transientf s}
\label{simulations transient sec}

In this subsection we demonstrate the ability of the \liqsystem s to
efficiently handle \transientf\ processes.

Fig.~\ref{fig_repair_plot_402_temp_fail} plots the simulation results
for $\Ravg$ against \mttl\ for a number of $M = 402$ node systems.  Like
in previous subsections, the node lifetimes for \nodef s are modeled as
independent exponentially distributed random variables with parameter
$\lambdap$ and is set to $1/\lambdap = 3$ years.  The occurrence times
of \transientf s are modeled as independent exponentially distributed
random variables with parameter $\lambdat$ and is set to
$1/\lambdat=0.33$ years. Thus, \transientf s occur at $9$ times the rate
at which \nodef s occur, consistent with \cite{Ford10}. The durations of
\transientf s are modeled with log-logistic random variables, having a
median of $60$ seconds and shape parameter $1.1$.  These choices were
made so as to mimic the distribution provided in \cite{Ford10}.  Figure
\ref{Ford figure} shows the graph from \cite{Ford10} with the log-logistic
fitted curve overlaid.  With this model, less than $10\%$ of the
\transientf{}s last for more than $15$ minutes.

\begin{figure}
\centering
\includegraphics[width=0.75 \textwidth]{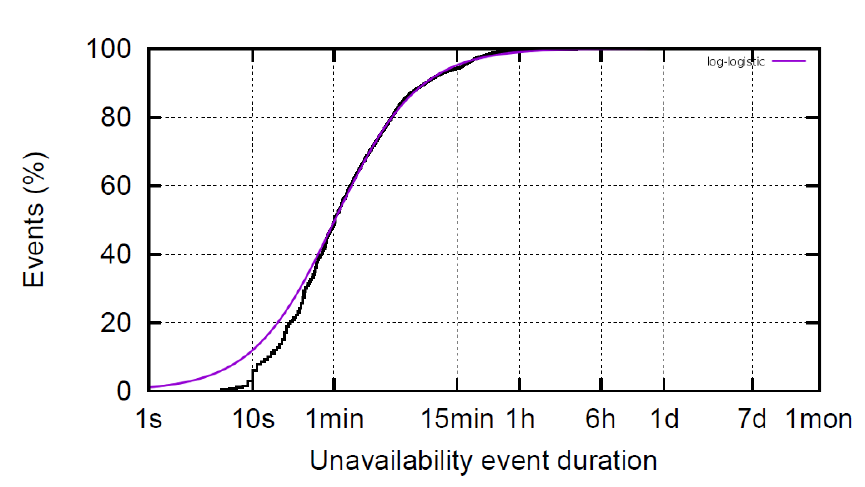}
\caption{Log-logistic transient failure duration model versus measured data}
\label{Ford figure}
\end{figure}

The unshaded markers mark simulations with just \nodef s and the shaded
markers mark simulations with both \nodef s and \transientf s.

The $\Rpeak$ values in all simulations is the same as $\Rpeak$ value
used correspondingly in Fig.~\ref{fig_repair_plot_402_fixed}.  The
\tradsystem s labeled as ``SC, 30min'' and  ``SC, 24hr`` in
Fig.~\ref{fig_repair_plot_402_temp_fail} use the \rrepairrate\ strategy
described in Section~\ref{SC repair subsec}, with $\Rpeak$ value of
$6400$ Gbps and $\Trit$ is set to 30 minutes and 24 hours respectively.
The \liqsystem\ labeled as ``LiqR, 24hr'' in
Fig.~\ref{fig_repair_plot_402_temp_fail} uses the regulator algorithm
described in Section~\ref{self-regulating sec}, with $\Rpeak$ value of
$311$ Gbps and $\Trit$ is set to 24 hours.  

As evident, there is no difference in the $\Ravg$ for the \liqsystem\
between simulations with \transientf s and those without.  No \objloss\
events were observed for the \liqsystem\ in the $\expp{9}$ years of
simulation with both \nodef s and \transientf s.  The $\Ravg$ for
\tradsystem\ labeled as ``SC, 30min`` is however higher for the
simulation with \transientf s and achieves an \mttl\ that is less than
half of what is achieved when there are no \transientf s. The $\Ravg$
for \tradsystem\ labeled as ``SC, 24hr`` is the same between simulations
with \transientf s and those without, but they fail to achieve a high
enough \mttl\ and do not provide adequate protection against \objloss.

Thus, the \liqsystem s do a better job of handling \transientf s without
requiring more effective average bandwidth than \tradsystem s.

\begin{figure}
\centering
\includegraphics[width=1 \textwidth]{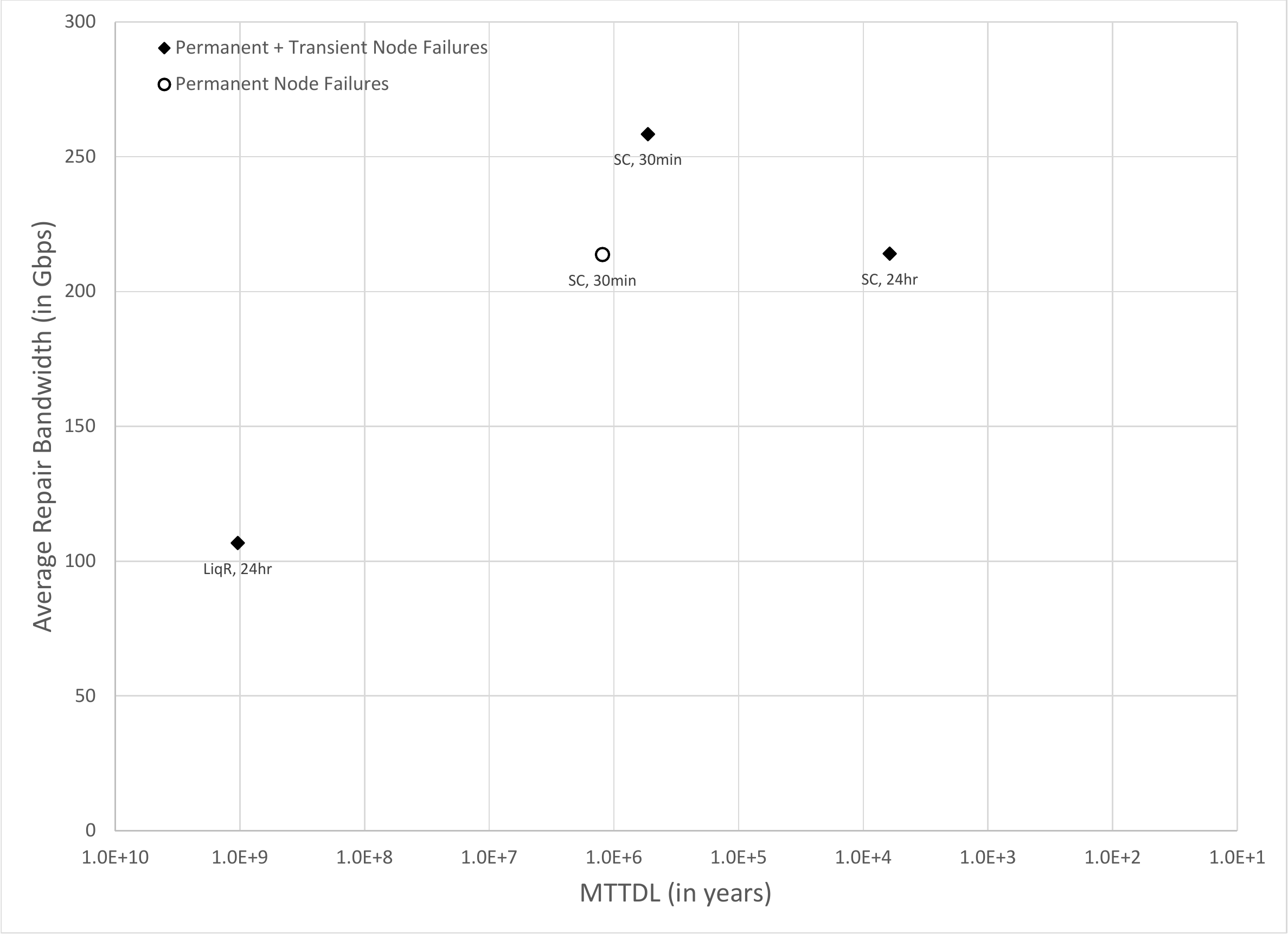}
\caption{Average \rrepairrate\ versus \mttl\ for a $M=402$ node system for transient $1/\lambdap$.} 
\label{fig_repair_plot_402_temp_fail}
\end{figure}

\subsection{Scaling parameters}
\label{scaling params sec}

The simulation results shown in Section~\ref{repair sim sec} are based on sets of parameters from
published literature, and from current trends in deployment practices.   As an example of a trend, a few years 
ago the storage capacity of a node was around $S = 16$ TB, whereas the storage capacity of a node in some systems
is $S = 1$ PB or more, and thus the storage capacity per node has grown substantially.  

Repair bandwidths scale linearly as a function of the storage capacity $S$ per node (keeping other input parameters the same).  
For example, the values of $(\Rpeak,\Ravg)$ for a system with $S = 512$ TB 
can be obtained by scaling down by a factor of two the values of $(\Rpeak,\Ravg)$ 
from simulation results for $S = 1$ PB,
whereas the $\mttl$ and other values remain unchanged.

Repair bandwidths and the $\mttl$ scale linearly as a function of concurrently scaling the node failure rate $\lambdap$ 
and the repair initiation timer $\Trit$.
For example, the values of $(\Rpeak,\Ravg)$ for a system with $1/\lambdap = 9$ years
and $\Trit = 72$ hours can be obtained by scaling down by a factor of three the values of $(\Rpeak,\Ravg)$ 
from simulation results for $1/\lambdap = 3$ years and $\Trit = 24$ hours, 
whereas the $\mttl$ can be obtained by scaling up by a factor of three, and other values remain unchanged.

\section{Encoding and decoding objects}
\label{fragment organization sec}

In a {\em \tradorg} commonly used by \tradsystem s, objects are partitioned into 
$k$ successive source fragments and encoded to generate $r=n-k$ repair fragments 
(we assume the \tradcode\ is MDS).
In effect a fragment is a symbol of the \tradcode, 
although the \tradcode\ may be intrinsically defined over smaller symbols, e.g., bytes.
A high level representation of a \tradorg\ for a simple example using a $(9,6,3)$ 
Reed-Solomon code is shown in Fig.~\ref{fig_blk_data_org},
where the object runs from left to right across the source fragments as indicated by the blue line.

\begin{figure}
\centering
\includegraphics[width=0.75\textwidth]{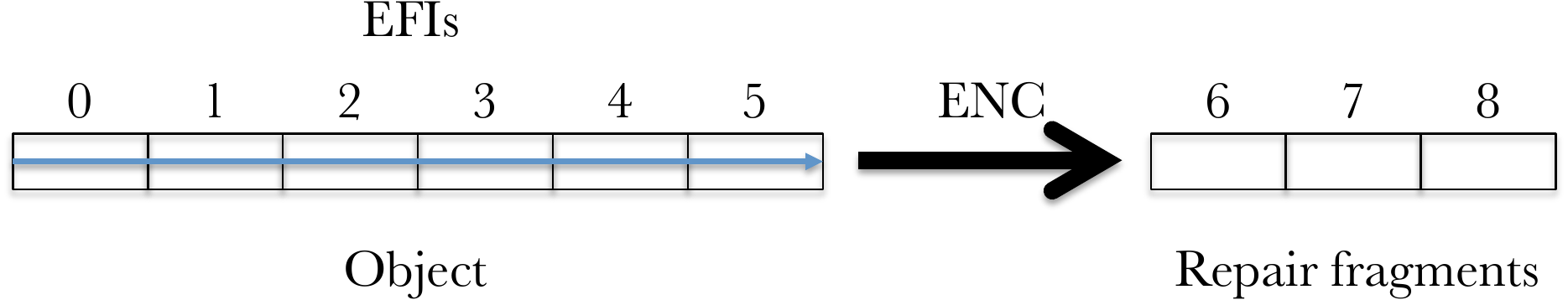}
\caption{Example of \tradorg. A data object is segmented into $k=6$ source fragments and encoded 
by a \tradcode\ to generate $r=3$ repair fragments.}
\label{fig_blk_data_org}
\end{figure}

A relatively small chunk of contiguous data from the object that resides within a 
single source fragment can be accessed directly from the associated node if it is available.  
If, however, that node is unavailable, perhaps due to a \transientf\ or \nodef, then $k$ corresponding chunks 
of equal size must be read from other nodes in the \placegroup\ and decoded to generate the desired chunk.   
This involves reading $k$ times the size of the missing chunk and performing a decoding operation,
which is referred to as a {\em degraded read}~\cite{Dimakis13},~\cite{ErasureCodingvsReplication}. 
Reading $k$ chunks can incur further delays if any of the nodes storing the chunks are busy and thus non-responsive. 
This latter situation can be ameliorated if a number of chunks slightly larger than $k$ are requested 
and decoding initiated as soon as the first $k$ chunks have arrived, as described in~\cite{Soljanin12}. 

A {\em \liqorg}, used by \liqsystem s, operates in a stream fashion. 
An $(n,k,r)$ \liqcode\ is used with small symbols of size $\symsize$, 
resulting in a small source block of size $\blksize = k \cdot \symsize$.  
For example, with $k = 1024$ and symbols of size $\symsize=64$ Bytes, 
a source block is of size $\blksize = 64$ KB.  
An object of size $\objsize$ is segmented into 
\[N = \frac{\objsize}{\blksize}\] source blocks, and the $k$ source symbols 
of each such source block is independently erasure encoded into $n$  symbols.  
For each $i = 0,\ldots,n-1$, the fragment with EFI $i$ consists of the concatenation of the 
$i^{\rm th}$ symbol from each of the $N$ consecutive source blocks of the object.
An example of a \liqorg\ showing a $(1536,1024,512)$ \liqcode\ with $N=24,576$ is shown in Fig.~\ref{fig_flow_data_org},
where each row corresponds to a source block and runs from left to right across the source symbols
as indicated by the dashed blue lines, and the object is the concatenation of the rows from top to bottom.

\begin{figure}
\centering
\includegraphics[width=0.75\textwidth]{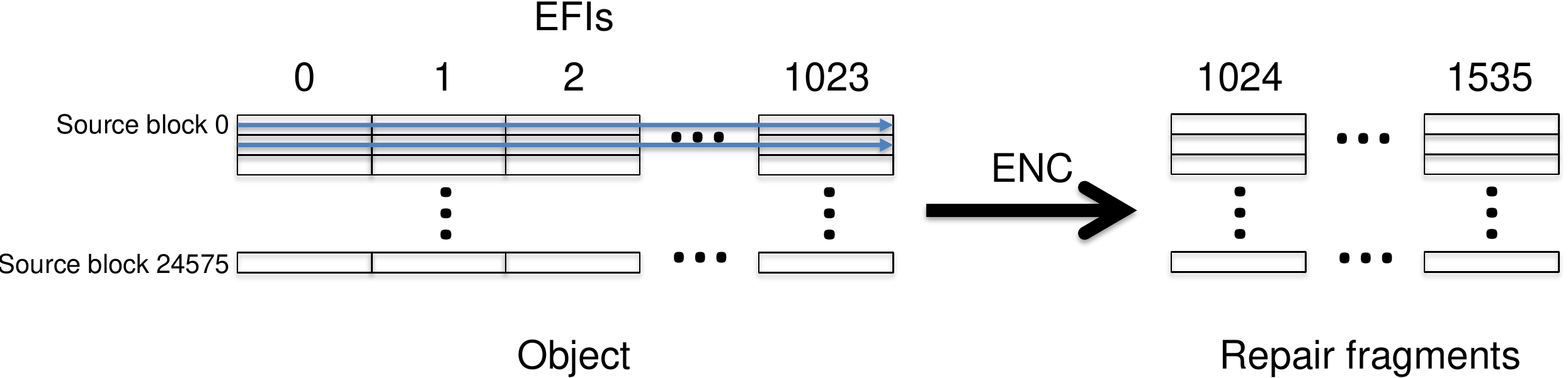}
\caption{Example of a \liqorg. A data object is segmented into $24,576$ source blocks, and each source block is segmented into $k=1024$ source symbols and encoded by a \liqcode\ to generate $r=512$ repair symbols.
The concatenation of the $i^{\rm th}$ symbol from each of the source blocks forms the fragment with EFI $i$.  Thus the vertical columns labeled by the EFIs correspond to fragments.}
\label{fig_flow_data_org}
\end{figure}

A chunk of data consisting of a consecutive set of source blocks can be accessed as follows.  
For a fragment with EFI $i$, the consecutive portion of the fragment that corresponds to the $i^{\rm th}$ 
symbol from each of consecutive set of source blocks is read.  When such portions from at least $k$ fragments
are received (each portion size a $1/k$-fraction of the chunk size), the chunk of data can be recovered 
by decoding the consecutive set of source blocks in sequence.
Thus, the amount of data that is read to access a chunk of data is equal to the size of the chunk of data.
Portions from slightly more than $k$ fragments can be read to reduce the latency due to nodes 
that are busy and thus non-responsive.  Note that each access requires a decoding operation.

Let us contrast the two organizations with a specific example using an object of size $\objsize = 1.5$ GB. 
Consider an access request for the $32$ MB chunk of the object in positions $576$ MB through $608$ MB,
which is shown as the shaded region in Fig. ~\ref{fig_access_compare} with respect to both a \tradorg\ and a \liqorg.
 
With the \tradorg\ shown in Fig.~\ref{fig_blk_data_org}, the $32$ MB chunk is within the fragment with EFI = $2$
as shown in Fig.~\ref{fig_access_compare}. 
If the node that stores this fragment is available then the chunk can be read directly from that node.  
If that node is unavailable, however, then in order to recover the $32$ MB chunk the 
corresponding $32$ MB chunks must be retrieved from any six available nodes 
of the \placegroup\ and decoded.  In this case a total of $192$ MB of data is 
read and transferred through the network to recover the $32$ MB chunk of data. 

\begin{figure}
\centering
\includegraphics[width=0.75\textwidth]{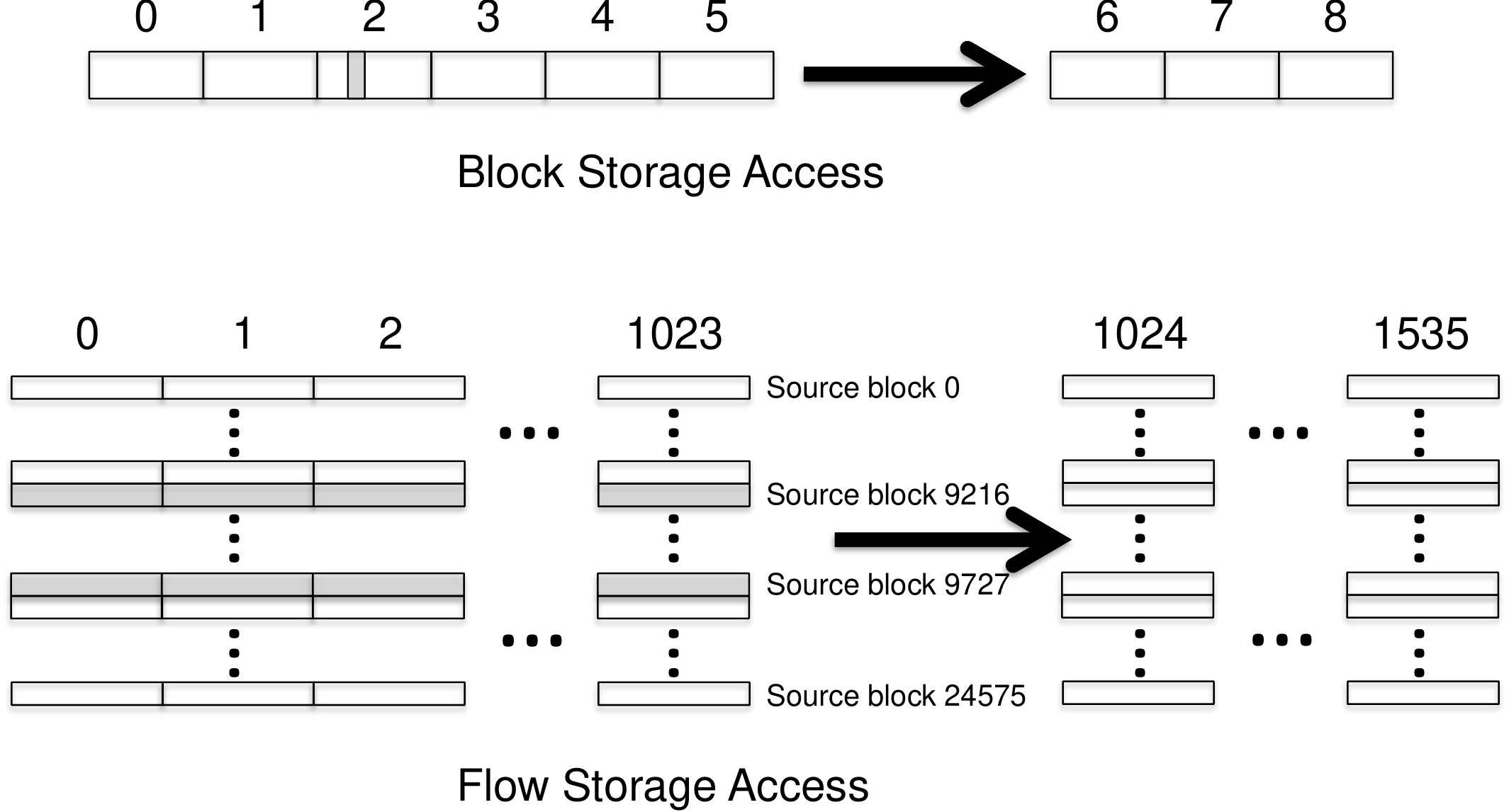}
\caption{Comparison of accessing a $32$ MB chunk of an object for \tradorg\ and \liqorg.  The shaded region indicates the chunk of the object to be accessed.}
\label{fig_access_compare}
\end{figure}

With the \liqorg\ shown in Fig.~\ref{fig_flow_data_org},
the chunk of size $32$ MB corresponds to $512$ consecutive source blocks from 
source block 9216 to block 9727 as shown in Fig.~\ref{fig_access_compare},
which can be retrieved by reading the corresponding portion of the fragment from each of at least $1024$ fragments. 
From the received portions, the chunk can be recovered by decoding the $512$ source blocks.
Thus, in a \liqorg\ a total of $32$ MB of data is read and transferred through the 
network to recover the original $32$ MB chunk of data. 

Naturally, the symbol size, source block size, and the granularity of the data to be accessed should be chosen to optimize the overall design, taking into account constraints such as whether the physical storage is hard disk or solid state.

\section{Prototype Implementation}
\label{implementation sec}

Our team implemented a prototype of the \liqsystem.  
We used the prototype to understand and improve the 
operational characteristics of a \liqsystem\ implementation in a real world setting,
and some of these learnings are described in Section~\ref{operational sec}.  
We used the prototype to compare access performance of \liqsystem s and \tradsystem s.
We used the prototype to cross-verify the \liqrepair\ simulator that uses a 
fixed \rrepairrate\ as described in Section~\ref{repair liquid sec} to ensure
that they both produce the same \mttl\ under the same conditions.
We used the prototype to validate the basic behavior of a regulated \rrepairrate\ 
as described in Section~\ref{self-regulating sec}.

The hardware we used for our prototype consists of a rack of 14 servers.  The servers are
connected via 10 Gbit full duplex ethernet links to a switch, and are
equipped with Intel Xeon CPUs running at 2.8 GHz, with each CPU having 20 cores.  
Each server is equipped with an SSD drive of 600 GB capacity that is used for storage.

The liquid prototype system software is custom written in the Go
programming language.  It consists of four main modules:
\begin{itemize}
\item
\emph{Storage Node} (SN) software.
The storage node software is a HTTP server.  It is used to store
fragments on the SSD drive.  Since a \liqsystem\ would generally
use many more than 14 storage nodes, we ran many instances of the
storage node software on a small number of physical servers, thereby
emulating systems with hundreds of nodes.

\item
An \emph{Access Generator} creates random user requests for user data 
and dispatches them to accessors.  It also collects the resulting access times 
from the accessors.

\item
\emph{Accessors} take user data requests from the access generator 
and create corresponding requests for (full or partial) fragments from the storage nodes.  
They collect the fragment responses and measure the amount of time it takes until 
enough fragments have been received to recreate the user data. 
There are in general multiple accessors running in the system.

\item A \emph{Repair Process} is used to manage the repair queue and
regenerate missing fragments of stored objects.  The repair process can be 
configured to use a fixed \rrepairrate, or to
use a regulated \rrepairrate\ as described in Section~\ref{self-regulating sec}.

\end{itemize}
In addition, the team developed a number of utilities to test and
exercise the system, e.g., utilities to cause \nodef s, and tools to collect data about the test runs.

\subsection{Access performance setup}

The users of the system are modeled by the access generator module.
We set a target user access load $\rho$ ($0 < \rho < 1$) on the system as follows.
Let $C$ be the aggregate capacity of the network links to
the storage nodes, and let $s$ be the size of the user data requests that will
be issued.  Then $C/s$ requests per unit of time would use all available capacity.  
We set the mean time between successive user requests to $t = \frac{s}{\rho \cdot C}$,
so that on average a fraction $\rho$ of the capacity is used.
The access generator module uses an exponential random variable to 
generate the timing of each successive user data request, where the mean of
the random variable is $t$.

As an example, there is a 10 Gbps link to each of the six storage nodes in the setup 
of Fig.~\ref{fig_access_prototype}, and thus $C = 60$ Gbps.  
If user data requests are of size $s =10$ MB $= 80 \cdot 2^{20}$ bits, 
and the target load is $\rho = 0.8$ ($80\%$ of capacity), 
then the mean time between requests is set to
$$t = \frac{80 \cdot 2^{20}}{0.8 \cdot 60 \cdot \expp{9}} \approx 1.75 \text{ milliseconds}.$$

The access generator module round-robins the generated user data requests across the accessor modules.
When an accessor module receives a generated user data request from the access generator module,
the accessor module is responsible for making fragment requests to the storage nodes and collecting response payloads; 
it thus needs to be fairly high performance.  
Our implementation of an accessor module uses a custom HTTP stack, 
which pipelines requests and keeps connections statically alive over extended periods of time.  
Fragment requests are timed to avoid swamping the switch with response data and this
reduces the risk of packet loss significantly.  The server network was also tuned, e.g.,  
the TCP retransmission timer was adjusted for the use case of a local network.  
See~\cite{IncastBerkeley} for a more detailed discussion of network issues in a cluster
environment.  These changes in aggregate resulted in a configuration that has low response latency 
and is able to saturate the network links fully.

\subsection{Access performance tests}
\label{access tests sec}

The prototype was used to compare access performance of \liqsystem s and \tradsystem s.  
Our goal was to evaluate access performance using \liqsystem\ implementations, i.e., 
understand the performance impact of requesting small fragments (or portions of fragments) 
from a large number of storage servers.  
For simplicity we evaluated the network impact on access performance.
Fragment data was generally in cache and thus disk access times were not part of the evaluation.
We also did not evaluate the computational cost of decoding.  
Some of these excluded aspects are addressed separately in Section~\ref{operational sec}.

Figure \ref{fig_access_prototype} displays the setup that we use for testing access speed.  
Each physical server has a full duplex 10 GbE network link which is connected over a switch to all the other
server blades.  Testing shows that we are able to saturate all of the 10
Gbps links of the switch simultaneously.  Eight of our 14 physical
servers are running accessors, and one of the eight is running the
access generator as well.  The remaining six servers are running
instances of the storage node software.  This configuration allows us to test the 
system under a load that saturates the network capacity to the storage nodes: 
The aggregate network capacity between the switch and the accessors is 80 Gbps, 
which is more than the aggregate network capacity of 60 Gbps between the
switch and the storage nodes, so the bottleneck is the 60 Gbps
between the switch and the storage nodes.

\begin{figure}
\begin{center}
\includegraphics[scale=0.5]{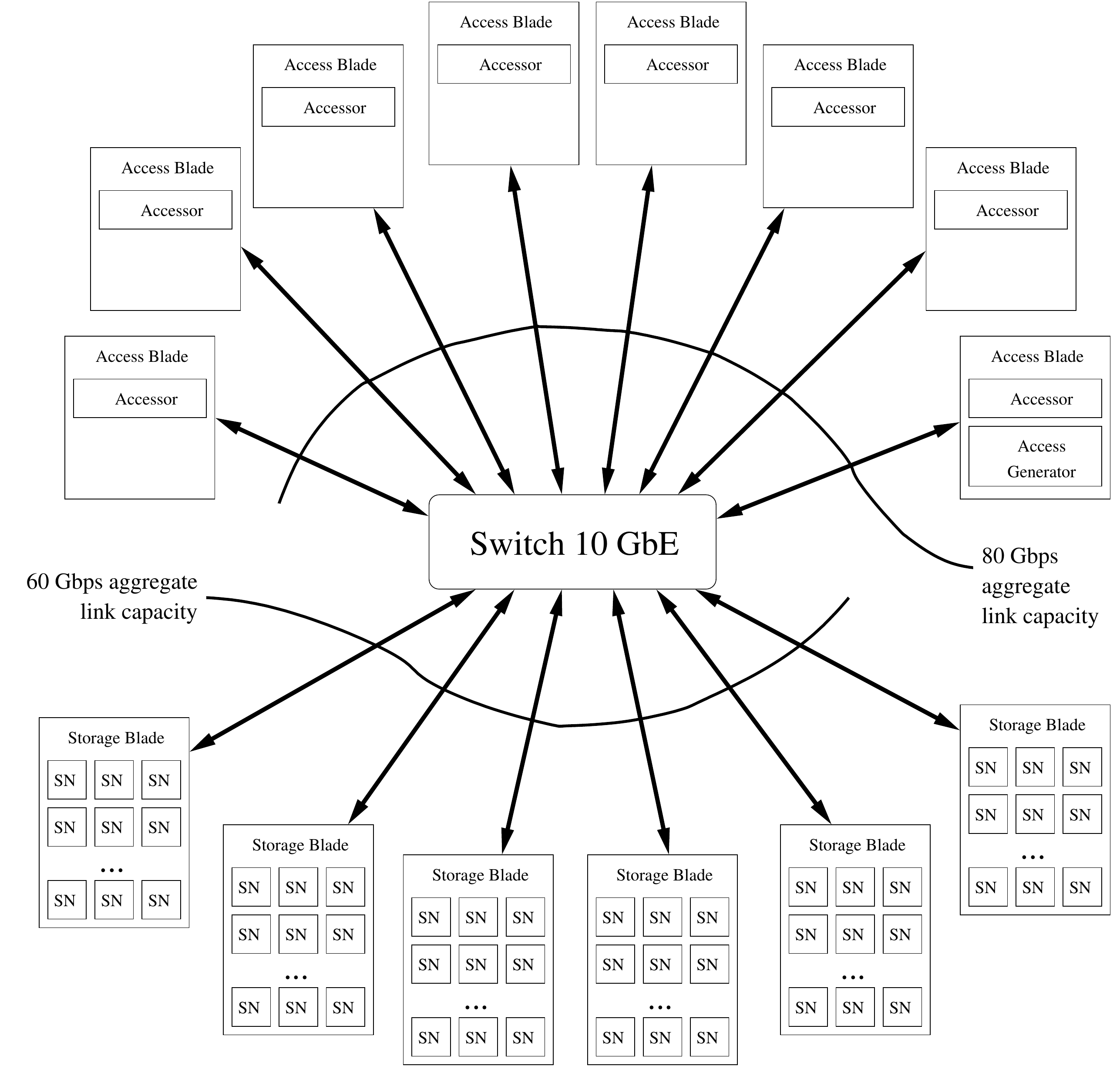}
\end{center}
\caption{Setup used for access performance testing}
\label{fig_access_prototype}
\end{figure}



During access performance tests there are no \nodef s,
all $n$ fragments for each object are available,
and the repair process is disabled.
For all tests,  $67$ storage node instances run on each storage server, 
which emulates a system of $402$ nodes. We operate with a \storeoverhead\ of $33.3\%$, 
a $(402,268,134)$ \liqcode\ is used for the \liqsystem, and a $(9,6,3)$ \tradcode\ is used
for the \tradsystem.  We tested with $10$ MB and $100$ MB user data request sizes.

We run the access prototype in three different configurations.   
The ``Liq'' configuration models \liqsystem\ access of user data. 
For a user data request of size $s$, the accessor requests $k + E$ fragment portions,
each of size $s/k$, from a random subset of distinct storage nodes,
and measures the time until the first $k$ complete responses are received
(any remaining up to $E$ responses are discarded silently by the accessor).
Each fragment portion is of size around $25.5$ KB when $s = 10$ MB,
and around $255$ KB when $s = 100$ MB.
We use $E = 30$, and thus the total size of requested fragment portions
is around $10.75$ MB when $s=10$ MB, and around $107.5$ MB when $s=100$ MB.  
See \cite{Soljanin12} for an example of this access strategy.

The ``SC'' configuration models \tradsystem\ normal access of user data, 
i.e., the requested user data is stored on a storage node that is currently available.
For a user data request of size $s$, the accessor requests one fragment portion of size $s$ 
from a randomly selected storage node,
and measures the time until the complete response is received.
The total size of the requested fragment portion is $10$ MB when $s=10$ MB, 
and $100$ MB when $s=100$ MB.

The ``SC-Deg'' configuration models \tradsystem\ {\em degraded access} of user data,
i.e., the requested user data is stored at a storage node that that is currently unavailable, 
e.g., see~\cite{Dimakis13}, \cite{ErasureCodingvsReplication}.
For a user data request of size $s$, the accessor requests $k$ fragment portions 
of size $s$ from a random subset of distinct storage nodes and measures 
the time until the first $k$ complete responses are received.
The total size of requested fragment portions is $60$ MB when $s=10$ MB, 
and $600$ MB when $s=100$ MB.
Most user data is stored at available storage nodes,
and thus most accesses are normal, not degraded, in operation of a \tradsystem.
Thus, we generate the desired load on the system with
normal accesses to user data, and run degraded accesses at a rate that
adds only a nominal aggregate load to the system, and only the times
for the degraded accesses are measured by the accessors.
Similar to the ``Liq'' configuration, we could have requested $k+1$ fragment portions for the ``SC-Deg'' configuration
to decrease the variation in access times, but at the expense of even more data transfer over the network.

Figures \ref{fig_access10MB} and \ref{fig_access100MB} depict access time
results obtained with the above settings under different system loads. 
The average time for \liqsystem\ accesses is less than for \tradsystem\ normal accesses in most cases, 
except under very light load when they are similar on average.  
Even more striking, the variation in time is substantially smaller for \liqsystem s than for \tradsystem s
under all loads.
\Liqsystem\ accesses are far faster (and consume less network resources) than \tradsystem\ degraded accesses.

Our interpretation of these results is that \liqsystem s spread
load equally, whereas \tradsystem s tend to stress individual
storage nodes unevenly, leading to hotspots.  With \tradsystem s,
requests to heavily loaded nodes result in slower response times.  
Hotspots occur in \tradsystem s when appreciably loaded, 
even when the fragment requests are distributed uniformly
at random.  It should be expected that if the requests are not uniform,
but some data is significantly more popular than other data, 
then response times for \tradsystem s would be even more variable.

\begin{figure}
\centering
\includegraphics[scale=0.6]{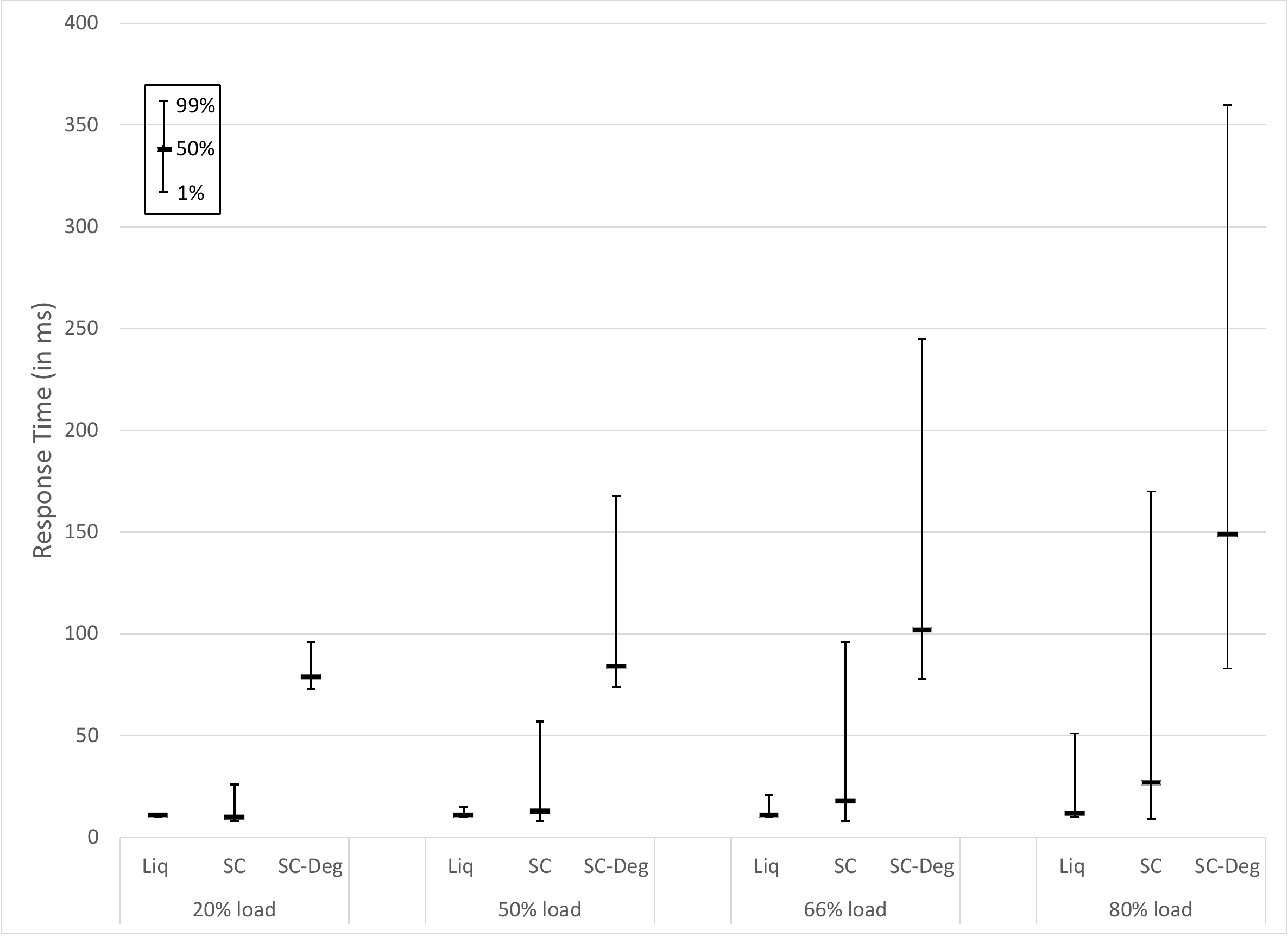}
\caption{Access Results for 10 MB object requests.}
\label{fig_access10MB}
\end{figure}

\begin{figure}
\centering
\includegraphics[scale=0.6]{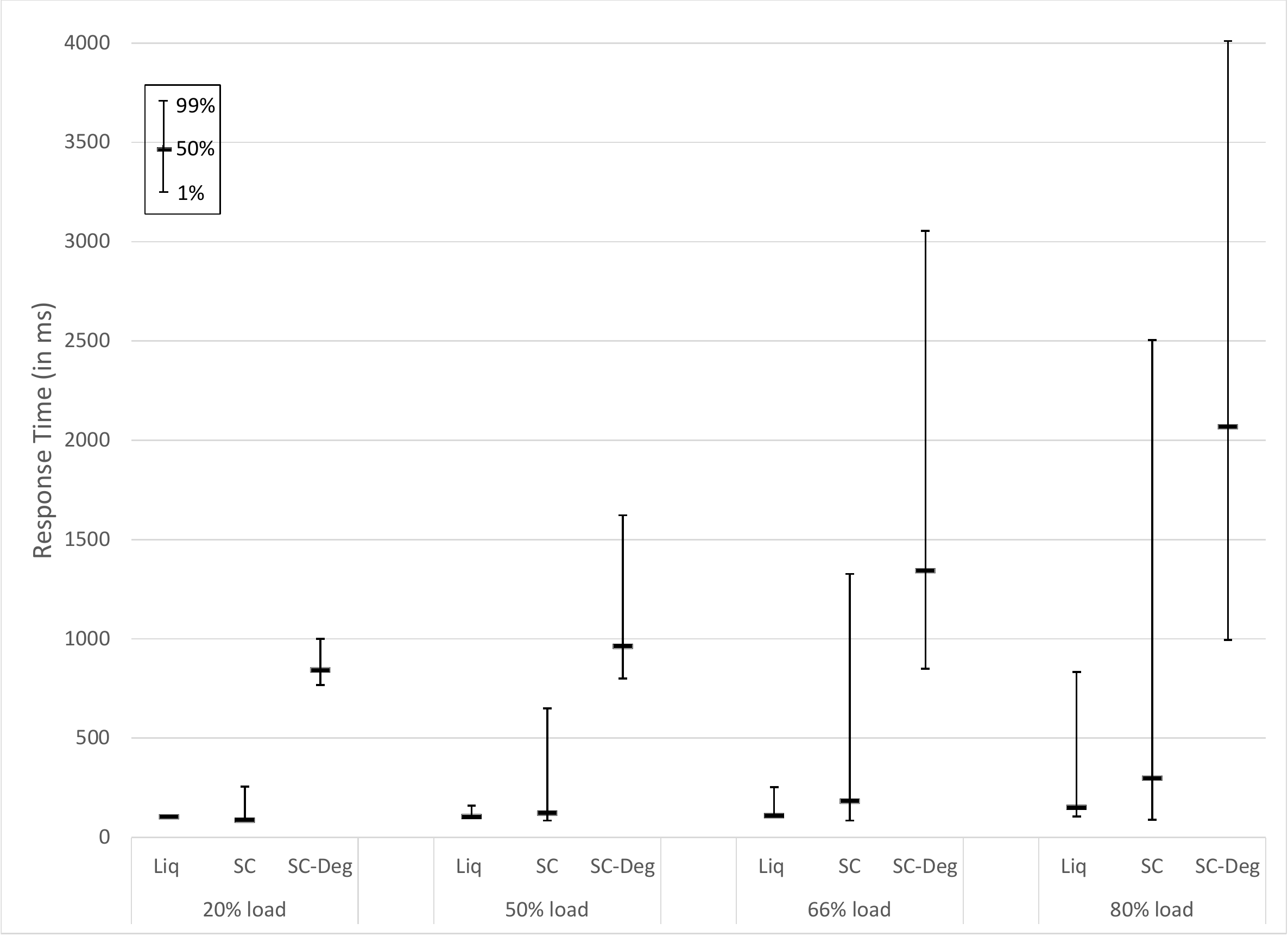}
\caption{Access Results for 100 MB object requests.}
\label{fig_access100MB}
\end{figure}

\subsection{Repair tests and verification of the simulator}


We used the prototype repair process to verify our \liqsystem\ repair simulator.  
We started from realistic sets of parameters, and then sped them
up by a large factor (e.g., 1 million), in such a way that we could run
the prototype repair process and observe \objloss\ in a realistic amount of time.
We ran the prototype repair process in such a configuration, 
where \nodef s were generated artificially by software.  
The resulting measured statistics, in particular the \mttl, were compared to an 
equivalent run of the \liqsystem\ repair simulator.
This allowed us to verify that the \liqsystem\ repair simulator statistics matched 
those of the prototype repair process and also matched our analytical predictions.

\section{Implementation considerations}
\label {operational sec}

Designing a practical \liqsystem{} poses some challenges that are
different from those for \tradsystem{}s.  In the following sections we
describe some of the issues, and how they can be addressed.

\subsection{Example Architecture}

As discussed in Section~\ref{liqsystem overview sec},
Fig.~\ref{fig_lds_sys_arch} shows an example of a \liqsystem\ architecture.
This simplified figure omits components not directly related 
to \srcdata\ storage, such as system management components or access control units.

Concurrent capacity to access and store objects scales with the number of access proxy servers,
while \srcdata\ storage capacity can be increased by adding more storage nodes to the system,
and thus these capacities scale independently with this architecture.  
Storage nodes (storage servers) can be extremely simple, as their only function is 
to store and provide access to fragments, which allows them to be simple and inexpensive.
The \liqrepair\ process is independent of other components; thus
reliability is handled by a dedicated system.

%
%
%
%

\subsection{Metadata}

The symbol size used by a \liqsystem\ with a \liqorg\
is substantially smaller than the size used by a \tradsystem\ with a \tradorg. 
For a \tradsystem\ each symbol is a fragment, 
whereas for a \liqsystem\ there are many symbols per fragment and, 
as described in Section~\ref{fragment organization sec}, the symbol to fragment mapping can be tracked implicitly.
For both types of systems, each fragment can be stored at a storage node as a file.

The names and locations of each fragment are explicitly tracked for \tradsystem s.
There are many more fragments to track for a \liqsystem\ because of the use of a \liqcode,
and explicit tracking is less appealing.
Instead, a mapping from the \placegroup\ to available nodes can be tracked,
and this together with a unique name for each object
can be used to implicitly derive the names and locations
of fragments for all objects stored at the nodes for \liqsystem s.

\subsection{Network usage}

The network usage of a \liqsystem\ is fairly different from that of a \tradsystem.  
As described in Section~\ref{access tests sec}, 
\liqsystem s use network resources more smoothly than \tradsystem s,
but \liqsystem s must efficiently handle many more requests for smaller data sizes.   
For example, a \tradsystem\ using TCP might open a new TCP
connection to the storage node for each user data request.  
For a \liqsystem, user data requests are split into many smaller fragment requests, 
and opening a new TCP connection for each fragment request is inefficient.  
Instead, each access proxy can maintain permanent TCP connections to each
of the storage nodes, something that modern operating systems can do easily.
It is also important to use protocols that allow pipeline data transfers over
the connections, so that network optimizations such as jumbo packets can be effectively used.  
For example the HTTP/2 protocol \cite{RFC7540} satisfies these properties.

Technologies such as RDMA \cite{RFC5040} can be
used to eliminate processing overhead for small packets of data.
For example, received portions of fragments can be placed autonomously via RDMA
in the memory of the access proxy, and the access proxy can decode or
encode one or more source blocks of data concurrently.

\subsection{Storage medium}

A key strength of \liqsystem s is that user data can be accessed as a stream,
i.e., there is very little startup delay until the first part of a user data request is available,
and the entire user data request is available with minimal delay,
e.g., see Section~\ref{access tests sec}.
Nevertheless, random requests for very small amounts of user data is more challenging for a \liqsystem.  

The storage medium used at the storage nodes is an 
important consideration when deciding the \liqcode\ parameters $(n,k,r)$. 
Hard disk drives (HDDs) efficiently support 
random reads of data blocks of size 500 KB or larger, 
whereas solid state drives (SSDs) efficiently support random reads
of data blocks of size 4 KB or larger.  
We refer to these data block sizes as the {\em basic read size} of the storage medium.
The parameters $(n,k,r)$ should be chosen so that if $s$ is the typical requested user data size 
then $s/k$ is at least the basic read size, which ensures that a requested fragment portion size
is at least the basic read size. 
In many cases this condition is satisfied, e.g., when the requested user data size is generally large,
or when the storage medium is SSD.  

In other cases, more advanced methods are required
to ensure that a \liqsystem\ is efficient.
For example, user data request performance can be improved  
by assigning different size user data to different types of objects:
Smaller user data is assigned to objects that are stored
using smaller values of $k$, and larger 
user data is assigned to objects that are 
stored using  larger values of $k$, where in each case
$s/k$ is at least the basic read size if $s$ is the requested user data size.  
In order to achieve a large \mttl\ for all objects,
the \storeoverhead\ $\frac{r}{n}$ is set larger
for objects stored using smaller values of $k$ 
than for objects stored using larger values of $k$.
Overall, since the bulk of the \srcdata\ is likely to be large user data,
the overall \storeoverhead\ remains reasonable with this approach.

Only a small proportion of user data is frequently accessed in many use cases, 
e.g., in the use case described in \cite{Facebook14}, only about 
10\% of the user data is accessed frequently, and this user data is easy to 
identify based on the amount of time it has been stored in the storage system.
Thus, a caching layer can be used to store and make such user data available
without having to request the user data from the storage system.  
For a \liqsystem, it makes most sense to cache user data that is both 
small and frequently accessed.

Another strategy to minimize requests for small user data is to store
related user data together in objects and coalesce accesses.  
Suppose for example, the storage system stores web content.  
To load a web page, dozens of small files of user data are typically loaded, 
such as HTML content and small images.
Since the same content is accessed whenever such a web page is
displayed, the system can store and access this user data together.

\subsection{RaptorQ software performance}
\label{raptorq sec}

With \tradsystem{}s, user data can usually be accessed directly
without requiring erasure decoding.  
With \liqsystem{}s, each access of user data typically requires erasure decoding.

RaptorQ codes are the \liqcode s of choice for \liqsystem s.
The encoding and decoding complexity of RaptorQ codes is inherently
linear by design \cite{CRaptorQ11}, i.e., for a $(n,k,r)$ RaptorQ code,
the encoding complexity is linear in $n$, and the decoding complexity is
linear in $k$, which makes the use of a RaptorQ code with large values
of $(n,k,r)$ practical.  

Software implementations of RaptorQ codes have been deployed in a number
of applications, including real-time applications that require low
encoding and decoding complexity.  Encode and decode speeds of over 10
Gbps are achieved using a software implementation of RaptorQ running on
one core of a 20 core Intel Xeon CPU ES-2680 server running at 2.8 GHz,
with code parameters $(n,k,r)=(1500,1000,500)$ and 256 byte symbols.
Aggregate encode and decode speeds of around 150 Gbps are achieved using
all 20 cores of the same server, with the same code parameters and
symbol size.

These numbers suggest that RaptorQ decoding would add a fractional extra
cost to the data processing pipeline: It is expected that other
transformations are performed when accessing user data, such as for example
checksumming, decompression and decryption.  Of those MD5 checksums can
be computed on the same machine at a rate of at most about 4 Gbps.  As
for decompression, Google's run-time performance oriented Snappy codec
achieves reportedly about 4 Gbps on a Core i7 processor
\cite{GoogleSnappy}, suggesting that RaptorQ coding is not a bottleneck.

\subsection{RaptorQ hardware performance}

Hardware based implementations have the potential to further cut down on 
power usage by the RaptorQ codes.  Based on a hardware
RaptorQ code chip design by Qualcomm for a different application, our
estimate is that the power usage in an ASIC implementation would be
about 3 Watts per 100 Gbps encoding or decoding speed.

To put this into perspective, we compare this to the power use of an SSD
to access data.  For example, the Samsung SM863a 2 TB enterprise SSD
drive \cite{SamsungSM863aSpec}, achieves a data transfer rate of up to
about 4.1 Gbps and uses about 2.5 W of power when reading.  This
translates to a power use of about  61 W to achieve 100 Gbps.  Thus, the
power needed for the RaptorQ code is about 5\% of the power needed to
read the data off a drive.  If other necessary aspects such as network
transfer, etc, were incorporated as well, the RaptorQ code share drops well below 5\%.

%
%
%

\subsection{RaptorQ decode reliability}
\label{RaptorQnotMDS sec}

Extremely high durability is required for storage systems.  MDS codes
have the property that an object can be recovered from 
any $k$ fragments for the object.  
RaptorQ codes are essentially MDS, but not literally MDS:
it is not necessarily the case that an object can be recovered from any $k$ fragments.
We present simulation results showing the probability of \objloss\ due to 
RaptorQ codes not being MDS can be essentially ignored.

Consider a \liqsystem\ using a $(1200,1000,200)$ RaptorQ
code, where the target $w = 50$, i.e., the number of available fragments
for any object is always at least $k+w= 1050$.  If $1/\lambdap=3$
years, then on average there are $400$ \nodef s per year.
We group \nodef s together into consecutive batches of
$w-4 = 46$, and consider testing for decoding using the set of $k+w$
EFIs for available fragments at the beginning of the batch minus the $w-4 = 46$ EFIs 
for fragments lost within the batch, and thus we decode from 
$k+w-(w-4) = k+4 = 1004$ EFIs.  Note that if decoding is possible from this set of $1004$ EFIs
then decoding is possible for all objects at each point during the batch.  

As shown in Table~\ref{raptorq 1000 table}, there were no RaptorQ
decoding failures in an experiment where $\expp{11}$ random sets of
$1004$ EFIs were tested for decodability.  Since there are on average
$400$ \nodef s per year and the batch size is $46$,
there will be on average $\frac{400}{46} < 10$ such batches per year,
and thus  these $\expp{11}$ decoding tests cover $10$ billion years of
simulation time.  Thus there would be no decoding failures for RaptorQ
within $10$ billion years, which is a thousand times a target \mttl\ of
$10$ million years.   

\begin{table}
\begin{center}
\begin{tabu} to \textwidth{| l || l| l| l |l| l |}
\hline
Number of EFIs used to decode &
	1000 &			1001 &
	1002 &			1003 &
	1004 \\ \hline 
Fraction of failed decodings &
	$4.9 \cdot \expm{3}$ &	$2.4 \cdot \expm{5}$ &
	$1.3 \cdot \expm{7}$ &	$1.1 \cdot \expm{9}$ &
	$0$ \\
\hline
\end{tabu}
\end{center}
\caption{Decode failures in $10^{11}$ decodings of a $(1200,1000,200)$ RaptorQ
code}
\label{raptorq 1000 table}
\end{table}

\section{\Rrepairrate\ regulation}
\label{self-regulating sec}

The simple Poisson process model of \nodef s discussed in
Section~\ref{node failures sec} is useful to benchmark the reliability
of a distributed storage system.  However, \nodef s in real systems are
not necessarily faithful to a simple statistical model.

Detailed statistical assumptions of the \nodef\ process do not
significantly affect \tradrepair\ decisions for a \tradsystem, i.e., the
actions taken by the \tradrepair\ process are largely event driven and
immediate.  For example, repairs are executed as soon as fragments are
lost due to \nodef s, with no dependency on assumptions on future node
failures.  The anticipated \nodef\ rate is only germane to the initial
provisioning of repair resources (e.g., total repair bandwidth).  

This is in sharp contrast to \liqrepair\ for a \liqsystem, where repairs
are postponed to achieve greater repair efficiency.  The degree to which
repairs can be postponed is clearly dependent on some expectation of
future \nodef s.  The tradeoff between repair efficiency and \mttl\
rests on assumptions about the long term \nodef\ rate.  In the
\liqsystem\ with fixed \rrepairrate\ this dependence appears prominently
in the choice of the \rrepairrate.  Lower \rrepairrate\ results in
greater repair efficiency but higher probability of \objloss.  This is
clearly evident in the (tight) \mttl\ estimate for the fixed rate system
derived in Appendix~\ref{liquid mttl analysis sec}. 

For example, consider a \liqsystem\ using a $(402,268,134)$ \liqcode.
If the \nodef\ rate $\lambdap$ is $1/3$ per year and the \rrepairrate\ is
set so that $\lambdap \cdot T = 0.21$ then the achieved \mttl\
(estimate) is  $3.6\cdot \expp{9}$ years.  If instead $\lambdap$ is
$10\%$ higher then the \mttl\ is  $1.0\cdot \expp{7}$ years, and if
$\lambdap$ is $20\%$ higher then the \mttl\ is $8.7\cdot \expp{4}$
years.  Thus, the \mttl\ of a fixed \rrepairrate\ design is sensitive to
error in knowledge of $\lambdap$.  If there is uncertainty in the
\nodef\ rate then either some repair efficiency must be sacrificed or a
reduction in \mttl\ must be tolerated for a fixed \rrepairrate\ design. 

Fortunately, \liqsystem s can be fitted with an adjustment algorithm
that smoothly and automatically adjusts the \rrepairrate\ in response to
\nodef\ conditions so that accurate a priori knowledge of \nodef\ rates
is not required.  Suppose for example that the \nodef\ rate increases at
some point in time, possibly due to component quality issues.  This will
result in a relative increase of the number of missing fragments among
objects in the repair queue.  Moreover, over time the increase in
\nodef\ rate can be detected by a \nodef\ rate estimator.  Both of these
occurrences can signal to the \liqrepair\ process a need to increase the
\rrepairrate.  Properly designed, a regulated \rrepairrate\ process can
achieve significantly larger \mttl than a fixed \rrepairrate\ process
running at the same average \rrepairrate, at the cost of some variation
in \rrepairrate.

Even in the case where the \nodef\ process is a Poisson process of known
rate, regulation of the \rrepairrate\ can bring significant benefit.  To
develop a good regulation method it is helpful to consider in more
detail the behavior of a fixed \rrepairrate\ process.  We assume a
\liqsystem\ with $n$ nodes and a \nodef\ rate $\lambdap$.  We assume a
fixed \rrepairrate\ that we characterize using $\tTot,$ the time
required to repair all objects in the system.  We work in the large
number of objects limit where the size of a single object is a
negligible fraction of all \srcdata.  Objects are repaired in a fixed
periodic sequence.  We view all objects as residing in a repair queue
and we use $x$ to denote the relative position in the queue.  Thus, an
object at position $x=1$ is under repair and the object in position
$x=0$ has just been repaired.  An object in position $x$ will be
repaired after a time $(1-x)\cdot \tTot$ and was repaired a time $x
\cdot\tTot$ in the past.

For an object currently in position $x$ let $f(x) \cdot n$ denote
the number of its fragments that have been erased due to node
failures.\footnote{Note here that while $f(x)\cdot n$ is integer valued
we generally consider $f(x)$ as arbitrarily valued on $[0,1].$ We will
make frequent use of this slight abuse of notation.} Assume $0\le x\le y
\le1$ and $0\le s \le t \le 1,$ and that the \rrepairrate\ is
characterized by $\tTot.$ The number of erased fragments $f(y) \cdot n$
the object will have when it reaches position $y$ is given by the
transition probability
\begin{equation}\label{eqn:transitionProbability}
\Prob{f(y) \cdot n = t \cdot n \mid f(x) \cdot n = s \cdot n}
=
B\left((1-t)\cdot n,(t-s)\cdot n,e^{-\lambdap \cdot \tTot \cdot
(y-x)}\right)
\end{equation}
where $B(n, m, q)$ is the probability mass at $m$ of a binomially
distributed random variable with parameters $(n, q)$.  That is,
\[
 B(n,m,q) = \binom{n}{m} \cdot q^{n-m} \cdot \bar{q}^m\,
\]
where we have introduced the notation $\bar{q}=1-q.$ Applying this to
the particular case with initial condition $f(0)\cdot n=0$ we have
\(
\Prob{f(x)\cdot n = m} = B(n,m,e^{-\lambdap \cdot\tTot \cdot x})\,.
\)
Using Stirling's approximation $\ln(n!) = n\cdot \ln(n) + O(\ln(n))$ we
can write 
\(
\ln(\Prob{f(x)\cdot n = f}) = -n \cdot E(f,\lambdap \cdot \tTot \cdot x) + O(\ln(n)) 
\)
where
\[
E(f,\lambdap \cdot \tTot \cdot x) = - H(f) + \bar{f} \cdot \lambdap \cdot 
\tTot \cdot x - f \cdot \ln(1-e^{ - \lambdap \cdot \tTot \cdot x})
\]
where $H(f) = - f \cdot {\rm log}_2 (f) -\bar{f} \cdot {\rm
log}_2(\bar{f})$ is the entropy function.  As a function of $f,$  the
exponent $E(f,\lambdap \cdot \tTot \cdot x)$ is minimized at  $f =
1-e^{ -\lambdap \cdot \tTot \cdot x},$ which is also the expected value
of $f(x),$ being the solution to the differential equation 
\[
\frac{d(1-f(x))}{dx}  = -\lambdap \cdot \tTot \cdot (1-f(x))\,
\]
which governs the expected value.  Thus, in the large system limit the
fraction of erased fragments as function of repair queue position
concentrates around the function $1-e^{ - \lambdap \cdot \tTot \cdot
x}.$  In particular we identify $n \cdot (1-e^{ - \lambdap \cdot \tTot
})$ as the number of fragments that are typically repaired.  We can
interpret the quantity $f_\rmtar := 1-e^{ - \lambdap \cdot \tTot }$ as a
{\em target} fraction of repair fragments.  Note that  $f_\rmtar$ and
$\lambdap \cdot \tTot$ are in a one-to-one relationship so, given the
\nodef\ rate $\lambdap,$ we can view $f_\rmtar$ as a specification of
the \rrepairrate.  This perspective on specification of system behavior
is a convenient one for consideration of regulated \rrepairrate\
systems.

To gain a further understanding of the fixed \rrepairrate\ system
behavior it is helpful to consider typical behavior under \objloss.  We
assume that \objloss\ occurs when an object has more than $r =\beta
\cdot n$ fragments erased and consider the number of erased fragments as
a function of queue position for such an object.  Thus, we consider the
distribution 
\(
\Prob{f(x) \cdot n \mid f(1) \cdot n > \beta \cdot n},
\)
which, using Bayes rule, can be written as
\begin{align*}
& \Prob{f(x) \cdot n = f \cdot n \mid f(1) \cdot n > \beta \cdot n} \\
= & \frac{\Prob{f(1) \cdot n > \beta \cdot n \mid f(x) \cdot n = f \cdot n} \cdot \Prob{f(x) \cdot n = f \cdot n)}}
{\Prob{f(1) \cdot n > \beta \cdot n}}
\end{align*}
Using \eqref{eqn:transitionProbability} and Stirling's approximation as
above, we find that the in the large $n$ limit the solution concentrates
around $f(1) \cdot n = \beta \cdot n,$ and, more generally,
\(
f(x) = \frac{\beta}{f_\rmtar} \cdot (1-e^{-\lambdap \cdot T  \cdot x})
\)
which is simply a scaled version of the nominal queue function
$1-e^{-\lambdap \cdot T \cdot x}$. Note that this solution satisfies the
equation
\(
\frac{d(1-f(x))}{dx} = \lambdap \cdot \frac{\beta}{f_\rmtar} \cdot (1-\frac{f_\rmtar}{\beta} \cdot f(x))\,.
\)
Thus, for small $x,$ where $f(x)$ is small, the solution corresponds to
a typical behavior for the system with \nodef\ rate $\lambdap \cdot
\frac{\beta}{f_\rmtar}.$   Another likely contribution to typical
\objloss\ is a skewing of \nodef s  to those nodes which, for the given
object in the given repair cycle, have not previously failed.  This
effect can explain the second factor $ 1-\frac{f_\rmtar}{\beta}\cdot
f(x)$ which replaces  $1-f(x)$ in the equation for the expected value.
Thus, we observe that \objloss\ typically involves a sustained increase
in the observed \nodef\ rate over an entire repair cycle and possible
skewing of the \nodef\ selection.  The fluctuation in \nodef\ rate can
be detected by a \nodef\ rate estimator.  Both the increased \nodef\
rate and the node selection skewing can be detected from the atypical
behavior of the repair queue.  We shall first consider a regulator that
assumes a known fixed \nodef\ rate $\lambdap$ and responds only to the
repair queue, and later extend the design to incorporate estimators of
\nodef\ arrival rates.

Under regulated repair we would expect that if the repair queue state is
typical, i.e., the fraction of erased fragments for an object in
position $x$ is near its nominal value of $1-e^{-\lambdap \cdot \tTot
\cdot x}$ then the \rrepairrate\ should be kept near its nominal value.
If, however, at some points in the repair queue the number of missing
fragments is larger than the nominal amount then the \rrepairrate\
should be increased.  As a general philosophy we adopt the view that
each object in the repair queue takes responsibility for itself to
ensure that the probability of \objloss\ for that object is kept small.
Each object, knowing its position in the queue and knowing its number of
erased fragments, desires a certain system \rrepairrate\ such that, if
adopted, would adequately protect that object against \objloss.  The
repair system will apply the highest of the desired \rrepairrate s thus
satisfying all \rrepairrate\ requests for all the objects.  

The regulator design defines a function $\phi(f,x)$ such that an object
in position $x$ with a fraction $f$ missing fragments requests a
\rrepairrate\ corresponding to a system repair time of
$\lambda^{-1} {\phi(f,x)}.$ Recall that a \rrepairrate\
corresponds to time required to repair all object in the system $\tTot$
and that $\lambda \cdot \tTot$ is the expected number of times a single
node fails during the time it takes to repair all objects.  In this
form, $\phi(f,x)$ expresses a desired \rrepairrate\ in terms of the
desired value for $\lambda \cdot \tTot.$ In general $\phi(f,x)$ will be
monotonically decreasing in $f$ and increasing in $x$.  Under these
assumptions, as well as some assumptions on node failure rate
estimators, it follows that the \rrepairrate\ will always be determined
by certain critical objects in the repair queue.  The critical objects
are those that were at the tail of the queue when a \nodef\ occurred
(see Appendix~\ref{liquid mttl analysis sec} for more detail).  Note
that the fixed \rrepairrate\ case corresponds to setting $\phi$
constant.

As a basis for designing the function $\phi(f,x)$ we adopt the notion
that objects desire to recover certain properties of the nominal
trajectory.  One quantity of particular interest is the probability of
\objloss\ for an object.  Given a position $x$ in the queue and a
fraction $f(x) <\beta$ erased fragments the probability that the object
will experience \objloss\ can be expressed using
\eqref{eqn:transitionProbability}.  For an object on the nominal
trajectory, $f(x) = 1-e^{\lambdap \cdot T \cdot x} = 1 -
(1-f_\rmtar)^x$, this probability is easily seen to be decreasing in
$x$.   Thus, if the object enjoys typical behavior early in the repair
queue then its probability of \objloss\ decreases.  As a design
principle, therefore, we can stipulate that an object in position $x$
with a fraction $f$ of erased fragments will request a system repair
parameter $\lambdap \cdot T,$ as represented by the function
$\phi(f,x),$ such that the probability of \objloss\ for that object
would be the same as an object at that position on the nominal
trajectory.  Under this stipulation, we obtain $\phi(f,x)$ implicitly as
the solution to \[ \sum_{s \cdot n > \beta \cdot n} B(\bar{f}_\rmtar^x
\cdot n,(\bar{f}_\rmtar^x-\bar{s}) \cdot n,\bar{f}_\rmtar^{1-x}) =
\sum_{s \cdot n > \beta \cdot n} B(\bar{f}\cdot n,(\bar{f}-\bar{s})
\cdot n,e^{-\phi(f,x)\cdot (1-x)})\,.  \] Note that in this context the
quantity $\beta \cdot n$ represents a particular threshold that the
regulator attempts avoid reaching and need not necessarily correspond
with the erasure correcting limit of the used erasure code.  For that
reason we introduce a different notation $f_T$ for the target hard
threshold.  In practice one will typically have $f_T \simeq \beta.$

Simpler expressions of the above principle for deriving appropriate
$\phi$ functions can be obtained in a number of ways.  One possibility
is note that the above sums are dominated by the term with $s \cdot
n=\beta \cdot n$ and then, using the Stirling approximation, to equate
the rate functions on both sides.  Another approach is to use a Gaussian
approximation of the above binomial distributions and equate the
integrals that correspond to the above sums.  Under the Gaussian
approximation we obtain the equation \begin{align}\label{eqn:sepGovForm}
\frac{1-e^{-(1-x) \cdot \phi(f,x) }}
{1-e^{-(1-x) \cdot \phi_\rmnom}}
=
\left(\frac{\bar{f}_\rme-\bar{f}_T }{\bar{f}_\rmtar -\bar{f}_T}\right)^2
\cdot \frac{\bar{f}_\rmtar}{\bar{f}_\rme }
\end{align}
where $\phi_\rmnom = -\ln (1-f_\rmtar)$ and $f_\rme$ denotes
the expected fraction of erased fragments that will be repaired for an
object in position $x$ with $f$ erased fragments assuming that the
\rrepairrate\ is subsequently fixed according to $\phi(f,x),$ i.e., we have
$\bar{f}_\rme = \bar{f} e^{-\phi(f,x) \cdot (1-x)}.$ Substituting this
into the above we obtain a quadratic equation for $f_\rme$ whose
solution then determines $\phi$. A good approximation is 
\[
 \frac{1-e^{-(1-x)\cdot\phi(f,x) }}
{1-e^{-(1-x)\cdot \phi_\rmnom}}
\simeq
\frac{\phi(f,x)}{\phi_\rmnom}
\]
under which the level curves of $\phi(f,x)$ correspond exactly to asymptotic
trajectories for the regulated system that can be parameterized by their
implied target fragment repair fraction $f_\rme.$  This implies that an object that
is controlling the repair rate expects on average to hold the repair constant until it is repaired. 

For the simulation results given below we used the Gaussian
approximation approach to obtain $\phi.$ It is not hard to show that
$\phi$ is bounded above (\rrepairrate\ never goes to $0).$ A lower bound
is $0$ (infinite \rrepairrate) which is requested if the $f(x)$ is
sufficiently close to $f_T.$ In practice the \rrepairrate\ will obviously
be limited by available resources, and such a limit can be brought into
the definition of $\phi$ by limiting its minimum value and for
simulations we typically saturate $\phi$ at some postive value, often
specified as fraction of $\phi_\rmnom.$

Recall that $\lambdap^{-1}\phi$ represents desired time to repair all
objects precisely once.  This indicates how to incorporate estimates for
the \nodef\ rate into the operation of the regulator:  one simply
replaces $\lambdap$ in the expression $\lambdap^{-1}\phi$ with a
suitable estimate of $\lambdap$. In general, an estimator of the node
failure rate will be based on past \nodef s . One pertinent quantity
in this context is the appropriate time scale to use to form the
estimate. We may choose to vary the time scale as a function of the
number of erased fragments the object experiences, or other aspects of
the \nodef\ process.  

Let us assume a system with a total of $\nObjects$ objects of equal
size.  Then each object has a position in the queue $x$ with 
$x \cdot \nObjects \in \{ 0,...,\nObjects-1 \}.$  When an object is repaired the
position $x$ of each object is incremented by $1/\nObjects.$ According
to the regulated \rrepairrate\ process the repair time for the next object is
then given by
\[
\frac{\min_{x}  \phi (f(x),x)}{\lambdap \cdot \nObjects}
\]
where $f(x)$ denotes the fraction of erased fragments for the object in
position $x.$  Here we tacitly assume $f(x) < f_T$ for all $x$ but for
analytical purposes we may extend the definition of $\phi$ to all $f$
and $x$ in $[0,1].$ In practice $\phi$ is effectively bounded from below
by the repair speed limitations of the system.  In our simulations we
typically assume a floor for $\phi$ of $\gamma \cdot \phi_\rmnom$ with
$\gamma$ set to a small fraction, e.g. $\gamma=\frac{1}{3}.$ Note
further that we always have 
$\phi(0,0) = \phi_\rmnom = -\ln(1-f_\rmtar)$ 
and so this gives a lower bound on the \rrepairrate\ and we
can impose a ceiling on the function $\phi$ of value $\phi_\rmnom$
without altering system behavior.

The choice of $f_\rmtar$ involves a tradeoff:  lower values of
$f_\rmtar$ lead to higher \rrepairrate s, but also a more reliable system,
and one that is potentially less prone to large \rrepairrate\ variations,
whereas a larger value of $f_\rmtar$ gives higher repair efficiency at
the price of larger \rrepairrate\ fluctuations and a reduced \objloss\ 
safety margin of the system.  For simplicity, all the results that we
present in this section with the repair adjustment algorithm were
obtained by running the algorithm with $f_\rmtar = \frac{2\cdot r}{3}$, which
is a reasonable choice in all but some contrived circumstances.

For certain forms of failure rate estimators it is feasible to compute
lower bounds on \mttl.  The details of the models and the calculations
can be found in Appendix~\ref{liquid mttl analysis sec}.
The bounds can be used to designing system parameters such as $f_T$ and more generally $\phi$
so as to  maximize repair efficiency while ensuring a desired MTTDL.


\subsection{Regulated \rrepairrate\ simulation}
\label{sim reg sec}

Fig.~\ref{fig:rr_fixed_graph} illustrates how the regulated
\rrepairrate\ automatically varies over time when the \nodef\ rate is
$1/3$ years for a \liqsystem\ using a $(402,268,134)$ \liqcode\ with
\storeoverhead\ $33.3\%$, and the first row of
Table~\ref{tab_regualator_examples} shows a comparison between the
regulated \repairrate\ with an average of $\Ravg = 107$ Gbps and a fixed
\rrepairrate\ of $\Rpeak = 104$ Gbps (set so that the \mttl\ is $\approx
\expp{7}$ years).  In this table, the \mttl\ of $\gg \expp{9}$ for the
regulated \rrepairrate\ indicates there were no object losses in
$\expp{9}$ years of simulation (based on estimates using the results
from Appendix~\ref{liquid mttl analysis sec}, the \mttl\ is orders of
magnitude larger than $\expp{9}$ years), and $\Rpeak$ is a hard upper
limit on the used \rrepairrate.  The variability of the regulated
\rrepairrate\ is moderate, e.g., $99.99\%$ of the time the \rrepairrate\
is less than $231$ Gbps.  The second row of
Table~\ref{tab_regualator_examples} shows a comparison using a
$(402,335,67)$ \liqcode\ with \storeoverhead\ $16.7\%$.
\begin{figure}
\centering
\includegraphics[scale=0.5]{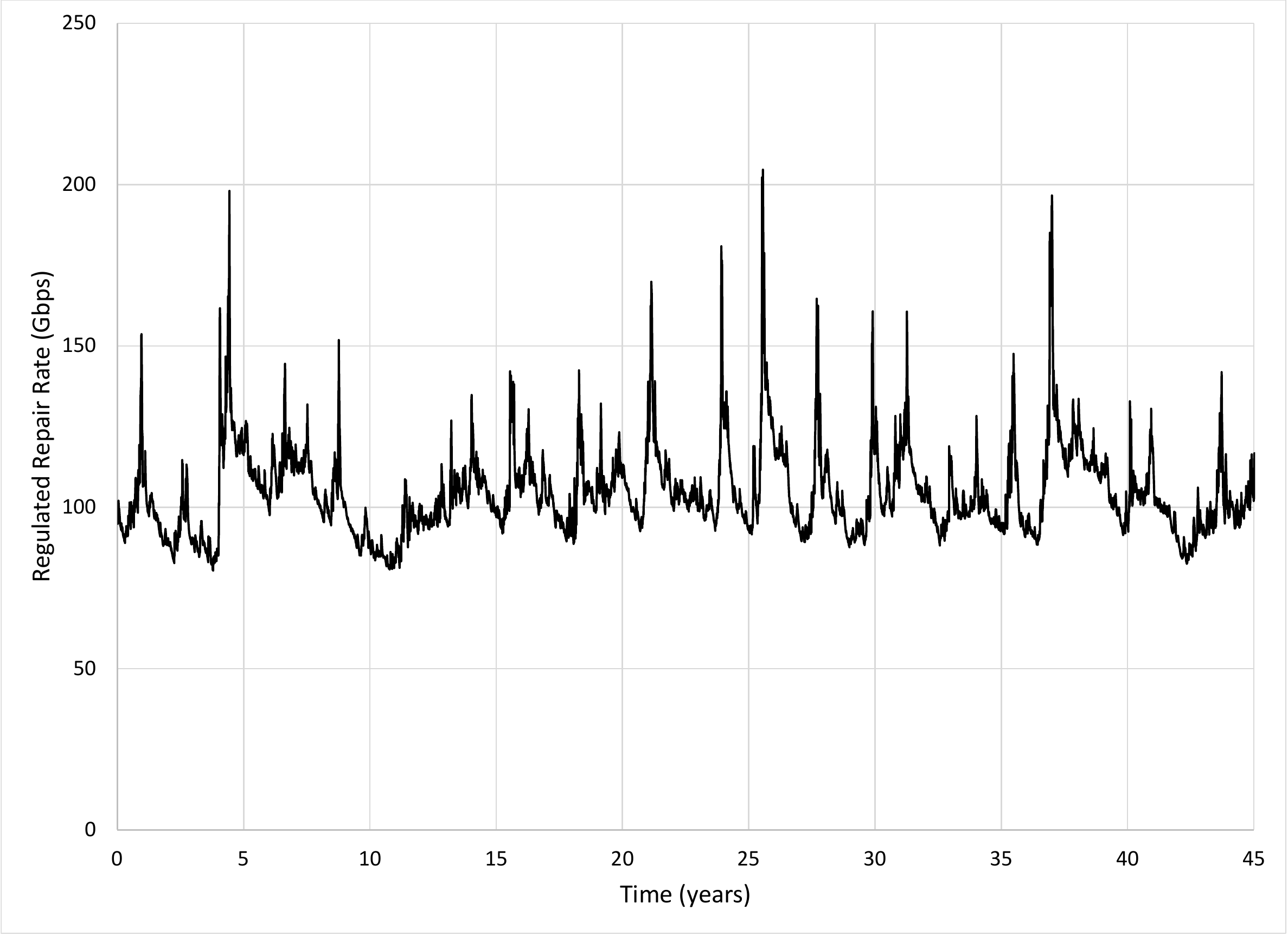}
\caption{Repair regulator for a $(402,268,134)$ \liqsystem\ with fixed $1/\lambdap=3$ years.}
\label{fig:rr_fixed_graph}
\end{figure}

\begin{table}[H]
\resizebox{\textwidth}{!}{
\centering
\begin{tabu}{|c|c|c|c||c|c|c|c|c||c|c|}
\hline
\multicolumn{4}{|c||}{System parameters} &
\multicolumn{5}{|c||}{Regulated \rrepairrate} &
\multicolumn{2}{|c|}{Fixed \rrepairrate} \\
\hline
$n$	& $r$ & $\beta$ & $S$ & 
$\Ravg$	& $\Rnnn$ & $\Rnnnnn$ & $\Rpeak$ & \mttl\ & 
$\Rpeak$ & \mttl\
\\
\hline
402	& 134 &  33.3\% & 1 PB & 
106 Gbps & 154 Gbps & 226 Gbps & 311 Gbps & $\gg \expp{9}$ years & 
104	Gbps & $3.2 \cdot \expp{7}$ years
\\ \hline
402	& 67 &  16.7\% & 1 PB &
298 Gbps	& 513 Gbps & 975 Gbps & 1183 Gbps & $\gg \expp{9}$ years &
394	Gbps & $9.0 \cdot \expp{7}$ years
\\ \hline
\end{tabu}
}
\caption{Comparison of regulated \rrepairrate\ versus fixed \rrepairrate.}
\label{tab_regualator_examples}
\end{table}

Fig.~\ref{fig:rr_spiky_graph_3x} shows the behavior of regulated
\rrepairrate\ when the \nodef\ rate is varied as described in
Section~\ref{simulations varying sec}.  The upper portion of
Fig.~\ref{fig:rr_spiky_graph_3x} displays the average \nodef\ rate
varying over time, and the lower portion of
Fig.~\ref{fig:rr_spiky_graph_3x} displays the corresponding
\rrepairrate\ chosen by the regulator.  This is a stress test of the
regulator, which appropriately adjusts the \rrepairrate\ to the quickly
changing \nodef\ rate to avoid \objloss es while at the same time using
an appropriate amount of bandwidth for the current \nodef\ rate.  There
were no object losses in $\expp{9}$ years of simulation using the
regulator running in the conditions depicted in
Fig.~\ref{fig:rr_spiky_graph_3x}, and based on estimates using the
results from Appendix~\ref{liquid mttl analysis sec} the \mttl\ is
orders of magnitude larger than $\expp{9}$ years.  The results from
Appendix~\ref{liquid mttl analysis sec} show that the regulator is
effective at achieving a large \mttl\ when the \nodef\ rate varies even
more drastically.

A repair adjustment algorithm could additionally incorporate more
advanced aspects.  For example, since the bandwidth consumed by the
\rrepairrate\ is a shared resource, the \rrepairrate\ may also be
adjusted according to other bandwidth usage, e.g., the \rrepairrate\ can
automatically adjust to higher values when object access or storage
rates are low, and conversely adjust to lower values when object access
or storage rates are high.  Furthermore, the \rrepairrate\ could be
regulated to take into account known future events, such as scheduled
upgrades or node replacements.

\begin{figure}
\centering
\includegraphics[scale=0.5]{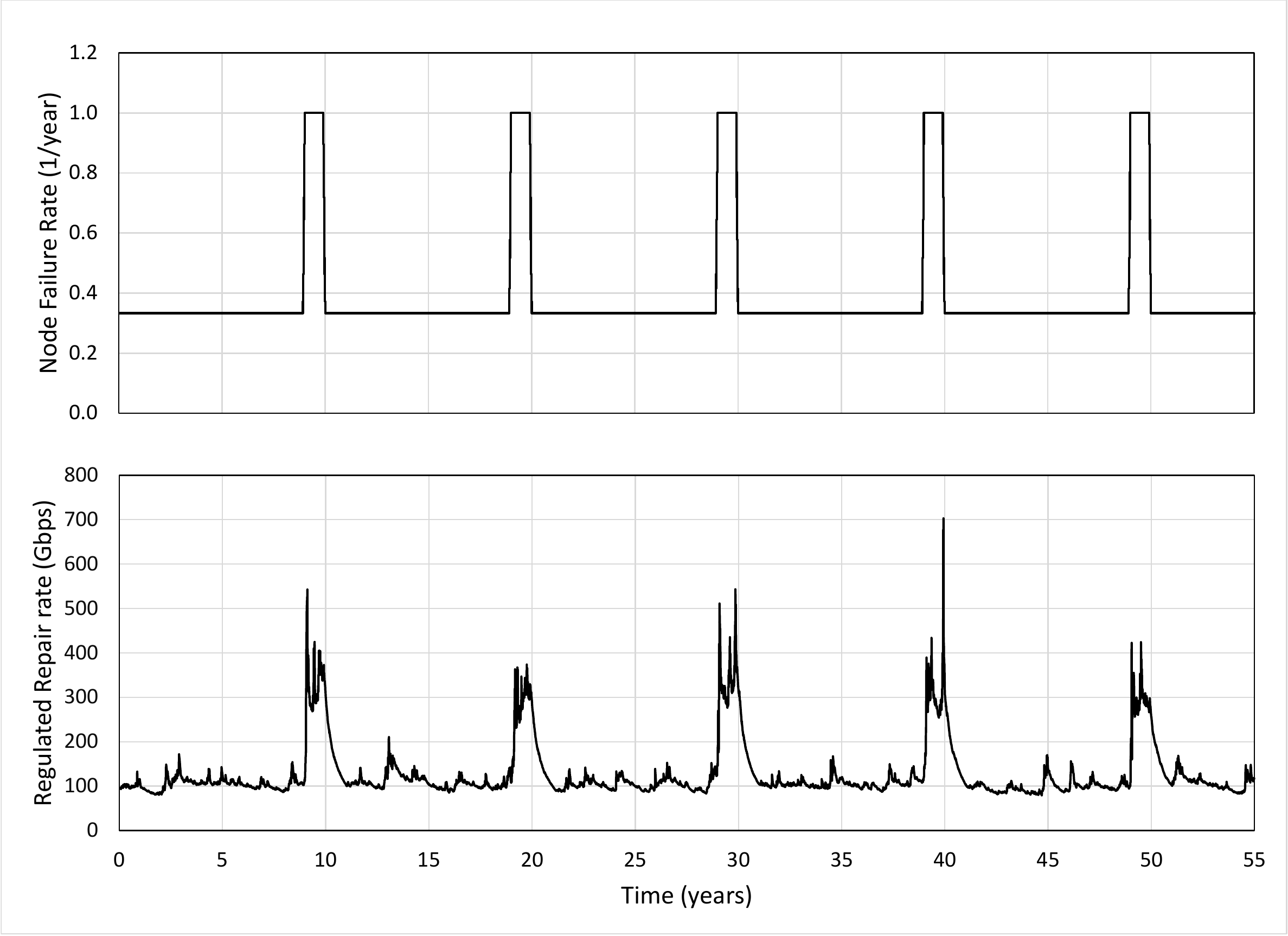}
\caption{Regulated repair for a $(402,268,134)$ \liqsystem\ with \nodef\ rate varying by a factor of three}
\label{fig:rr_spiky_graph_3x}
\end{figure}




\section{Conclusions}

We introduce a new comprehensive approach to distributed storage,
\liqsystem s, which enable flexible and essentially optimal combinations
of storage reliability,  \storeoverhead, repair bandwidth usage, and
access performance.  The key ingredients of a \liqsystem\ are a low
complexity \liqcode, a \liqorg\ and a \liqrepair\ strategy. The repair regulator design we present provides further robustness to \liqsystem s against varying and/or unexpected \nodef . Our repair and access simulations establish that a
\liqsystem\ significantly exceeds performance of \tradsystem s in all
dimensions, and allows superior operating and performance trade-offs
based on specific storage deployment requirements. 

We address the practical aspects to take into consideration while implementing a \liqsystem\ and provide an example architecture. A \liqsystem\ eliminates network and compute hot-spots and the need for urgent repair of failed infrastructure. 

While not detailed in the paper, we believe a \liqsystem\ provides superior geo-distribution flexibility and is applicable to all kinds of Object, File, Scale-out, Hyper-converged, and HDD/SSD architectures/use-cases. There are also optimizations that the team has worked on that involve pre-generation of EFIs and result in better storage efficiencies at the storage nodes.

\section*{Acknowledgment}
Rick Dynarski, Jim Kelleman, and Todd Mizenko developed the prototype software for the
\liqsystem\ described in Section~\ref{implementation sec}, which provides a basic validation 
of the overall \liqsystem\ properties presented herein.  This prototype software
is used to cross validate the simulation results presented in Section~\ref{repair sim sec}.

Christian Foisy helped to produce the RaptorQ performance
results presented in Section~\ref{raptorq sec}, and also helped cross verify
the prototype software for the \liqsystem.  Menucher Menuchehry
and Valerie Wentworth provided high level comments
that greatly improved the presentation.

We would like to thank Rick,  Jim, Todd, Christian, Menucher, and Valerie for their valuable contributions.


%
\appendix

  
\section{\mttl\ analysis of \liqsystem}
\label{liquid mttl analysis sec}
In this section we analyze regulated repair of the \liqsystem~and derive
some bounds on \mttl.  In the case of constant repair rate the bound is
particularly tight.  Although we typically assume that the node failure
process is a time invariant Poisson process the analysis can be applied
to cases where the rate is not known and time varying.  We assume
throughout that the number of nodes $M$ is the same as the code length
$n,$ and we use $n$ to denote this number.

The framework for the analysis ignores the possibility that an object
may be unrepairable.  In effect we suppose that as each object leaves
the repair queue (because it reaches the head of the queue and,
normally, is repaired) a new object enters at the tail.  This notion is
an analytical convenience that takes advantage of the relative ease of
determining or estimating the asymptotic steady state distributions of
this system.  We assume a large number of objects $\nObjects,$ and for
some of the analysis we ignore effects of a finite number of objects and
consider the infinite object limit. This has no significant effect on
the results but simplifies the analysis considerably.

For the analysis we generally assume that the repair rate is updated
upon object repair.  In the limit of infinitely many objects this
amounts to continuous update.  With a finite number of objects the queue
state, as sampled at object repair times, the system forms a Markov
chain with a unique invariant distribution.  In the case where the node
failure rate is known it is a finite state chain.  Of particular
interest is the marginal distribution of the number of erased fragments
for the object at the head of the queue.  In the case of fixed repair
rate we can express this distribution in closed form, and for the
regulated case we will derive a bound on the cumulative distribution.

\subsection{\mttl{} Estimates}

We define the \mttl{} as the expected time given a perfect initial queue
state (all objects have zero erased fragments) until an object achieves
$r+1$ missing fragments prior to completing repair.  For the system with
node failure rate estimation, we also need to specify the initialization
of the estimator.  The system is typically designed to achieve a very
large \mttl.  We will show in the constant repair rate case that we can
obtain a sharp closed-form lower bound on the \mttl.  The bound can be
applied even in the case of a mismatch between an assumed node failure
rate and an actual node failure rate.  In the regulated repair case the
analysis is more difficult, but we obtain an efficiently computable
generalization of the lower bound for the fixed node failure arrival
rate case.  We further extend the analysis to cases where the node
failure rate is estimated based on the node failure realization.

Between node failure events the system evolves in a purely deterministic
fashion; objects simply advance in the repair queue according to the
repair rate policy.  Under a time invariant Poisson model the node
failure interarrival times are independent exponentially distributed
random variables and the selection of the failing node is uniformly
random over all of the nodes.  Thus, the repair can be viewed as a
random process consisting of a sequence of samples of the queue state
taken upon node failure.  The process is generally Markov where the
state space consists of a pair of queue states, that just prior to the
node failure and that after, and, when there is failure rate estimation,
a state representation of the failure rate estimator.  The transition to
the next state has two components.  First, there is the time until the
next node failure, which is an exponential random variable.  Given that
time the queue state advances deterministically from the state after the
current node failure to the state just prior to the next.  The second
component is the selection of the failing node.   That choice determines
which objects in the queue experience additional fragment loss and which
do not.  The update of the state of the failure rate estimators can also
occur at this point.  Because of the nested nature of the process there
will be a point $x$ in the queue such that objects in positions less
than $x$ experience fragment loss and those more than $x$ do not,
because they were already missing the fragment associated to the failing
node.  Note that the queue remains monotonic: the set of erased
fragments for an object in position $x$ is a subset of the erased
fragments for an object in position $y$ whenever $x\le y.$ In the case
of node failure rate estimation, those estimates will be updated upon
node failure.

\subsubsection{The Constant Repair Rate Case}

When the repair rate is held constant the deterministic update portion
of the process is particularly simple.  Let $q$ denote the distribution
of the number of erased fragments for the object at the head of the
repair queue immediately prior to node failure.  Because node failure is
a time invariant Poisson process the distribution $q$ is also the
distribution of the number of erased fragments for the object at the
head of the repair queue at an arbitrary time.  In other words, $q$ is
the the distribution of the number of erased fragments upon repair under
continuous (steady state) operation.  We therefore have 
\[
q(s) = B(n,s,e^{-\lambdap T}).
\]
Let $F(t)$ denote the number of erased fragments for the object at the
head of the repair queue at time $t$. If $t$ is a node failure time let
$F(t-)$ and $F(t+)$ denote value of $F(t)$ immediately prior to and
after the node failure.  The probability of simultaneous node failures
is $0$ so we assume $F(t-) \le F(t+) \le F(t-)+1$. Let $t$ be a node
failure time, then we have
\[
\Prob{F(t+) = r+1 \wedge F(t-) = r} = \frac{n-r}{n} q(r)\,. 
\]
Note that this event $F(t+) = r+1 \wedge F(t-) = r$ corresponds to a data
loss event and that the first time this occurs, starting from the
perfect queue state, is the first data loss event.  We will refer to
this event as a {\em data loss transition}.  Thus, the \mttl{} is by
definition the expected time to the first data loss transition starting
from a perfect queue state.  Applying Kac's theorem to the Markov chain
we have $(\frac{n-r}{n} \cdot q(r))^{-1}$ is the expected number of node
failure events between data loss transitions under steady state
operation.  The expected {\em time} between such events is given by
$(\lambdap \cdot (n-r) \cdot q(r))^{-1}$ since $\lambdap \cdot n$ is the rate
of node failure.  To obtain a rigorous bound on \mttl{} some further
analysis is needed because of differing assumptions on the initial
condition.  The rigorous correction is negligible in regimes of interest
so we use the estimate
\begin{equation}\label{eqn:MTTDLestimate}
\frac{1}{\lambdap \cdot (n-r) \cdot q(r)} \lesssim \mttl
\end{equation}
as a basis for choosing repair rates ($T$) as a function of \mttl.

Starting at time $0$ let $T_S$ denote the expected value of the first
data loss transition time where the initial queue state is distributed
according to the the steady state distribution conditioned on $F(0) \le
r.$ Let us also introduce the notation $q(>r) = \sum_{s > r} q(s).$

\begin{lemma}\label{lem:MTTDLsandwich}
Assume a constant repair rate with system repair time $T.$ Then we have
the following bound
\begin{align*}
 \frac{1}{\lambdap \cdot (n-r) \cdot q(r)} - \frac{T}{1-q(>r)} 
 \le
   \mttl
\end{align*}
\end{lemma}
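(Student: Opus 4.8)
\medskip
\noindent\textbf{Proof strategy (proposal).}
The plan is to prove the bound in two stages: first that $\mttl\ge T_S,$ and then that $T_S\ge \frac{1}{\lambdap\cdot(n-r)\cdot q(r)}-\frac{T}{1-q(>r)}.$ Write $\mu=\lambdap\cdot(n-r)\cdot q(r),$ so that $1/\mu$ is exactly the quantity on the left of \eqref{eqn:MTTDLestimate}, namely the steady-state expected time between data loss transitions obtained from Kac's theorem.

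\emph{Stage one: $\mttl\ge T_S.$} I would put the natural partial order on queue states, declaring $\mathbf{a}\preceq\mathbf{b}$ when, position by position in the repair queue, the erased-fragment set under $\mathbf{a}$ is contained in that under $\mathbf{b};$ the perfect state is the minimum. Couple two copies of the system started from $\mathbf{a}\preceq\mathbf{b}$ by feeding them the same node failures and the same deterministic repair schedule (objects advance at the fixed rate $1/T$). A short induction over node-failure and repair events shows $\mathbf{a}_t\preceq\mathbf{b}_t$ for all $t,$ hence $F^{\mathbf{a}}(t)\le F^{\mathbf{b}}(t).$ For any starting state with $F(0)\le r,$ the first data loss transition is the same event as the first time $F$ exceeds $r;$ so the coupling shows that this time is stochastically larger from the perfect state than from any state in the support of the stationary law conditioned on $F(0)\le r.$ Taking expectations gives $\mttl\ge T_S.$

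\emph{Stage two: lower bound on $T_S.$} Work in the stationary regime. By Kac's theorem applied to the node-failure-indexed chain, the data loss transitions form a stationary point process of rate $\mu,$ so the Palm-mean time between consecutive transitions is $1/\mu.$ Decompose each inter-transition cycle into a \emph{recovery segment}, on which $F>r$ and the object(s) sitting at level $r+1$ are carried to the head and cycled out, followed by a \emph{good segment}, on which $F\le r;$ the recovery segment ends within one full pass of the repair queue, i.e.\ within time $T.$ Now $T_S$ is precisely the expected forward time to the next transition from a uniformly random instant conditioned on $F\le r,$ equivalently the expected residual length of the good segment containing that instant. Combining the Palm identities that good segments arrive at rate $\mu$ and occupy a time-fraction $1-q(>r)$ with the bound ``recovery segment $\le T$,'' an inversion (Palm) computation yields $T_S\ge 1/\mu-T/(1-q(>r)).$ Chaining the two stages proves the lemma.

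\emph{Where the difficulty lies.} The hard part is stage two, because data loss transitions \emph{cluster}: a sustained upward fluctuation of the node-failure rate over one repair cycle can push a whole block of consecutive queue objects past the threshold, so transitions come in bursts and the cycle decomposition is only approximately regenerative (the post-transition state is not a fixed point). Making the ``recovery segment $\le T$'' step and the $1-q(>r)$ normalization exact, and — crucially — replacing a naive Jensen estimate of the second moment of the good-segment length by one strong enough not to lose a spurious factor of two, is the real content; this is the ``further analysis \ldots because of differing assumptions on the initial condition'' referred to just before \eqref{eqn:MTTDLestimate}. Stage one is a routine monotone coupling, whose one subtlety is the remark that ``first data loss transition'' and ``first exceedance of $r$'' agree only for starting states below the threshold — which is exactly why the comparison is routed through $T_S$ rather than an arbitrary stationary start.
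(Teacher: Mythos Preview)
Your Stage one is fine and matches the paper: fixing the node-failure realization, the perfect initial queue dominates every state with $F(0)\le r$, so $T_S\le \mttl$.

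Your Stage two, however, has a real gap, and the paper's route is both different and simpler. First, the claim ``the recovery segment ends within one full pass of the repair queue, i.e.\ within time $T$'' is false. After a data loss transition at time $t_0$ the head has $r+1$ erasures, but $F(t)=\FDis(t-T,t)$ can remain above $r$ for arbitrarily long if enough \emph{new} failures land in $(t_0,t_0+T]$; nothing about the nesting bounds this deterministically by $T$. Second, even granting the right decomposition $E[G]+E[R]=1/\mu$ and $E[G]=(1-q(>r))/\mu$, your identification $T_S=E[G^2]/(2E[G])$ via the inspection paradox runs straight into the factor-of-two loss you flag: Jensen gives only $T_S\ge E[G]/2=(1-q(>r))/(2\mu)$, which is far weaker than $1/\mu-T/(1-q(>r))$ in the regime of interest (where $1/\mu\gg T$). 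You note this as ``the real content'' but offer no mechanism to recover the lost factor, and in fact the second-moment route cannot do it without substantially more structural input about $G$.

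The paper avoids both problems by exploiting a regeneration structure you do not use: at the deterministic times $kT$, $k\ge 1$, the \emph{entire} queue state depends only on failures in $[(k-1)T,kT]$, so the states at $T,2T,3T,\ldots$ are i.i.d.\ samples of the stationary law and in particular $F(kT)$ are i.i.d.\ with distribution $q$. Define $k^*$ as the first $k\ge 1$ with $F(kT)\le r$; then $k^*$ is geometric with success probability $1-q(>r)$, $E[k^*T]=T/(1-q(>r))$, and the queue state at $k^*T$ is exactly stationary conditioned on $F\le r$. Hence from \emph{any} initial state, the expected time to the next data loss transition is at most $T/(1-q(>r))+T_S$. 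Applying this with the initial state taken just after a transition (where, by Kac, the expected time to the next transition equals $1/(\lambdap\cdot(n-r)\cdot q(r))$) gives
\[
\frac{1}{\lambdap\cdot(n-r)\cdot q(r)} \;\le\; \frac{T}{1-q(>r)} + T_S,
\]
which together with $T_S\le\mttl$ is the lemma. No Palm calculus, no second moments, and no bound on the recovery segment itself are needed---the geometric ``wait until a good multiple of $T$'' replaces all of that in one stroke.
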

\begin{proof}
We prove the Lemma by establishing the following two inequalities
\begin{align*}
T_S \le \mttl &\le \frac{1}{1-q(>r)} T+ T_S
\\
 \frac{1}{\lambdap \cdot (n-r) \cdot q(r)} & \le \frac{1}{1-q(>r)} T + T_S\,.
\end{align*}

Consider a fixed node failure process realization.  Among all initial
queue conditions with $F(0)\le r$ the perfect queue will have a maximal
time to data loss.
Thus, $T_S \le \mttl.$

Assume the system is initialized at time $0$ with an arbitrary initial
queue state and let us consider the expected time until the next data
loss transition.  Note that $k \cdot T$ is time when the object initially at
the tail of the queue is repaired for the $k$-th time.  Let $k^*$ denote
the smallest $k>0$ where upon repair that object has at most $r$ erased
fragments.  Then the queue state at time $t={k^*} \cdot T$ is the steady state
distribution conditioned on the object at the head of the queue having
at most $r$ erased fragments, i.e., conditioned on $F(t) \le r.$ Hence,
the expected time to the next data loss transition can be upper bounded
by  $\expectation ({k^*} \cdot T) + T_S.$ Now, the probability that $k^* = k$
is $(1-q(>r)) \cdot q(>r)^{k-1}$ hence
\(
  \expectation (k^* \cdot T) \le \sum_{k>1} (1-q(>r)) \cdot q(>r)^{k-1}
  \cdot k \cdot T = \frac{T}{1-q(>r)}\,.
\)
It follows that $\mttl \le \frac{T}{1-q(>r)} + T_S$ and that
the expected time between data loss transitions, which is
$(\lambda \cdot (n-r) \cdot q(r))^{-1},$ is also upper bounded by
$\frac{T}{1-q(>r)} + T_S.$
\end{proof}

\subsection{Regulated Repair Rate}

For the regulated repair rate case the memory introduced into the system
by the selection of the maximum requested repair rate makes it quite
difficult to analyze.  In particular the rate of advancement of the
repair queue depends on the entire state of the queue, so there is no
convenient way to reduce the size of the Markov chain state space, as
was effectively done in the fixed repair rate case.  The addition of
node failure rate estimation further complicates the problem.  It turns
out, however, that by approximating queue advancement with a greedy form
of operation we can obtain a related lower bound on the \mttl.  When
applied to the constant repair rate case the bound is smaller than the
one above by a factor of $(1-f_\rmtar).$

Although we shall consider the case of known node failure arrival rate,
much of the analysis is common with the case including node failure
estimation.  Thus, we will first develop some general results that admit
node failure rate estimation.

We introduce the notation $\FDis(s,t)\,$ to denote the number of {\em
distinct} node failures in the interval $(s,t).$  As before, if $s$ is a
node failure time we use $s-$ and $s+$ to denote times infinitisemally
before or after the node failure respectively.  For convenience we
assume that $\phi(f,x)$ is bounded below by a positive constant, is
defined for all $f$ and $x$ and is non-increasing in $f$ and
non-decreasing in $x.$ For computation we may also assume that
$\phi(f,x)$ is bounded above by the nominal value $\lambdap \cdot T$ so that,
in the known node failure rate case, $T$ is an upper bound on the time
between repairs for a given object.  Recall that since $\lambdap \cdot T =
\phi(0,0)$ we always have $\inf_{x\in [0,1]}\phi(f(x),x) \le \lambdap \cdot T$
so this restriction does not affect the operation of the system.  In the
case with node failure rate estimation we will typically assume some
minimum postive value for the estimated node failure rate so that there
will be some finite time $T$ during which a complete system repair is
guaranteed.

When we admit failure rate estimation we suppose that each object has
it's own failure rate estimate.  That estimate $\hat{\lambda}$ figures
in the requested repair rate through the implied requested system repair
time of $\hat{\lambda}^{-1} \cdot \phi(f,x).$  For an object that entered the
queue at time $s$ we write its node failure rate estimate at time $t$ as
$ \hat{\lambda}(s,t).$ The estimate is allowed to depend on past node
failures.  It can depend on $s$ only through the partition of node
failures into `past' and `future' implied by time $s.$ Consequently, all
objects that enter the queue between the same pair of successive node
failures use the same value for $\hat{\lambda}.$ In other words, as a
function of $s,$ the estimate $\hat{\lambda}(s,t)$ is piecewise constant
in $s$ with discontinuity points only at node failure times.  We remark
that this can be relaxed to having $\hat{\lambda}(s,t)$ non-decreasing
in $s$ between node failure points.  In practice the node failure rate
estimate used will generally be a function of relatively recent node
failure history which can be locally stored.  Note that the estimator
for an object may depend on its number of erased fragments.  

We call an object in the repair queue a {\em critical} object if it was
at the tail of the queue at a node failure point.  Letting $s(x,t)$
denote the queue entry time of the object in position $x$ of the queue
at time $t,$ an object is critical if $s(x,t)$ is a node failure time.
To be more precise, a critical object is an object that was the last
object completely repaired prior to a node failure.   Such an object has
an erased fragment while it is still at the tail of the queue.

Let $\chi(t)$ denote the set of critical object queue locations $x \in
[0,1]$ at time $t.$ Let $f(x,t) =\FDis(s(x,t),t)/n$ and let
$\hat{\lambda}_x(t)$ denote $\hat{\lambda}(s(x,t),t).$ Under the current
assumptions the repair regulator need only track the critical objects in
the system.
\begin{lemma}
The repair rate is always determined by a critical object, i.e., 
\[
\inf_{x\in[0,1]} \hat{\lambda}^{-1}_x(t) \cdot \phi(f(x,t),x)
=
\min_{x\in \chi(t)} \hat{\lambda}^{-1}_x(t) \cdot \phi(f(x,t),x)\,.
\]
\end{lemma}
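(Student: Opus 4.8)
The plan is to show that the infimum over all queue positions $x\in[0,1]$ of the requested repair times $\hat\lambda_x^{-1}(t)\cdot\phi(f(x,t),x)$ is attained (and indeed always realized) at a critical object. The key observation is that between consecutive critical objects in the queue, the relevant data are favorable: as one moves \emph{backwards} from a critical object toward the head of the queue (i.e., to smaller $x$), both the number of erased fragments $f(x,t)\cdot n = \FDis(s(x,t),t)$ is non-increasing and the estimator $\hat\lambda_x(t)=\hat\lambda(s(x,t),t)$ is constant (by the piecewise-constant-in-$s$ assumption, since no node failure time lies strictly between that critical object's entry time and the entry times of the objects just ahead of it in the queue). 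Combined with the monotonicity assumptions on $\phi$ — non-increasing in $f$, non-decreasing in $x$ — this forces the requested repair time to be non-decreasing as $x$ decreases away from a critical object. Hence on each such sub-interval the minimum requested repair time is attained at its right endpoint, which is exactly the next critical object.

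Concretely, first I would fix $t$ and enumerate the critical positions $\chi(t)=\{x_1<x_2<\cdots<x_m\}$ (together with the head position $x=0$ and tail position $x=1$ as needed). Then I would consider an arbitrary $x$ with $x_{i-1}\le x< x_i$ (taking $x_0=0$), and the critical object at $x_i$. Since no object entering the queue in $(x_{i-1},x_i]$ was at the tail at a node-failure time other than the one that created the object at $x_i$, the interval of entry times $\{s(x',t):x'\in(x_{i-1},x_i]\}$ contains exactly one node failure time, namely $s(x_i,t)$, at its left end. Therefore for $x'\in(x_{i-1},x_i]$ we have $\hat\lambda_{x'}(t)=\hat\lambda_{x_i}(t)$ (the estimator is constant between node-failure points in $s$, or at worst non-decreasing, which only helps), and $\FDis(s(x',t),t)\le\FDis(s(x_i,t),t)$, i.e.\ $f(x',t)\le f(x_i,t)$. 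Monotonicity of $\phi$ then gives
\[
\hat\lambda_{x'}^{-1}(t)\cdot\phi(f(x',t),x')\ \ge\ \hat\lambda_{x_i}^{-1}(t)\cdot\phi(f(x_i,t),x_i),
\]
since decreasing $x'$ below $x_i$ cannot increase $\phi$ and decreasing $f$ cannot decrease $\phi$, and $\hat\lambda_{x'}^{-1}$ is unchanged (or decreased). Taking the infimum over $x'\in(x_{i-1},x_i]$ shows it equals the value at $x_i$, and taking the minimum over $i$ yields the claimed identity. I would also note that the infimum over $[0,1]$ is in fact a minimum because $\chi(t)$ is finite (only finitely many node failures have occurred by time $t$), so the right-hand side is a genuine minimum over a finite set.

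The main subtlety — what I would be most careful about — is the bookkeeping at the boundaries of the sub-intervals and the precise handling of the estimator's $s$-dependence: one must argue that objects strictly between two consecutive critical objects share the \emph{same} estimator value and a \emph{no-larger} erased-fragment count than the critical object immediately ahead of them. This rests on the structural fact (stated in the text) that a critical object is precisely one that carried an erased fragment while still at the tail, so that any object entering after it but before the next critical object entered the queue \emph{after} that erasure and hence has seen no more distinct failures, and entered between the same pair of node-failure times so it uses the same $\hat\lambda$. The monotonicity of $\phi$ and the (weak) monotonicity of $\hat\lambda(s,t)$ in $s$ then do the rest; no hard estimate is needed, only a clean case analysis of the queue structure.
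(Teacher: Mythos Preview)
Your overall strategy---partition the queue by critical objects and compare each non-critical object to a critical one---is exactly right, but you have associated each non-critical object to the \emph{wrong} critical object, and this breaks the monotonicity argument. You compare an object at position $x'\in(x_{i-1},x_i)$ to the critical object at $x_i$ (the one with \emph{larger} $x$). For that pair you correctly have $f(x',t)\le f(x_i,t)$ and $x'<x_i$. But then the two monotonicity assumptions on $\phi$ pull in \emph{opposite} directions: $\phi$ non-increasing in $f$ gives $\phi(f(x'),x')\ge\phi(f(x_i),x')$, while $\phi$ non-decreasing in $x$ gives $\phi(f(x_i),x')\le\phi(f(x_i),x_i)$. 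These two inequalities do not combine to yield $\phi(f(x'),x')\ge\phi(f(x_i),x_i)$; for a generic $\phi$ the inequality can fail. Your own justification (``decreasing $x'$ below $x_i$ cannot increase $\phi$'') actually points the wrong way for the bound you want. The estimator claim is also shaky in your grouping: the entry time $s(x_i,t)$ is itself a node-failure time, so $\hat\lambda(s,\cdot)$ may jump there, and you cannot assume $\hat\lambda_{x'}=\hat\lambda_{x_i}$.

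The fix---and this is what the paper does---is to group the other way: compare the objects in $[x_{i-1},x_i)$ to the critical object at $x_{i-1}$ (the one with \emph{smallest} $x$ in that block). All of these objects entered the queue between the \emph{same} pair of successive node failures, so they have \emph{identical} $f$ and identical $\hat\lambda$; only $x$ varies, and the critical object has the minimal $x$ in the block. Then the single monotonicity ``$\phi$ non-decreasing in $x$'' immediately gives $\hat\lambda_{x'}^{-1}\phi(f(x',t),x')\ge\hat\lambda_{x_{i-1}}^{-1}\phi(f(x_{i-1},t),x_{i-1})$, and the result follows.
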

\begin{proof}
Of all objects that entered the queue between the same pair of
successive node failures, the critical object has the minimal value of
$x$ and all such objects have the same value of $\hat{\lambda}$ and $f.$
Since $\phi(f,x)$ is non-decreasing in $x$ the result follows.
\end{proof}


To obtain a bound on the \mttl{} we will adopt a greedy approximation in
which we effectively assume that objects control the repair rate during
their time in the repair queue.  To this end, for any time $s$ set
$y(s,s)=0$ and for $t\ge s$ let us define $y(s,t)$ as the solution to
\[
\frac{d}{dt} y(s,t) = \hat{\lambda}(s,t) \cdot (\phi(\FDis(s,t),y(s,t)))^{-1}
\]
which is well defined as long as $y(s,t) \le 1.$  Further, let $\tau(s)$
be the minimal $t$ such that $y(s,t)=1.$ An interpretation of $y$ is
that it is the position in the queue of an object that was at the tail
of the queue at time $s$ and is designated to be the object that
determines the system repair rate as long as it remains in the queue.
Let $x(s,t)$ denote the actual position of the object at time $t$ given
that it is at the tail of the queue at time $s.$
\begin{lemma}\label{lem:regbnd}
For any $s$ we have $x(s,t) \ge y(s,t).$
\end{lemma}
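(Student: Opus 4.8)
The plan is to reduce the lemma to a one–dimensional ODE comparison between the true queue position and the greedy surrogate $y$. First I would write down the actual law of motion of the repair queue in the infinite–object limit. The whole queue advances rigidly, and since the system always adopts the shortest requested system repair time, every object's position obeys
\[
\frac{d}{dt}x(s,t)=\Bigl(\min_{x'\in[0,1]}\hat\lambda^{-1}_{x'}(t)\cdot\phi\bigl(f(x',t),x'\bigr)\Bigr)^{-1},
\]
with initial condition $x(s,s)=0$: an object that is at the tail at time $s$ has just been repaired (hence zero erased fragments) and, in the cyclic bookkeeping of Section~\ref{liquid mttl analysis sec}, has queue–entry time $s$.

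Second, I would specialize the minimum to the tracked object itself. Its entry time is $s$, so its failure–rate estimate at time $t$ is exactly $\hat\lambda(s,t)$ (i.e.\ $\hat\lambda_{x(s,t)}(t)=\hat\lambda(s,t)$), and, having left repair with a full complement of fragments, its number of erased fragments at time $t$ equals the number of distinct node failures in $(s,t)$, i.e.\ $f(x(s,t),t)=\FDis(s,t)/n$. Since the term $x'=x(s,t)$ is an admissible choice in the minimum, it is an upper bound for it, and we obtain the differential inequality
\[
\frac{d}{dt}x(s,t)\ \ge\ \hat\lambda(s,t)\cdot\bigl(\phi\bigl(\FDis(s,t),x(s,t)\bigr)\bigr)^{-1}.
\]
By construction $y(s,\cdot)$ satisfies the corresponding \emph{equality} with the same initial value $y(s,s)=0$, so $x(s,\cdot)$ is a supersolution and $y(s,\cdot)$ a solution of one and the same scalar ODE.

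Third, I would run the standard first–crossing comparison. Write $g(t,z)=\hat\lambda(s,t)\cdot(\phi(\FDis(s,t),z))^{-1}$. Because $\phi$ is positive and non-decreasing in its second argument, $g(t,\cdot)$ is non-increasing; because $\FDis(s,\cdot)$ is piecewise constant with jumps only at node–failure times, $g$ is piecewise continuous in $t$, while $x(s,\cdot)$ and $y(s,\cdot)$ are continuous (indeed piecewise $C^1$) throughout. If $x(s,t)\ge y(s,t)$ were to fail, let $t_1$ be the first time the two meet with $x$ about to fall below $y$; then, comparing right derivatives, $\tfrac{d}{dt}x(s,t_1{+})\ge g(t_1,x(s,t_1))=g(t_1,y(s,t_1))=\tfrac{d}{dt}y(s,t_1{+})$, contradicting the crossing. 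Hence $x(s,t)\ge y(s,t)$ for all $t\in[s,\tau(s)]$ (and the inequality only persists, trivially, if $x$ reaches $1$ and the object is repaired first).

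The main obstacle is the regularity bookkeeping in the last step: $g$ is only monotone (not Lipschitz) in $z$ and only piecewise continuous in $t$, and $y$ need not be the unique solution of its defining ODE, so the crossing argument has to be phrased with Dini derivatives (or by quoting a Carath\'eodory–type comparison theorem) and checked across node–failure times, where one must also fix the $s{\pm}$ conventions for $\FDis$ and for an object that sits at the tail exactly when a node fails. Fortunately the monotonicity of $g$ in $z$ is in the favorable direction, which is precisely what makes the crossing argument go through without any uniqueness hypothesis on $y$; the rest is routine.
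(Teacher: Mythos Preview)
Your proposal is correct and follows essentially the same route as the paper: specialize the system-wide infimum to the tracked object to obtain $\frac{d}{dt}x(s,t)\ge \hat\lambda(s,t)\,\phi^{-1}(\FDis(s,t),x(s,t))$, then compare with the defining equation for $y$ using the monotonicity of $\phi$ in its second argument via a crossing/mean-value argument. The paper phrases the last step slightly differently (it shows $\frac{d}{dt}(x-y)\ge 0$ whenever $x-y<0$ and invokes the mean value theorem), while you frame it as a first-crossing argument and are more explicit about the Dini-derivative and piecewise-continuity bookkeeping that the paper glosses over; but the substance is identical.
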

\begin{proof}
For each fixed $f$ the function $\phi(f,z)$ is non-increasing in $z \in [0,1].$
Thus if $0 \le x \le y \le 1$ then
\(
 (\phi(f,x))^{-1} - (\phi(f,y))^{-1} \ge 0\,.
 \)
Let us fix $s$ and assume $0 \le y(s,t) , x(s,t) \le 1.$  Then we have
\begin{align*}
  \frac{d}{dt} (x(s,t)-y(s,t))
    &= \bigl(\inf_{z\in [0,1)} \hat{\lambda}^{-1}_z(t) \cdot \phi(f(z,t),z)\bigr)^{-1}
     - \bigl(\hat{\lambda}^{-1}(s,t) \cdot \phi(\FDis(s,t),y(s,t))\bigr)^{-1} \\
    &\ge \hat{\lambda}^{-1}(s,t) \Bigl(\bigl(\phi(\FDis(s,t),x(s,t))\bigr)^{-1}
           - \bigl(\phi(\FDis(s,t),y(s,t))\bigr)^{-1}\Bigr)
\end{align*}
Thus, $x(s,t) - y(s,t) < 0$ implies $ \frac{d}{dt} (x(s,t)-y(s,t)) \ge
0$ and since $x(s,s) = y(s,s) = 0$ we see that $x(t,s) - y(t,s) < 0$
cannot occur by the mean value theorem.
\end{proof}

For data loss to occur there must be node failure times $s\le t$ such
that both $\FDis(s-,t+) \ge r+1$ and 
\(
 x(s-,t)<1.
 \footnote{Here we may interpret $x(s-,t)$ as the limit of $x(s',t)$ as
 $s'$ approaches $s$ from below.}
\)
In this event data loss has occured to the critical object that was at
the tail of the queue immediately prior to time $s.$ By Lemma
\ref{lem:regbnd} this implies that $\FDis(s-,\tau(s)) \ge r+1.$ The
condition $\FDis(s-,\tau(s)) \ge r+1$ denotes a {\em greedy data loss}
where the object arriving at the tail of the queue at time $s$ greedily
controls the repair rate and suffers data loss.  We refer to the node
failure at time $s$ as a greedy data loss node failure.  Note that the
greedy data loss actually occurs after time $s.$

\subsubsection{The Regulated Repair Case with Fixed Failure Rate}

We now assume that the estimated repair rate $\hat{\lambda}$ is constant
and may or may not be equal to the actual node failure rate $\lambdap.$
In this case there is no ambiguity in the intialization of
$\hat{\lambda}$ and the \mttl{} can be lower bounded by the expected
value of $s$ where $s$ is the time of the first greedy data loss node
failure.

Assume $s$ is a node failure time and let $q$ now denote the
distribution of $\FDis(s-,\tau(s-)).$ By Kac's theorem,  $1/q(>r)$ is
the expected number of node failures between greedy data loss node
failures.  Hence $1/(\lambdap \cdot n \cdot q(>r))$ is the expected time between
greedy data loss node failures.

Let $T_G$ denote the expected time time until the first greedy data loss
node failure assuming an unconditioned future node failure process.
Note that the time to a greedy data loss is independent of the initial
queue state.  It follows that $1/(\lambdap \cdot n \cdot q(>r))$ is upper bounded by
$T+T_G.$ This is because if we condition on $0$ being a greedy data loss
node failure time, then the node failure process after time $T$ is
independent of this condition.  Recalling that 
\(
T_G \le \mttl
\)
we obtain
\[
\frac{1}{\lambdap \cdot n \cdot q(>r)} - T \le \mttl
\]

In regimes of interest we will have  $q(>r) \simeq q(r+1)$ and
assuming a fixed repair rate we have
\(
 n \cdot q(r+1) = n \cdot B(n-1,r,e^{-\lambdap \cdot T}) = e^{\lambdap \cdot T}
 \cdot (n-r) \cdot B(n ,r,e^{-\lambdap \cdot T})\,.
\)
If $\hat{\lambda} = \lambdap$ then we see that the resulting \mttl{} bound
is approximately a factor $e^{-\lambdap \cdot T} = 1-f_\rmtar$ worse than the
previous one.

Given $\phi$ it is not difficult to quantize the system to compute the
distribution $q.$ To perform the computation we consider quantizing $x$
with spacing $\delta x.$ \footnote{An analog of Lemma \ref{lem:regbnd}
can be obtained for the quantized system with the addition of a small
correction term, but the impact on the results is negligible for
reasonable size systems.} This is essentially equivalent to assuming a
finite number, $\nObjects = \delta^{-1},$ of objects and we describe the
quantization from that perspective.  Assume that repair rates are
updated upon object repair.  Under regulated repair the repair system
then selects the next object repair time, which is now given by the
minimum of $\delta \cdot \phi(f(j\delta),j \delta)$ for $j=0,\ldots,\nObjects - 1.$ Assume that the next repair time $\Delta$ is given and that we are
given that the object in position $k\cdot\delta$ has a random number of
erased fragments $F$ that is distributed according to $p_k(F).$ Then the
distribution of the number of erased fragments for the object when it
reaches position $(k+1)\cdot\delta$ is given by
\[
p_{k+1}(F) = \sum_{0\le F' \le F} p_k(F') \cdot B(n-F',F-F',e^{-\lambdap
\cdot \Delta})\,.
\]
To compute the distribution $q$ we proceed similarly but, in effect,
track the greedy controlling object as it moves through the repair
queue, assuming that that object is the one determining the repair rate.
Letting $q_k(F)$ denote the distribution of the number of missing
fragments when the object is (greedily) in position $k\delta.$ We have
\[
q_{k+1}(F) = \sum_{0\le F' \le F} q_k(F') \cdot B(n-F',F-F',e^{-\delta
\cdot \phi(F'/n,k \delta)})\,.
\]
Note that the distribution $q$ is simply $q_{\nObjects}(\cdot)$
assuming we initialize with $q_0(F) = \indicator_{F-1}$ (the
distribution with unit mass at $F=1.)$ Using the above expression we see
that the distribution $q$ can be computed with complexity that scales as
$n \cdot \nObjects.$

Note that if the above computation we instead use the initial condition
$\indicator_{F}$ then the resulting distribution $q$ provides an upper
bound on the steady state distribution of the number of erased fragments
upon repair.  More precisely, the resulting $q(>s)$ upper bounds the
probability that more than $s$ fragments of an object are erased when
that object is repaired.  Using this value of $q$ we obtain
$\sum_{s=0}^n q(>s)$ as an upper bound on the expected number of
repaired fragments, which represents the repair efficiency of the
system.

In Fig. \ref{fig:knownBounds} we plot the computed cumulative distribution of the number of missing fragments upon repair
and the computed upper bound.  The five shown bounds are different only in the limit placed on the maximum repair rate which
is indicated as a factor above the nominal rate.  The ``1x'' case corresponds to the fixed repair rate policy.  Already by allowing
a ``2x'' limit a dramatic improvement is observed.  It is observed that the greedy bound tracks the true distribution, as found through simulation
for the ``3x'' limited case, with some shift in the number of missing fragments.  Although somewhat conservative, it is
evident that the computed bound can provide an effective design tool and that rate regulation can provide dramatic improvements in MTTDL.

\begin{figure}
\centering
\includegraphics[scale=0.5]{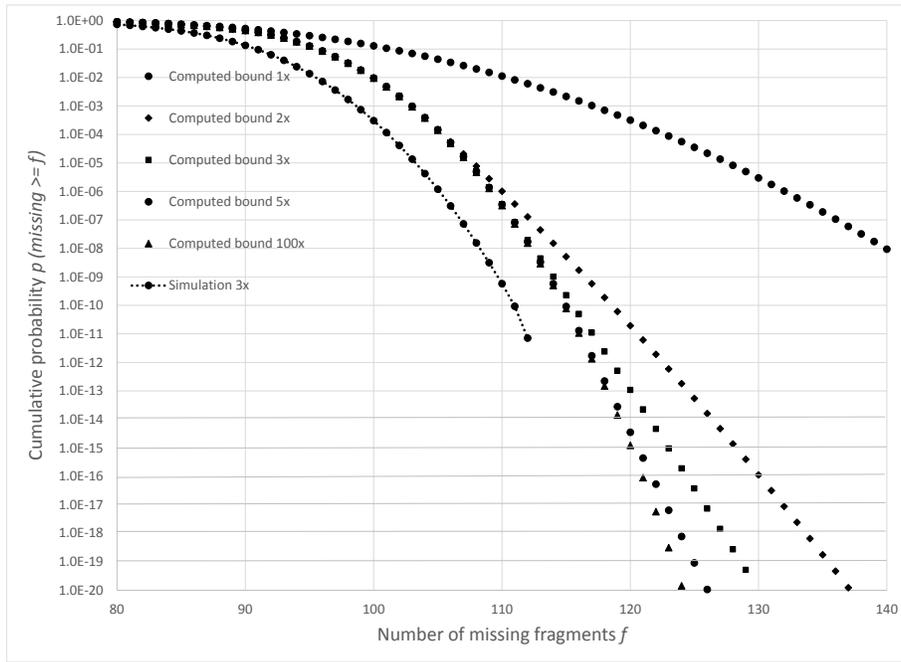}
\caption{Comparison of simulated and computed results for regulation with known node failure arrival rate.  Different curves place
different limits on the peak repair rate, indicated as factors above the nominal rate}
\label{fig:knownBounds}
\end{figure}

Fig. \ref{fig:knownBoundsvary} shows how the cumulative distribution of the number of missing fragments changes with node failure rate mismatch.
We set the repair rate limit to 10 times the nominal rate in increased the node failure rate by factors of 1.25, 1.5, and 2.0 respectively.
The regulator responds by increasing the average repair rate and reaching a new asymptotic queue state equilibrium with respectively 
larger repair targets.  This is indicated in the shifts of the cumulative distribution.
For very large shifts in failure rate this level of adaptation would be inadequate to ensure data durability.
Fortunately node failure rate estimation is easily incorporated into the regulator design, leading to a significantly more robust system.

\begin{figure}
\centering
\includegraphics[scale=0.5]{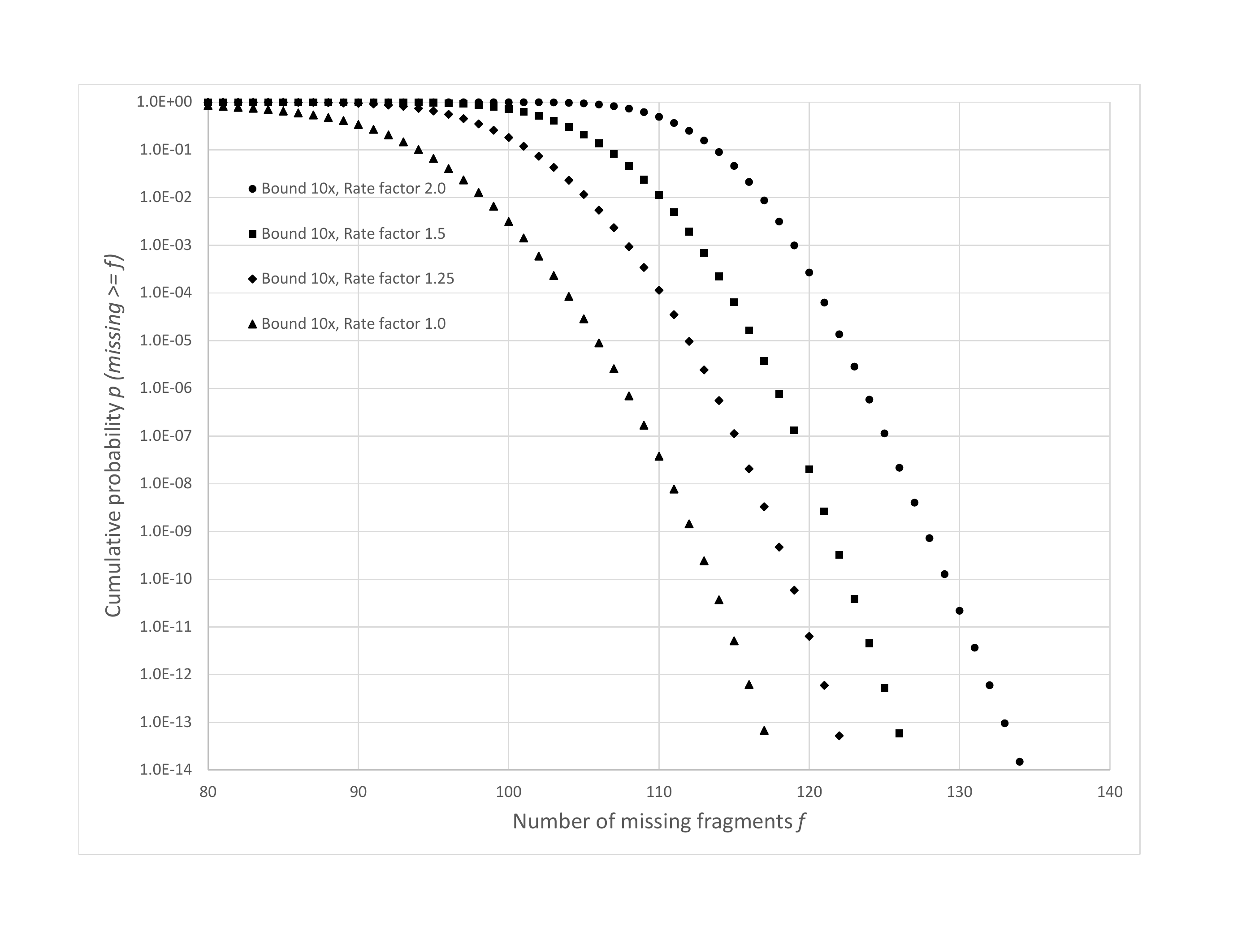}
\caption{Cumulative distribution bounds with mismatch between assumed design node failure rate and actual node failure rate.}
\label{fig:knownBoundsvary}
\end{figure}


\subsubsection{The Regulated Repair Case with Failure Rate Estimation}

The rate regulator as described above is designed to concentrate the
repair queue near its nominal behavior when the actual  node failure
rate matches the design's assumed failure rate.  As demonstrated, the
regulator can compensate for some variation  in the failure rate.  If
the actual node failure rate is persistently different from the design
assumed rate then the regulator typical behavior is shifted relative to
the nominal one.  By using an estimator of the node failure rate the
system can bring the typical behavior back toward the nominal one.
Moreover, by using an estimate of the node failure rate the system can
adapt to much larger variations in the actual node failure rate.

Many forms of estimators may be considered.  We shall consider some
relatively simple forms that are amenable to computation analysis
similar to the fixed arrival rate case.  These estimates are based on
first order filtering of some function of the node failure interarrival
times.  A typical update of the node failure interarrival time will
take the form
\[
\xi(\hat{T}) \leftarrow \alpha \cdot \xi(\hat{T}) + \bar{\alpha} \cdot \xi({T})\,.
\]
where $T$ is a most recent node failure interarrival time and the
functional $\xi$ is an invertible function.  A convenient case to
consider is $\xi(T) = T$ but other potentially interesting forms are
$\xi(T) = \log T$ or $\xi (T) = 1/T,$ or regularized versions of these
functions such as $\xi(T)=\log (1+T).$ For simplicity we will focus on
the case $\xi(T) = T.$ The node failure rate estimate $\hat{\lambda}$ is
simply given as $\hat{T}^{-1}.$ 

Corresponding to our assumptions on failure rate estimators,  we assume
that the value of $\alpha$ used by an object is a function only of the
number of erased fragments for that object and the total number of node
failures that have occurred since that object entered the repair queue,
for which we introduce the notation $\FAll(s,t).$ Let us now consider
how to compute the distribution of the greedy behavior.  In this case we
take a slightly different approach then previously.  We aim to
recursively compute the joint greedy queue position and failure rate
estimate distribution at node failure times.  Since a node failure may
or may not result in a fragment erasure for the greedy object, we must
also jointly consider the distribution of the number of erased
fragments.

Let is fix $s$ and consider $F(t)=\FDis(s,t), F^+(t)=\FAll(s,t),
\hat{\lambda}(s,t), y(s,t),$ for $t\le \tau(s)$  and extend these
definitions for $t \ge \tau(s).$ Specifically, for $t \ge \tau(s)$ set
$y(s,t)=1,$  set $F(t) = F(\tau(s)),$ set
$\hat{\lambda}(s,t)=\hat{\lambda}(s,\tau(s)).$ We will leave
$F^+(t)=\FAll(s,t)$ unchanged, i.e.  $F^+(t)$ continues to count node
failures.  In short we track $F(t),y(s,t),$ and $\hat{\lambda}(s,t)$
until $y(s,t)=1$ which then marks an absorbing state.

We will now briefly discuss how, give  $F^+,$ we can computationally
update a joint distribution of $F,y,\hat{\lambda}$ to the next node
failure point.  Assume that at time $t_0$ the greedy object has an
estimate $ \hat{\lambda}(s,t_0)$ and greedy queue position $y=y(s,t_0)$
and a certain value of $F.$  Then the joint distribution of $y$ and
$\hat{\lambda}(s,t)$ at the next node failure point lies on a curve
parameterized by $t_1,$ the next node failure time.  Note that since we
assume that no additional node failures occur, we have that
$\hat{\lambda}(s,t), F(t),$ and $F^+(t)$ are all constant and we can
drop the dependence on $s$ and $t.$ Note also that $\alpha$ is detemined
and is independent of the next node failure time.  For $t \ge t_0$ such
that $y(s,t) < 1$ we have 
\[
y(s,t) =  y(s,t_0) +  \hat{\lambda} \int_{t_0}^{t} \phi^{-1}(F/n,u) du\,.
\]
Assuming $t=t_1$ is the next node failure the update of $\hat{\lambda}$
is given by
\[
\hat{T}(s,t) = \alpha \cdot \hat{T}(s,t_0) + \bar{\alpha}(t-t_0)
\]
and we obtain the curve $y(s,t_1),\hat{T}(s,t_1)$ for $y(s,t_1)<1.$
Note, however, that at $y(s,t_1)=1$ we actually have the discontinuity 
 \(
\hat{T}(s,t_1) = \hat{T}(s,t_0) 
\)
since in this case the object reached the head of the queue prior to
additional node failures.  To obtain the probability density on this
curve observe that until the time of the next node failure $t_1$ we have
\[
 t-t_0 = \hat{\lambda}^{-1} \int_{y(s,t_0)}^{y(s,t)} \phi(F/n,u) du\,.
\]
The distribution of $t_1-t_0$ is exponential with rate $\lambdap \cdot n$ so
the probability that $t-t_0$ is greater than a quantity $t'> 0$ is given
by $e^{-\lambdap \cdot n \cdot t'}.$ Since $y(s,t)$ is monotonic in $t,$ the
probability that $y(s,t_1)$ is greater than $z$  where $y(s,t_0) \le z <
1$ is given by
\[
\Prob{y(s,t_1)>z} = e^{-\lambdap \cdot n \cdot \hat{\lambda}^{-1} \int_{y(s,t_0)}^{z} \phi(F/n,u) du}
\]
and this gives the distribution along the curve
$y(s,t_1),\hat{T}(s,t_1).$

Let us introduce the notation $Q_{F^+}(F,y,\hat{\lambda})$ to denote the
conditional density of the extended definitions of $F,y,\hat{\lambda}$
conditioned on $F^+.$ Using the above described calculations we can
update this distribution to the distribution at the next node failure
point.  We will use the notation $\tilde{Q}_{F^+}(y,\hat{\lambda})$ to
denote this updated distribution.  Assuming $y<1,$ the next node failure
is incrementing with probability $(1-F/n)$ thus for $y<1$ we have
\[
Q_{F^+}(F,y,\hat{\lambda}) = (F/n) \tilde{Q}_{F^+-1}(F,y,\hat{\lambda}) + (1-(F-1)/n) \tilde{Q}_{F^+-1}(F-1,y,\hat{\lambda})\,.
\]
and for $y = 1$ we have
\[
Q_{F^+}(F,y,\hat{\lambda}) =  \tilde{Q}_{F^+-1}(F,y,\hat{\lambda}) \,.
\]
With suitable quantization of $y$ and $\hat{\lambda}$ the above
procedure can be used to compute the greedy distribution.

To obtain the greedy bound for an arbitrary initial time we initialize
the  distrbution $Q_{0}$ so that the marginal distribution of $y$ is
$\indicator_y$ (and, necessarily, $F=0$ with probability $1.$) Given a
Poisson node failure model and a particular estimator the distribution
$\hat{\lambda}(s,s)$ may be determined.  In general the distribution of
$\hat{\lambda}$ can be set according to the desired assumptions of the
computation.  The limit as $F^+$ tends to $+\infty$ of
$Q_{F^+}(F,1,\hat{\lambda})$ is the greedy approximation to the
distribution upon repair of $F$ and $\hat{\lambda}.$ The probability of
large $F^+$ with $y<1$ decays exponentially in $F^+,$ so this
computation converges quickly in practice.

To obtain a bound on the \mttl{} we need to perform the corresponding
computation for critical objects.  This requires only one additional
initialization step in which the distribution $Q_{1}(F,y,\hat{\lambda})$
is modified to project out $y$ and put all probability mass on $y=0.$
Note that when $F^+ = 1$ we have $F=1$ with probability $1.$ With such a
modification of the initialization process we can obtain a lower bound
on the \mttl{} as before.  We remark, however, that in this case we take
the definition of \mttl{} to include a random initialization of
$\hat{\lambda}$ according to the distribution of $\hat{\lambda}(s,s)$
and the above computation should also use this initial condition.

Fig. \ref{fig:estimateBounds} shows the comulative distribution of the number of missing fragments upon repair $q$ as obtained from simulation
and the computed bound.  The simulation objects used for an estimate of node failure rate the average of the node failure interrarrival times
for the last $155-F$ node failures, where $F$ denotes the number of erased fragments in a object.  Note that the value of
$155$ was chosen slightly larger than $r+1 = 135.$
The computed bound used a slightly different estimator consistent with the above analysis.  The estimator used is a first order filter
but, to best model the corresponding simulation, the time constant of the filter was adjusted as a function of $F$ to match $155-F.$
In both cases the peak repair rate was limited to 3 times the nominal rate as determined by the actual node failure rate.
It can be observed from these the plots that even compared with know node failure arrival rate, the estimated case can give larger MTTDL.
This is intuitive since data loss is generally associated to increases in node failure rate and the increase in the estimated node failure rate case gives rise
to faster repair.  In addition, as shown previously, a regulator with rate estimation can accomodate large swings in node failure rates.

\begin{figure}
\centering
\includegraphics[scale=0.5]{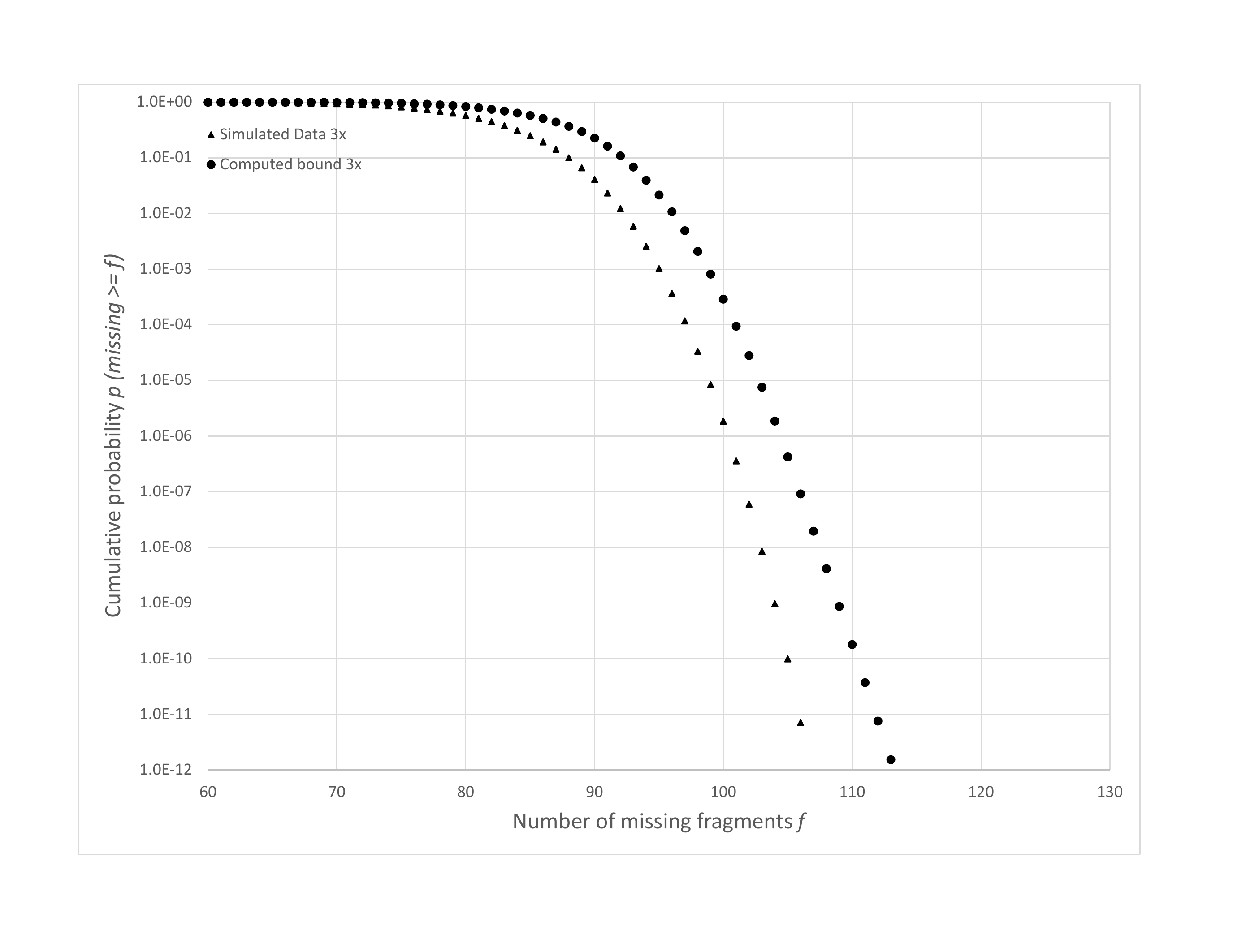}
\caption{Comparison of simulated and computed bound for repair regulation with failure rate estimation and a peak repair rate limit of 3 times the nomimal rate. }
\label{fig:estimateBounds}
\end{figure}

\end{document}